\newtheorem{theorem}{Theorem}
\newtheorem{lemma}{Lemma}
\newtheorem{mydef}{Definition}
\newtheorem{sub-section}{}
\tikzset{cross/.style={cross out, draw=black, minimum size=2*(#1-\pgflinewidth), inner sep=0pt, outer sep=0pt},
cross/.default={1pt}}
\DeclareMathAlphabet{\pazocal}{OMS}{zplm}{m}{n}
\newcommand{\E}{\pazocal{E}}
\newcommand{\F}{\pazocal{F}}
\newcommand{\R}{\pazocal{R}}
\newcommand{\h}{\pazocal{H}}
\newcommand{\C}{\pazocal{C}}
\newcommand{\p}{\pazocal{P}}
\newcommand{\s}{\pazocal{S}}
\newcommand{\Q}{\pazocal{Q}}
\newcommand{\V}{\pazocal{V}}
\newcommand{\U}{\pazocal{U}}
\newcommand{\M}{\pazocal{M}}
\newcommand{\Z}{\pazocal{Z}}
\newcommand{\X}{\pazocal{X}}
\newcommand{\Y}{\pazocal{Y}}
\newcommand{\I}{\pazocal{I}}
\definecolor{yellow-green}{rgb}{0.6, 0.8, 0.2}
\definecolor{ufogreen}{rgb}{0.24, 0.82, 0.44}
\newcommand{\be}{\begin{equation}}
\newcommand{\ee}{\end{equation}}
\newcommand{\ben}{\begin{eqnarray}}
\newcommand{\een}{\end{eqnarray}}
\newcommand{\bes}{\begin{subequations}}
\newcommand{\ees}{\end{subequations}}
\newcommand{\bF}{\begin{figure}}
\newcommand{\eF}{\end{figure}}
\begin{document}

\title{Accrediting outputs of noisy intermediate-scale quantum computing devices}

\author{Samuele Ferracin}
\author{Theodoros Kapourniotis}
\author{Animesh Datta}
\affiliation{Department of Physics, University of Warwick, Coventry CV4 7AL, United Kingdom}

\begin{abstract}
We present an accreditation protocol for the outputs of noisy intermediate-scale quantum devices. By testing entire circuits rather than individual gates, our accreditation protocol can provide an upper-bound on the variation distance between noisy and noiseless probability distribution of the outputs of the target circuit of interest. Our accreditation protocol requires implementing quantum circuits no larger than the target circuit, therefore it is practical in the near term and scalable in the long term. Inspired by trap-based protocols for the verification of quantum computations, our accreditation protocol assumes that single-qubit gates have bounded probability of error. We allow for arbitrary spatial and temporal correlations in the noise affecting state preparation, measurements, single-qubit and two-qubit gates. We describe how to implement our protocol on real-world devices, and we also present a novel cryptographic protocol (which we call ``mesothetic'' protocol) inspired by our accreditation protocol.
\end{abstract}

\date{\today}

\maketitle

\section*{Introduction}
\noindent Quantum computers promise to expand our computing capabilities beyond their current horizons. Several commercial institutions \cite{Google,IBM,Rigetti} are taking steps towards building the first prototypes of quantum computers that can outperform existing supercomputers in certain tasks \cite{AA11,BMS16,GWD17,T18,DHKL18,N&al17}, the so-called ``Noisy Intermediate-Scale Quantum'' (NISQ) computing devices \cite{Preskill18}. As all their internal operations such as state preparations, gates, and measurements are by definition noisy, the outputs of computations implemented on NISQ devices are unreliable. It is thus essential to devise protocols able to accredit these outputs.

A commonly employed approach involves simulating the quantum circuit whose output we wish to accredit, the target circuit, on a classical computer. This is feasible for small circuits, as well as for circuits composed of Clifford gates \cite{G98} and few non-Clifford gates \cite{BG16,B&al17}. Classical simulations have been performed for quantum computations of up to 72 qubits, often exploiting subtle insights into the nature of specific quantum circuits involved \cite{B&al16,Villalonga&al19}. Though practical for the present, classical simulations of quantum circuits are not scalable. Worthwhile quantum computations will not be efficiently simulable on classical computers, hence we must seek for alternative methods.

Another approach employed in experiments consists of individually testing classes of gates present in the target circuit. This is typically undertaken using a family of protocols centered around randomized benchmarking and its extensions \cite{EAZ05,K&al07,DCEL09,MGE11,Erhard&al19,CGFF17}. These protocols allow extraction of the fidelity of gates or cycles of gates and can witness progresses towards fault-tolerant quantum computing \cite{SWS16}. However they rely on assumptions that may be invalid in experiments. In particular, they require the noise to be Markovian and cannot account for temporal correlations \cite{Wallman17,MPF18}. Quantum circuits are more than the sum of their gates, and the noise in the target circuit may exhibit characteristics that cannot be captured by benchmarking its individual gates independently.

This calls for protocols able to test circuits as a whole rather than individual gates. Such protocols have been devised inspired by Interactive Proof Systems \cite{GKK17}. In these protocols (which we call ``cryptographic protocols'') the outputs of the target circuit are verified through an interaction between a trusted verifier and an untrusted prover (Figure \ref{fig:cryptop}). The verifier is typically allowed to possess a noiseless {quantum} device able to prepare \cite{ABE08,FK12,BFKW13,KD17,KW17,FKD17,ABEM17,B15} or measure \cite{HM15,MF16,HKSE17,MK18,TMMMF18} single qubits, however recently a protocol for a fully classical verifier was devised that relies on the widely believed intractability of a computational problem for quantum computers \cite{UM18}. Other protocols for classical verifiers have also been devised, but they require interaction with multiple entangled and non-communicating provers \cite{RUV12,GKW15,M16,FH15,NV16}. Cryptographic protocols show that with minimal assumptions, verification of the outputs of quantum computations of arbitrary size can be done efficiently, in principle.

In practice, implementing cryptographic protocols in experiments remains challenging, especially in the near term. In experiments all the operations are noisy, as in Figure \ref{fig:physicsp}, and the verifier does not possess noiseless quantum devices. Thus, the verifiability of protocols requiring noiseless devices for the verifier is not guaranteed. Moreover, the concept of \textit{scalability}, which is of primary interest in cryptographic protocols, is not equivalent to that of \textit{practicality}, which is essential for experiments. For instance, suppose that the target circuit contains a few hundred qubits and a few hundred gates. Cryptographic protocols require implementing this circuit on a large cluster state containing thousands of qubits and entangling gates \cite{ABE08,FK12,BFKW13,KD17,KW17,FKD17,HM15} 
	or on two spatially-separated devices sharing thousands of copies of Bell states \cite{RUV12,GKW15};
	or appending several teleportation gadgets to the target circuit (one for each $T$-gate in the circuit and six for each Hadamard gate) \cite{B15}; 
		or building Feynman-Kitaev clock states, which require entangling the system with an auxiliary qubit per gate in the target circuit \cite{FH15,MF16,UM18}. These protocols are scalable, as they require a number of additional qubits, gates and measurements growing linearly with the size of the target circuit, yet they remain impractical for NISQ devices.

In this paper we present an accreditation protocol that provides an upper-bound on the variation distance between noisy and noiseless probability distribution of the outputs of a NISQ device, under the assumption that
\begin{itemize}[leftmargin=0.7cm]
\item[{N1}:] Noise in state preparation, entangling gates, and measurements is an arbitrary Completely Positive Trace Preserving (CPTP) map encompassing the whole system and the environment (Equation \ref{eq:stateend});
\item[{N2}:] Noise in single-qubit gates is a CPTP map $\F_{SE}$ of the form $\F_{SE}=(1-r)\I+r\F^{\prime}_{SE}$ with $0\leq r<1$, where $\I_{SE}$ is the identity on system and environment and $\F^{\prime}_{SE}$ is an arbitrary (potentially gate-dependent) CPTP map encompassing the whole system and the environment.
\end{itemize}
Inspired by cryptographic protocols  \cite{ABE08,FK12,BFKW13,B15,KD17,KW17,FKD17,ABEM17,HM15,MF16,HKSE17,MK18,TMMMF18,UM18,RUV12,GKW15,M16,FH15,NV16}, our accreditation protocol is trap-based, meaning that the target circuit being accredited is implemented together with a number $v$ of classically simulable circuits (the  ``trap'' circuits) able to detect all types of noise subject to conditions {N1} and {N2} above.  
	A single run of our protocol requires implementing the target circuit being accredited and $v$ trap circuits.
	It provides a {binary} outcome in which the outputs of the target circuit are either accepted as \textit{correct} (with confidence increasing linearly with $v$) or rejected as \textit{potentially incorrect}. More usefully, consider running our protocol $d$ times, each time with the same target and $v$ potentially different trap circuits. Suppose that the output of the target is accepted as correct by $N_{\textup{acc}}>0$ runs. With confidence $1-2$exp$(-2d\theta^2)$, for each of these accepted outputs our protocol ensures 
\begin{align}
\label{eq:eq1}
\frac{1}{2}\sum_{\overline{s}}\big|p_{\textup{noiseless}}(\overline{s})-p_{\textup{noisy}}(\overline{s})\big|\leq\frac{\varepsilon}{N_{\textup{acc}}/d-\theta}\textup{ ,}
\end{align} 
where $\theta\in(0,N_{\textup{acc}}/d)$ is a tunable parameter that affects both the confidence and the upper-bound, $p_{\textup{noiseless}}(\overline{s})$ and $p_{\textup{noisy}}(\overline{s})$ are the noiseless and noisy probability distributions of the outputs $\{\overline{s}\}$ of the target circuit respectively and $\varepsilon \propto 1/v$. Bounds of this type can fruitfully accredit the outputs of experimental quantum computers as well as underpin attempts at demonstrating and verifying quantum supremacy in sampling experiments \cite{AA11,BMS16,GWD17,T18,DHKL18,N&al17}.

Crucially, our accreditation protocol is both experimentally practical and scalable: all circuits implemented in our protocol are no wider (in the number of qubits) or deeper (in the number of gates) than the circuit we seek to accredit. This makes our protocol more readily implementable on NISQ devices than cryptographic protocols. Moreover, our protocol requires no noiseless quantum device, and it only relies on the assumption that the single-qubit gates suffer bounded (but potentially non-local in space and time and gate-dependent) noise|condition {N2}. This assumption is motivated by the empirical observation that single-qubit gates are the

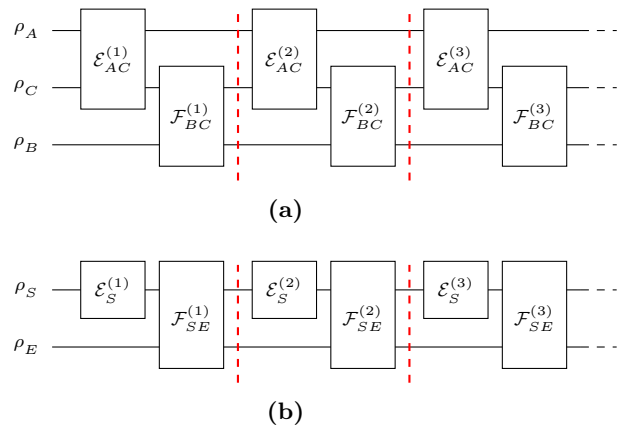
\begin{figure}[H]
\begin{subfigure}{0.5\textwidth}
\centering
\begin{tikzpicture}[scale=0.95, every node/.style={scale=0.99}]

\node at (-1.75,0.0) {\scriptsize $\rho_{_B}$};
\node at (-1.75,0.8) {\scriptsize $\rho_{_C}$};
\node at (-1.75,1.6) {\scriptsize $\rho_{_A}$};

\foreach \x in {3,...,5}
\draw (-1.4,4.0-\x*0.8) -- (6,4.0-\x*0.8);

\foreach \x in {3,...,5}
\draw [dashed] (6,4.0-\x*0.8) -- (6.5,4.0-\x*0.8);

\draw [fill=white] (1,0.0-0.3) rectangle (0.1,0.8+0.3);
\node at (-0.55+1.1,0.4) {\scriptsize $\F_{BC}^{(1)}$};

\draw [fill=white] (-1,0.8-0.3) rectangle (-0.1,1.6+0.3);
\node at (-0.55,0.4+0.8) {\scriptsize $\E_{AC}^{(1)}$};

\draw [dashed, red, thick] (1.2,.0-0.5) -- (1.2,1.9);
\draw [dashed, red, thick] (3.6,.0-0.5) -- (3.6,1.9);

\draw [fill=white] (1+2.4,0.0-0.3) rectangle (0.1+2.4,0.8+0.3);
\node at (-0.55+2.4+1.1,0.4) {\scriptsize $\F_{BC}^{(2)}$};

\draw [fill=white] (1.4,0.8-0.3) rectangle (-0.1+2.4,1.6+0.3);
\node at (-0.55+2.4,0.4+0.8) {\scriptsize $\E_{AC}^{(2)}$};

\draw [fill=white] (1+2.4+2.4,0.0-0.3) rectangle (0.1+2.4+2.4,0.8+0.3);
\node at (-0.55+2.4+2.4+1.1,0.4) {\scriptsize $\F_{BC}^{(3)}$};

\draw [fill=white] (1.4+2.4,0.8-0.3) rectangle (-0.1+2.4+2.4,1.6+0.3);
\node at (-0.55+2.4+2.4,0.4+0.8) {\scriptsize $\E_{AC}^{(3)}$};

\end{tikzpicture}
\caption{\small \textcolor{white}{ciao\\ciao}}
\label{fig:cryptop}
\end{subfigure}
~
\begin{subfigure}{0.5\textwidth}
\centering
\begin{tikzpicture}[scale=0.95, every node/.style={scale=0.99}]

\node at (-1.75,0.0) {\scriptsize $\rho_{_E}$};
\node at (-1.75,0.8) {\scriptsize $\rho_{_S}$};

\foreach \x in {4,...,5}
\draw (-1.4,4.0-\x*0.8) -- (6,4.0-\x*0.8);

\foreach \x in {4,...,5}
\draw [dashed] (6,4.0-\x*0.8) -- (6.5,4.0-\x*0.8);

\draw [fill=white] (1,0.0-0.3) rectangle (0.1,0.8+0.4);
\node at (-0.55+1.1,0.4) {\scriptsize $\F_{SE}^{(1)}$};

\draw [fill=white] (-1,0.8-0.4) rectangle (-0.1,0.8+0.4);
\node at (-0.55,0.8) {\scriptsize $\E_{S}^{(1)}$};

\draw [dashed, red, thick] (1.2,.0-0.5) -- (1.2,1.9-0.75);
\draw [dashed, red, thick] (3.6,.0-0.5) -- (3.6,1.9-0.75);

\draw [fill=white] (1+2.4,0.0-0.3) rectangle (0.1+2.4,0.8+0.4);
\node at (-0.55+2.4+1.1,0.4) {\scriptsize $\F_{SE}^{(2)}$};

\draw [fill=white] (1.4,0.8-0.4) rectangle (-0.1+2.4,0.8+0.4);
\node at (-0.55+2.4,0.8) {\scriptsize $\E_{S}^{(2)}$};

\draw [fill=white] (1+2.4+2.4,0.0-0.3) rectangle (0.1+2.4+2.4,0.8+0.4);
\node at (-0.55+2.4+2.4+1.1,0.4) {\scriptsize $\F_{SE}^{(3)}$};

\draw [fill=white] (1.4+2.4,0.8-0.4) rectangle (-0.1+2.4+2.4,0.8+0.4);
\node at (-0.55+2.4+2.4,0.8) {\scriptsize $\E_{S}^{(3)}$};

\end{tikzpicture}
\caption{\small \textcolor{white}{ciao}}
\label{fig:physicsp}
\end{subfigure}
\caption{\small \textbf{(a)} In cryptographic protocols a verifier A and a prover B apply operations on their own registers and on a shared register C. 
			\textbf{(b)} {In accreditation protocols all operations applied to the system S are noisy. Noise couples the system to an environment E.}
			}
\label{fig:protocols}
\end{figure}

\noindent most accurate operations in prominent quantum computing platforms such as trapped ions \cite{H&al14,H&al16} and superconducting qubits \cite{IBM,B&al14,SMCG16}.

In addition to its ready implementability on NISQ devices, our accreditation protocol can detect all types of noise typically considered by techniques centered around randomized benchmarking and its extensions \cite{EAZ05,K&al07,DCEL09,MGE11,Erhard&al19,CGFF17}. Moreover it can detect noise that may be missed by those techniques such as noise correlated in time. Mathematically, this amounts to allowing noisy operations to encompass both system and environment (Figure \ref{fig:physicsp}) and tracing out the environment only at the end of the protocol. This noise model is more general than the Markovian noise model considered in protocols centered around randomized benchmarking \cite{Wallman17,MPF18}. Moreover, by testing circuits rather than gates, our protocol ensures that all possible noise (subject to condition {N1} and {N2}) in state preparation, measurement and gates is detected, even noise that arises only when these components are put together to form a circuit. On the contrary, benchmarking isolated gates can sometimes yield over-estimates of their fidelities \cite{CGFF17}, and consequently of the fidelity of the resulting circuit. We note that noise of the type {N2} excludes unbounded gate-dependent errors in single-qubit gates such as systematic over- or under-rotations, as also is the case for other works \cite{EAZ05,K&al07,DCEL09,MGE11,Erhard&al19,CGFF17}.

Inspired by our accreditation protocol we also present a novel cryptographic protocol, which we call ``mesothetic verification protocol''. In the mesothetic protocol the verifier implements the single-qubit gates in all circuits while the prover undertakes all other operations.  This is distinct from prepare-and-send \cite{ABE08,FK12,BFKW13,B15,KD17,KW17,FKD17,ABEM17} or receive-and-measure \cite{HM15,MF16,HKSE17,MK18,TMMMF18} cryptographic protocols in that the verifier  intervenes during the actual implementation of the circuits, and not before or after the circuits are implemented.

Our paper is organized as follows. In Section 1 of Results we introduce the notation, in Section 2 we provide the necessary definitions, in Sections 3 and 4 we present our protocol and prove our results and in Section 5 we present the mesothetic verification protocol.

\section*{Results}
\noindent{\textbf{1. Notation: }}
We indicate unitary matrices acting on the system with capital letters such as $U,V$ and $W$, and Completely Positive Trace Preserving (CPTP) maps with calligraphic letters such as $\E,\F,\R$ and $\M$. We indicate the $2\times2$ identity matrix as $I$, the single-qubit Pauli gates as $X,Y,Z$, the controlled-$Z$ gate as $cZ$, the controlled-$X$ gate as $cX$, the Hadamard gate as $H$ and $S=\textup{diag}(1,i)$. The symbol $\circ$ denotes the composition of CPTP maps: $\circ_{p=1}^q\E_p(\rho)=\E_q\ldots \E_1(\rho)$, Tr${}_E[\textup{ }\cdot\textup{ }]$ is the trace over the environment, $D(\sigma,\tau)=\textup{Tr}|\sigma-\tau|/2$ is the trace distance between the states $\sigma$ and $\tau$. We say that a noisy implementation $\widetilde{\E}$ of $\E$ suffers bounded noise if $\widetilde{\E}$ can be written as $\widetilde{\E}=(1-r)\E+r\F$  for some CPTP map $\F$ and number $0\leq r<1$, otherwise if $r=1$ we say that the noise is unbounded \cite{AKN98}.\\

\noindent\noindent{\textbf{2. Background:}} We start by defining our notion of protocol:
\begin{mydef}
\label{def:protocol}
\textup{\textbf{[Protocol]. }}Consider a system $S$ in the state $\rho_S$. A protocol on input $\rho_S$ is a collection of CPTP maps $\{\E^{(p)}_S\}_{p=1}^q$ acting on $S$ and yielding the state $\rho_{\textup{out}}=\circ_{p=1}^q\E^{(p)}_S(\rho_{S})$.
\end{mydef}
\noindent When implemented on real devices protocols suffer the effects of noise. Modeling noise as a set $\{\F^{(p)}_{SE}\}$ of CPTP maps acting on system and environment (Figure \ref{fig:physicsp}), the state of the system at the end of a noisy protocol run is
\begin{align}
{\rho}_{\textup{out}}=\textup{Tr}_E\big[\circ_{p=1}^q\F^{(p)}_{SE}\big(\E^{(p)}_S\otimes \I_E)(\rho_{S}\otimes\rho_E)\big]\textup{ ,}\label{eq:stateend}
\end{align}
where $\rho_E$ is the state of the environment at the beginning of the protocol. We allow each map $\F^{(p)}_{SE}$ to depend arbitrarily on the corresponding operation $\E^{(p)}_{S}$. 

A trap-based accreditation protocol is defined as follows. A single run of such a protocol takes as input a classical description of the target circuit and a number $v$, implements $v+1$ circuits (the target and $v$ traps) and returns the outputs of the target circuit, together with a ``flag bit'' set to ``$\textup{acc}$'' ($``\textup{rej}$'') indicating that the output of the target must be accepted (rejected). Formally,
\begin{mydef}
\label{def:accreditation}
\textup{\textbf{[Trap-Based Accreditation Protocol]. }}Consider a protocol $\{\E^{(p)}_S\}_{p=1}^q$ with input $\rho_S$, where $\rho_S$ contains a classical description of the target circuit and the number $v$ of trap circuits. Consider also a set of CPTP maps $\{\F^{(p)}_{SE}\}_{p=1}^q$ (the noise) acting on system and environment. We say that the protocol $\{\E^{(p)}_S\}_{p=1}^q$ can accredit the outputs of the target circuit in the presence of noise $\{\F^{(p)}_{SE}\}_{p=1}^q$ if the following two properties hold:
\begin{itemize}[leftmargin=0.4cm]
\item[1)] The state of the system at the end of a single protocol run (Equation \ref{eq:stateend}) can be expressed as
\begin{eqnarray}
\label{eq:stateend2}
 {\rho}_{\textup{out}}& = &b\textup{ }\tau_{\textup{out}}^{\prime\textup{ tar}}\otimes|\textup{acc}\rangle\langle\textup{acc}|  	\\								  
			& + &  (1-b)\bigg(l\textup{ }\sigma_{\textup{out}}^{\textup{tar}}\otimes|\textup{acc}\rangle\langle\textup{acc}|+(1-l)\tau_{\textup{out}}^{\textup{tar}}\otimes|\textup{rej}\rangle\langle\textup{rej}|\bigg)\cr  \nonumber
\end{eqnarray}
where $\sigma_{\textup{out}}^{\textup{tar}}$ ($\tau_{\textup{out}}^{\prime\textup{ tar}}$) is the state of the target circuit at the end of a noiseless (noisy) protocol run, $\tau_{\textup{out}}^{\textup{tar}}$ is an arbitrary state for the target circuit, $|\textup{acc}\rangle$ is the state of the flag indicating acceptance, $|\textup{rej}\rangle=|\textup{acc}\oplus 1\rangle$, $0\leq l\leq1$, $0\leq b\leq\varepsilon$ and $\varepsilon\in[0,1]$.
\item[2)] After d protocol runs with the same target circuit and v potentially different trap circuits, if all these runs are affected by independent and identically distributed (i.i.d.) noise, then the variation distance between noisy and noiseless probability distribution of the outputs of each of the $N_{\textup{acc}}\in[0,d]$ protocol runs ending with flag bit in the state $|\textup{acc}\rangle$ is upper-bounded as in Equation \ref{eq:eq1}.
\end{itemize}
\end{mydef}
\noindent Property 1 ensures that the probability of accepting the outputs of a single protocol run when the target circuit is affected by noise (the number $b$ in Equation \ref{eq:stateend2}) is smaller than a constant $\varepsilon$. The constant $\varepsilon$ is a function of the number of trap circuits, of the protocol and of the noise model and is to be computed analytically. The quantity $1-\varepsilon$ quantifies the \textit{credibility} of the accreditation protocol.

Note that Property 1 in the above definition implies Property 2. To see this, assume Property 1 is valid for a given protocol. Suppose that this protocol is run $d$ times with i.i.d. noise (a standard assumption in trap-based cryptographic protocols \cite{KD17,HKSE17}) and suppose that $N_{\textup{acc}}>0$ protocol runs end with flag bit in the state $|\textup{acc}\rangle$. For each of these $N_{\textup{acc}}$ runs, the state of the system at the end of the protocol run is thus of the form (cfr. Equation \ref{eq:stateend2})
\begin{align}
{\rho}_{\textup{out, acc}}&=\frac{(1-b)l\textup{ }\sigma_{\textup{out}}^{\textup{tar}}+b\textup{ }\tau_{\textup{out}}^{\prime\textup{ tar}}}{(1-b)l+b}\otimes|\textup{acc}\rangle\langle\textup{acc}|
\end{align}
This yields a bound on the variation distance of the type \cite{NC00}
\begin{align}
\label{eq:bbbboundvardistt}
\frac{1}{2}\sum_{\overline{s}}&\big|p_{\textup{noiseless}}(\overline{s})-p_{\textup{noisy}}(\overline{s})\big|\cr
&\leq\textup{ }D\bigg({\sigma}^{\textup{tar}}_{\textup{out}}\textup{ , }\frac{(1-b)l\textup{ }\sigma_{\textup{out}}^{\textup{tar}}+b\textup{ }\tau_{\textup{out}}^{\prime\textup{ tar}}}{(1-b)l+b}\bigg)\cr
&\leq\frac{b}{(1-b)l+b}\textup{ }\leq\textup{}\frac{\varepsilon}{\textup{prob(acc)}}\textup{ ,}
\end{align}
where in the last inequality we used that $b\leq\varepsilon$ (Property 1) and that the quantity prob(acc)$=(1-b)l+b$ is the probability of accepting (Equation \ref{eq:stateend2}). Hoeffding's Inequality ensures that $|$prob(acc)$-N_{\textup{acc}}/d|\leq\theta$ with confidence $1-2$exp$(-2d\theta^2)$ and this yields Property 2. 

Bounding the variation distance as in Equation \ref{eq:eq1} requires knowledge of the two numbers $\varepsilon$ and $N_{\textup{acc}}$, the former obtained theoretically from the protocol and the latter experimentally from the device being tested. $\varepsilon$ is a property of the protocol, of its input and of the noise model and can be computed without running the protocol. However, different devices running the same target circuit will suffer different noise levels and this is captured by $N_{\textup{acc}}$, which depends on the experimental device being tested. It is important to note that the bound on the variation distance is valid only for the outputs of the $N_{\textup{acc}}$ protocol runs ending with flag bit in the state $|\textup{acc}\rangle$. If a protocol run ends with flag bit in the state $|\textup{rej}\rangle$, Property 1 implies no bound on the variation distance and all rejected outputs must be discarded.
 
We can now present our accreditation protocol (a formal description can be found in Box 1 in the Methods).\\

\noindent{\textbf{3. Our accreditation protocol:}}  Our accreditation protocol takes as input a classical description of the target circuit and the number $v$ of trap circuits. The target circuit (Figure \ref{fig:circuit}) must start with qubits in the state $\ket{+}$, contain only single-qubit gates and $cZ$ gates and end with a round of measurements in the Pauli-$X$ basis\footnote{This does not result in any loss of generality: every experimental architecture has its native input states, entangling gates and measurement basis, but these can always be mapped to $\ket{+}$ states, $cZ$ gates and Pauli-$X$ measurements.}. Moreover, it must be decomposed as a sequence of bands, each one containing one round of single-qubit gates and one round of $cZ$ gates. We will indicate the number of qubits with $n$ and the number of bands with $m$.

In our accreditation protocol $v+1$ circuits are implemented, one (chosen at random) being the target and the remaining $v$ being the traps. The trap circuits are obtained by replacing the single-qubit gates in the target circuit with other single-qubit gates, but input state, measurements and $cZ$ gates are the same as in the target (Figure \ref{subfig:trapa}; all single-qubit gates acting on the same qubit in the same band must be recompiled into one gate). These single-qubit gates are chosen as follows (Routine 2, Box 3 in the Methods):\\

\noindent For each band $j\in\{1,\dots,m-1\}$ and for each qubit $i\in\{1,\ldots,n\}$:

\noindent$\bullet$ If qubit $i$ is connected to another qubit $i'$ by a $cZ$ gate, a gate is chosen at random from the set $\{H_i\otimes S_{i'},S_i\otimes H_{i'}\}$ and is implemented on qubits $i$ and $i'$ in band $j$. This gate is then undone in band $j+1$. 

\noindent$\bullet$ Otherwise, if qubit $i$ is not connected to any other qubit by a $cZ$ gate, a gate is chosen at random from the set $\{H_i,S_i\}$ and is implemented on qubit $i$ in band $j$. This gate is then undone in band $j+1$.\\

\noindent Moreover, depending on the random bit $t\in\{0,1\}$, the traps may begin and end with a round of Hadamard gates. Since $(S\otimes H)cZ(S^\dagger\otimes H)=cX$, the trap circuits are a sequence of (randomly oriented) $cX$ gates acting on $\ket{+}^{\otimes n}$ (if $t=0$) or $\ket{0}^{\otimes n}$ (if $t=1$)|Figure \ref{subfig:trapb}. In the absence of noise, they always output $\overline{s}=\overline{0}$.

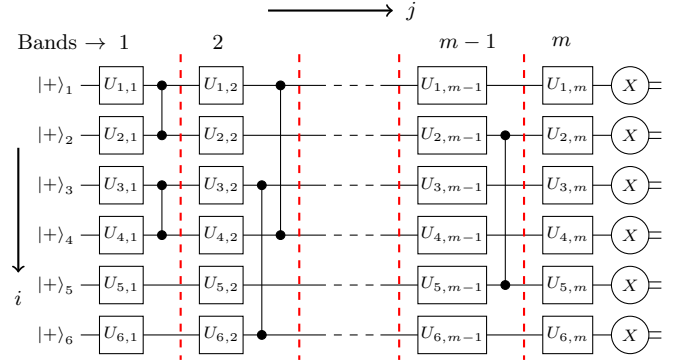
\begin{figure}
\centering
\begin{tikzpicture}[scale=0.83, every node/.style={scale=0.9}]

\draw [thick,->] (-1,3.0) -- (-1,1.0);
\draw [thick,->] (1+2,5.2) -- (3+2,5.2);
\node at (-1,0.6) {$i$};
\node at (5.3,5.2) {$j$};
\node at (-0.3,4.7) {Bands $\rightarrow$};
\node at (0.7,4.7) {1};
\node at (2.2,4.7) {2};
\node at (6.2,4.7) {$m-1$};
\node at (7.7,4.7) {$m$};

\foreach \x in {1,...,6}
\node at (-0.4,4.8-\x*0.8) {\scriptsize $\ket{+}_{\x}$};

\foreach \x in {0,...,5}
\draw (-0.0,4.0-\x*0.8) -- (3.8,4.0-\x*0.8);

\foreach \x in {0,...,5}
\draw [dashed] (3.8,4.0-\x*0.8) -- (4.8,4.0-\x*0.8);

\foreach \x in {0,...,5}
\draw (4.8,4.0-\x*0.8) -- (9.0,4.0-\x*0.8);

\foreach \x in {0,...,5}
\draw [fill=white] (0.3,4.0-\x*0.8-0.3) rectangle (1.0,4.0-\x*0.8+0.3);

\foreach \x in {1,...,6}
\node at (0.65,4.8-\x*0.8) {\scriptsize $U_{\x,1}$};

\foreach \x in {1,...,4}
\draw [fill=black] (1.3,4.8-\x*0.8) circle [radius=0.07cm];

\draw (1.3,2.4) -- (1.3,1.6);
\draw (1.3,3.2) -- (1.3,4.0);

\draw [thick,red,dashed] (1.6,4.5) -- (1.6,-0.4);

\draw [fill=white] (0.3+1*1.6,4.0-0*0.8-0.3) rectangle (1.0+1*1.6,4.0-0*0.8+0.3);
\draw [fill=white] (0.3+1*1.6,4.0-1*0.8-0.3) rectangle (1.0+1*1.6,4.0-1*0.8+0.3);
\draw [fill=white] (0.3+1*1.6,4.0-2*0.8-0.3) rectangle (1.0+1*1.6,4.0-2*0.8+0.3);
\draw [fill=white] (0.3+1*1.6,4.0-3*0.8-0.3) rectangle (1.0+1*1.6,4.0-3*0.8+0.3);
\draw [fill=white] (0.3+1*1.6,4.0-4*0.8-0.3) rectangle (1.0+1*1.6,4.0-4*0.8+0.3);
\draw [fill=white] (0.3+1*1.6,4.0-5*0.8-0.3) rectangle (1.0+1*1.6,4.0-5*0.8+0.3);

\node at (0.65+1*1.6,4.8-1*0.8) {\scriptsize $U_{1,2}$};
\node at (0.65+1*1.6,4.8-2*0.8) {\scriptsize $U_{2,2}$};
\node at (0.65+1*1.6,4.8-3*0.8) {\scriptsize $U_{3,2}$};
\node at (0.65+1*1.6,4.8-4*0.8) {\scriptsize $U_{4,2}$};
\node at (0.65+1*1.6,4.8-5*0.8) {\scriptsize $U_{5,2}$};
\node at (0.65+1*1.6,4.8-6*0.8) {\scriptsize $U_{6,2}$};

\draw [fill=black] (1.6+1*1.6,4.8-1*0.8) circle [radius=0.07cm];
\draw [fill=black] (1.3+1*1.6,4.8-3*0.8) circle [radius=0.07cm];
\draw [fill=black] (1.6+1*1.6,4.8-4*0.8) circle [radius=0.07cm];
\draw [fill=black] (1.3+1*1.6,4.8-6*0.8) circle [radius=0.07cm];

\draw (1.3+1*1.6,0.0) -- (1.3+1*1.6,2.4);
\draw (1.6+1*1.6,4.0) -- (1.6+1*1.6,1.6);

\draw [thick,red,dashed] (1.9+1*1.6,4.5) -- (1.9+1*1.6,-0.4);

\draw [thick,red,dashed] (5.1,4.5) -- (5.1,-0.4);

\draw [fill=white] (5.4,4.0-1*0.8-0.3) rectangle (6.5,4.0-1*0.8+0.3);
\draw [fill=white] (5.4,4.0-2*0.8-0.3) rectangle (6.5,4.0-2*0.8+0.3);
\draw [fill=white] (5.4,4.0-3*0.8-0.3) rectangle (6.5,4.0-3*0.8+0.3);
\draw [fill=white] (5.4,4.0-4*0.8-0.3) rectangle (6.5,4.0-4*0.8+0.3);
\draw [fill=white] (5.4,4.0-5*0.8-0.3) rectangle (6.5,4.0-5*0.8+0.3);
\draw [fill=white] (5.4,4.0-0*0.8-0.3) rectangle (6.5,4.0-0*0.8+0.3);

\node at (5.4+0.55,4.8-1*0.8) {\scriptsize $U_{1,m-1}$};
\node at (5.4+0.55,4.8-2*0.8) {\scriptsize $U_{2,m-1}$};
\node at (5.4+0.55,4.8-3*0.8) {\scriptsize $U_{3,m-1}$};
\node at (5.4+0.55,4.8-4*0.8) {\scriptsize $U_{4,m-1}$};
\node at (5.4+0.55,4.8-5*0.8) {\scriptsize $U_{5,m-1}$};
\node at (5.4+0.55,4.8-6*0.8) {\scriptsize $U_{6,m-1}$};

\draw [fill=black] (6.8,4.8-2*0.8) circle [radius=0.07cm];
\draw [fill=black] (6.8,4.8-5*0.8) circle [radius=0.07cm];

\draw (6.8,0.8) -- (6.8,3.2);

\draw [thick,red,dashed] (7.1,4.5) -- (7.1,-0.4);

\foreach \x in {0,...,5}
\draw [fill=white] (7.4,4.0-\x*0.8-0.3) rectangle (8.2,4.0-\x*0.8+0.3);

\foreach \x in {1,...,6}
\node at (7.4+0.4,4.8-\x*0.8) {\scriptsize $U_{\x,m}$};

\foreach \x in {1,...,6}
\draw (9.1,4.8-\x*0.8-0.05) -- (9.3,4.8-\x*0.8-0.05);
\foreach \x in {1,...,6}
\draw (9.1,4.8-\x*0.8+0.05) -- (9.3,4.8-\x*0.8+0.05);

\foreach \x in {0,...,5}
\draw [fill=white] (8.8,\x*0.8) circle [radius=0.3cm];
\foreach \x in {1,...,6}
\node at (8.8,4.8-\x*0.8) {\scriptsize $X$};

\end{tikzpicture}
\caption{\small A six-qubit example of target circuit.}
\label{fig:circuit}
\end{figure}

\noindent 

Our protocol requires appending a Quantum One-Time Pad (QOTP) to all single-qubit gates in all circuits (target and traps). This is described in Routine 1 in the Methods and is done as follows:\\

\noindent$\bullet$ For all bands $j=1,\ldots,m$ and qubits $i=1,\ldots,n$, a random Pauli gate is appended \textit{after} each gate $U_{i,j}$ (Figure \noindent\ref{fig:correction1}). This yields
\begin{equation}
U_{i,j}'=X_i^{\alpha'_{i,j}}Z_i^{\alpha_{i,j}}U_{i,j}\textrm{ ,}
\end{equation} 
\noindent where $\alpha_{i,j},\alpha'_{i,j}\in\{0,1\}$ are random bits.\\

\noindent$\bullet$ For all bands $j=2,\ldots,m$ and qubits $i=1,\ldots,n$, another Pauli gate is appended \textit{before} each single-qubit gate. This Pauli gate is chosen so that it undoes the QOTP coming from the previous band (Figure \ref{fig:correction2}). Choosing this Pauli gate requires using the identities
\ben
(X_1\otimes I_2) cZ&= cZ(X_1\otimes Z_2),\label{eq:czconjj1}
\\
(Z_1\otimes I_2) cZ&= cZ(Z_1\otimes I_2)\label{eq:czconjj2}
.
\een
This yields
\begin{equation}
\label{eq:correction}
{U_{i,j}''=X_i^{\alpha'_{i,j}}Z_i^{\alpha_{i,j}}U_{i,j}P_i\big(\{\alpha_{i,j-1}\}_i,\{\alpha_{i,j-1}'\}_i\big)\textup{ ,}}
\end{equation}
\noindent {where $P_i\big(\{\alpha_{i,j-1}\}_i,\{\alpha_{i,j-1}'\}_i\big)$} is a Pauli gate that depends on the QOTP in the previous band.\\

\noindent$\bullet$ A random Pauli-$X$ gate is appended \textit{before} all the gates $U_{i,1}'$ in the first band. This yields
\ben
U_{i,1}''=U_{i,1}'X_i^{\gamma_{i}}
\een
with $\gamma_i\in\{0,1\}$ chosen at random.\\

\noindent Overall, replacing each gate $U_{i,j}$ with $U_{i,j}''$ yields a new circuit that is equivalent to the the original one, apart from the un-recovered QOTP $\otimes_{i=1}^nX^{\alpha'_{i,m}}Z^{\alpha_{i,m}}$ in the last

\newpage
\onecolumngrid

\begin{figure}[H]
\centering
\begin{subfigure}{0.55\textwidth}
\begin{tikzpicture}[scale=0.75, every node/.style={scale=0.95}]

\node at (-0.3,4.7) {Bands $\rightarrow$};
\node at (1.2,4.7) {1};
\node at (3.7,4.7) {2};
\node at (7.6,4.7) {$m-1$};
\node at (9,4.7) {$m$};
\foreach \x in {1,...,6}
\node at (-0.3,4.8-\x*0.8) {\scriptsize $\ket{+}_{\x}$};

\foreach \x in {0,...,5}
\draw (0.1,4.0-\x*0.8) -- (5.5,4.0-\x*0.8);

\foreach \x in {0,...,5}
\draw [dashed] (5.5,4.0-\x*0.8) -- (6.5,4.0-\x*0.8);

\foreach \x in {0,...,5}
\draw (11,4.0-\x*0.8) -- (6.5,4.0-\x*0.8);

\foreach \x in {0,...,5}
\draw [fill=white] (0.4,4.0-\x*0.8-0.3) rectangle (1.1,4.0-\x*0.8+0.3);

\foreach \x in {1,...,6}
\node at (0.75,4.8-\x*0.8) {\scriptsize $H^t$};

\foreach \x in {0,...,5}
\draw [fill=white] (1.3,4.0-\x*0.8-0.3) rectangle (2.0,4.0-\x*0.8+0.3);

\node at (1.65,4.8-1*0.8) {\scriptsize $S$};
\node at (1.65,4.8-2*0.8) {\scriptsize $H$};
\node at (1.65,4.8-3*0.8) {\scriptsize $H$};
\node at (1.65,4.8-4*0.8) {\scriptsize $S$};
\node at (1.65,4.8-5*0.8) {\scriptsize $S$};
\node at (1.65,4.8-6*0.8) {\scriptsize $H$};

\foreach \x in {1,...,4}
\draw [fill=black] (2.3,4.8-\x*0.8) circle [radius=0.07cm];

\draw (2.3,2.4) -- (2.3,1.6);
\draw (2.3,3.2) -- (2.3,4.0);

\draw [thick,red,dashed] (2.6,4.5) -- (2.6,-0.4);

\foreach \x in {0,...,5}
\draw [fill=white] (2.8,4.0-\x*0.8-0.3) rectangle (3.6,4.0-\x*0.8+0.3);

\node at (3.2,4.8-1*0.8) {\scriptsize $S^\dagger$};
\node at (3.2,4.8-2*0.8) {\scriptsize $H$};
\node at (3.2,4.8-3*0.8) {\scriptsize $H$};
\node at (3.2,4.8-4*0.8) {\scriptsize $S^\dagger$};
\node at (3.2,4.8-5*0.8) {\scriptsize $S^\dagger$};
\node at (3.2,4.8-6*0.8) {\scriptsize $H$};

\foreach \x in {0,...,5}
\draw [fill=white] (3.8,4.0-\x*0.8-0.3) rectangle (4.6,4.0-\x*0.8+0.3);

\node at (4.2,4.8-1*0.8) {\scriptsize $H$};
\node at (4.2,4.8-2*0.8) {\scriptsize $S$};
\node at (4.2,4.8-3*0.8) {\scriptsize $S$};
\node at (4.2,4.8-4*0.8) {\scriptsize $S$};
\node at (4.2,4.8-5*0.8) {\scriptsize $S$};
\node at (4.2,4.8-6*0.8) {\scriptsize $H$};

\draw [fill=black] (5.1,4.8-1*0.8) circle [radius=0.07cm];
\draw [fill=black] (4.9,4.8-3*0.8) circle [radius=0.07cm];
\draw [fill=black] (5.1,4.8-4*0.8) circle [radius=0.07cm];
\draw [fill=black] (4.9,4.8-6*0.8) circle [radius=0.07cm];

\draw (4.9,0.0) -- (4.9,2.4);
\draw (5.1,4.0) -- (5.1,1.6);

\draw [thick,red,dashed] (5.4,4.5) -- (5.4,-0.4);

\draw [thick,red,dashed] (6.7,4.5) -- (6.7,-0.4);

\foreach \x in {0,...,5}
\draw [fill=white] (6.9,4.0-\x*0.8-0.3) rectangle (7.7,4.0-\x*0.8+0.3);

\node at (7.3,4.8-1*0.8) {\scriptsize $H$};
\node at (7.3,4.8-2*0.8) {\scriptsize $S$};
\node at (7.3,4.8-3*0.8) {\scriptsize $S$};
\node at (7.3,4.8-4*0.8) {\scriptsize $S$};
\node at (7.3,4.8-5*0.8) {\scriptsize $H$};
\node at (7.3,4.8-6*0.8) {\scriptsize $H$};

\draw [fill=black] (8,4.8-2*0.8) circle [radius=0.07cm];
\draw [fill=black] (8,4.8-5*0.8) circle [radius=0.07cm];

\draw (8,0.8) -- (8,3.2);

\draw [thick,red,dashed] (8.3,4.5) -- (8.3,-0.4);

\foreach \x in {0,...,5}
\draw [fill=white] (8.6,4.0-\x*0.8-0.3) rectangle (9.4,4.0-\x*0.8+0.3);

\node at (9,4.8-1*0.8) {\scriptsize $H$};
\node at (9,4.8-2*0.8) {\scriptsize $S$};
\node at (9,4.8-3*0.8) {\scriptsize $S$};
\node at (9,4.8-4*0.8) {\scriptsize $S$};
\node at (9,4.8-5*0.8) {\scriptsize $H$};
\node at (9,4.8-6*0.8) {\scriptsize $H$};

\foreach \x in {0,...,5}
\draw [fill=white] (9.6,4.0-\x*0.8-0.3) rectangle (10.4,4.0-\x*0.8+0.3);

\foreach \x in {0,...,5}
\node at (10,4.0-\x*0.8) {\scriptsize $H^t$};

\foreach \x in {1,...,6}
\draw (9.1+2.2,4.8-\x*0.8-0.05) -- (9.3+2.2,4.8-\x*0.8-0.05);
\foreach \x in {1,...,6}
\draw (9.1+2.2,4.8-\x*0.8+0.05) -- (9.3+2.2,4.8-\x*0.8+0.05);

\foreach \x in {0,...,5}
\draw [fill=white] (11,\x*0.8) circle [radius=0.3cm];
\foreach \x in {1,...,6}
\node at (9.8+1.2,4.8-\x*0.8) {\scriptsize $X$};

\end{tikzpicture}
\caption{}
\label{subfig:trapa}
\end{subfigure}
~
\begin{subfigure}{0.4\textwidth}
\begin{tikzpicture}[scale=0.75, every node/.style={scale=0.95}]
\node at (-0.3,4.7) {Bands $\rightarrow$};
\node at (0.9,4.7) {1};
\node at (2.5,4.7) {2};
\node at (5.1,4.7) {$m-1$};
\node at (6.4,4.7) {$m$};
\foreach \x in {1,...,6}
\node at (-0.4,4.8-\x*0.8) {\scriptsize $\ket{+}_{\x}$};

\foreach \x in {0,...,5}
\draw (-0.0,4.0-\x*0.8) -- (3.4,4.0-\x*0.8);

\foreach \x in {0,...,5}
\draw [dashed] (3.4,4.0-\x*0.8) -- (4.4,4.0-\x*0.8);

\foreach \x in {0,...,5}
\draw (4.4,4.0-\x*0.8) -- (7.3,4.0-\x*0.8);

\foreach \x in {0,...,5}
\draw [fill=white] (0.2,4.0-\x*0.8-0.3) rectangle (1.0,4.0-\x*0.8+0.3);

\foreach \x in {1,...,6}
\node at (0.6,4.8-\x*0.8) {\scriptsize $H^t$};

\draw [fill=black] (1.5,4.8-1*2*0.8+0.8) circle [radius=0.07cm];
\draw [fill=black] (1.5,4.8-2*2*0.8) circle [radius=0.07cm];
\draw [] (1.5,4.8-1*2*0.8) circle [radius=0.15cm];
\draw [] (1.5,4.8-2*2*0.8+0.8) circle [radius=0.15cm];

\draw (1.5,2.55) -- (1.5,1.6);
\draw (1.5,3.05) -- (1.5,4.);

\draw [thick,red,dashed] (1.9,4.5) -- (1.9,-0.4);

\draw [] (2.7,4.8-1*0.8) circle [radius=0.15cm];
\draw [fill=black] (2.4,4.8-3*0.8) circle [radius=0.07cm];
\draw [fill=black] (1.6+1*1.6-0.5,4.8-4*0.8) circle [radius=0.07cm];
\draw [] (1.3+1*1.6-0.5,4.8-6*0.8) circle [radius=0.15cm];

\draw (1.3+1*1.6-0.5,-0.15) -- (1.3+1*1.6-0.5,2.4);
\draw (1.6+1*1.6-0.5,4.15) -- (1.6+1*1.6-0.5,1.6);

\draw [thick,red,dashed] (3.2,4.5) -- (3.2,-0.4);

\draw [thick,red,dashed] (4.5,4.5) -- (4.5,-0.4);

\draw [fill=black] (5.1,4.8-2*0.8) circle [radius=0.07cm];
\draw [] (5.1,4.8-5*0.8) circle [radius=0.15cm];

\draw (5.1,0.8-0.15) -- (5.1,3.2);

\draw [thick,red,dashed] (5.7,4.5) -- (5.7,-0.4);

\foreach \x in {0,...,5}
\draw [fill=white] (6.0,4.0-\x*0.8-0.3) rectangle (6.8,4.0-\x*0.8+0.3);

\foreach \x in {1,...,6}
\node at (6.0+0.4,4.8-\x*0.8) {\scriptsize $H^t$};

\foreach \x in {1,...,6}
\draw (7.5,4.8-\x*0.8-0.05) -- (8.,4.8-\x*0.8-0.05);
\foreach \x in {1,...,6}
\draw (7.5,4.8-\x*0.8+0.05) -- (8.,4.8-\x*0.8+0.05);

\foreach \x in {0,...,5}
\draw [fill=white] (7.5,\x*0.8) circle [radius=0.3cm];
\foreach \x in {1,...,6}
\node at (7.5,4.8-\x*0.8) {\scriptsize $X$};

\end{tikzpicture}
\caption{}
\label{subfig:trapb}
\end{subfigure}
\caption{\small \textbf{(a)} Example of trap circuit for the target circuit in Figure \ref{fig:circuit} and \textbf{(b)} overall computation implemented through this trap circuit. All the single-qubit gates acting on the same qubit in the same band must be recompiled into one gate|for instance, in Figure \ref{subfig:trapa}, the $H^t$-gate and subsequent $S$-gate acting on qubit 1 in band 1 must be implemented as one gate $SH^t$.}
\label{fig:trap}
\end{figure}
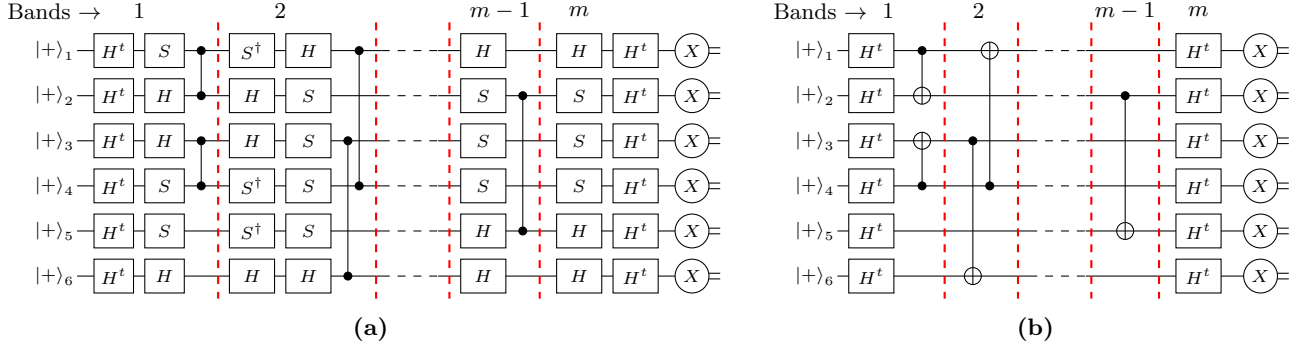

\twocolumngrid

\begin{figure}[H]
\begin{subfigure}{0.5\textwidth}
\centering
\begin{tikzpicture}[scale=1., every node/.style={scale=0.9}]


\draw [thick] (5.4,1.5-0.2) -- (5.4,1.5) -- (3.2,1.5) -- (3.2,1.5-0.2);
\node at (5.2-1.3+0.4,1.8) {\footnotesize $U_{i,j+1}^{\prime}$};
\draw [thick] (5.4,-0.3) -- (5.4,-0.5) -- (3.2,-0.5) -- (3.2,-0.3);
\node at (5.2-1.3+0.4,-0.9) {\footnotesize $U_{i+1,j+1}^{\prime}$};

\draw [thick] (0.05,1.5-0.2) -- (0.05,1.5) -- (2.1,1.5) -- (2.1,1.5-0.2);
\node at (5.2-1.3-2.6,1.8) {\footnotesize $U_{i+1,j}^{\prime}$};
\draw [thick] (0.05,-0.3) -- (0.05,-0.5) -- (2.1,-0.5) -- (2.1,-0.3);
\node at (5.2-1.3-2.6,-0.9) {\footnotesize $U_{i+1,j}^{\prime}$};
\draw [thick,red,dashed] (5.8-0.15,1.3) -- (5.8-0.15,-0.3);
\draw [thick,red,dashed] (0.0,1.3) -- (0.0,-0.3);

\draw [dashed] (-0.3,0.0) -- (0.0,0.0); 
\draw [dashed] (-0.3,1.0) -- (0.0,1.0); 
\draw (0,1.0) -- (5.2+1.2+0.2,1.0);
\draw (0,0) -- (5.2+1.2+0.2,0);
\draw [dashed] (5.2+1.2+0.6,0.0) -- (5.2+1.2+0.3,0.0); 
\draw [dashed] (5.2+1.2+0.6,1.0) -- (5.2+1.2+0.3,1.0); 

\draw (2.3,0.0) -- (2.3,1.0);
\draw [fill] (2.3,0.0) circle [radius=0.07cm];
\draw [fill] (2.3,1.0) circle [radius=0.07cm];

\draw (5.5,0.0) -- (5.5,1.0);
\draw [fill] (5.5,0.0) circle [radius=0.07cm];
\draw [fill] (5.5,1.0) circle [radius=0.07cm];

\draw [fill=white] (0.15,1-0.3) rectangle (1.25,1+0.3);
\node at (0.7,1.0) {\footnotesize $U_{i,j}$};
\draw [fill=red!20] (1.4,1-0.3) rectangle (2.0,1+0.3);
\node at (1.7,1.0) {\footnotesize $X$};
\draw [fill=white] (3.35,1-0.3) rectangle (4.45,1+0.3);
\node at (3.9,1.0) {\footnotesize $U_{i,j+1}$};
\draw [fill=red!20] (4.6,1-0.3) rectangle (5.2,1+0.3);
\node at (4.9,1.0) {\footnotesize $Z$};

\draw [fill=white] (0.15,-0.3) rectangle (1.25,0.3);
\node at (0.7,0.0) {\footnotesize $U_{i+1,j}$};
\draw [fill=red!20] (1.4,-0.3) rectangle (2.0,0.3);
\node at (1.7,0.0) {\footnotesize $XZ$};
\draw [fill=white] (3.35,-0.3) rectangle (4.45,0.3);
\node at (3.9,0.0) {\footnotesize $U_{i+1,j+1}$};
\draw [fill=red!20] (4.6,-0.3) rectangle (5.2,0.3);
\node at (4.9,0.0) {\footnotesize $X$};

\end{tikzpicture}
\caption{\small }
\label{fig:correction1}
\end{subfigure}
~
\begin{subfigure}{0.5\textwidth}
\centering
\vspace{0.3cm}
\begin{tikzpicture}[scale=1, every node/.style={scale=0.9}]


\draw [thick] (5.4,1.5-0.2) -- (5.4,1.5) -- (2.4,1.5) -- (2.4,1.5-0.2);
\node at (5.2-1.3,1.8) {\footnotesize $U_{i,j+1}^{\prime\prime}$};
\draw [thick] (5.4,-0.3) -- (5.4,-0.5) -- (2.4,-0.5) -- (2.4,-0.3);
\node at (5.2-1.3,-0.9) {\footnotesize $U_{i+1,j+1}^{\prime\prime}$};

\draw [thick] (0.05,1.5-0.2) -- (0.05,1.5) -- (2.1,1.5) -- (2.1,1.5-0.2);
\node at (5.2-1.3-2.8,1.8) {\footnotesize $U_{i,j}^{\prime}$};
\draw [thick] (0.05,-0.3) -- (0.05,-0.5) -- (2.1,-0.5) -- (2.1,-0.3);
\node at (5.2-1.3-2.8,-0.9) {\footnotesize $U_{i+1,j}^{\prime}$};

\draw [thick,red,dashed] (5.8-0.15,1.3) -- (5.8-0.15,-0.3);
\draw [thick,red,dashed] (0.0,1.3) -- (0.0,-0.3);

\draw [dashed] (-0.3,0.0) -- (0.0,0.0); 
\draw [dashed] (-0.3,1.0) -- (0.0,1.0); 
\draw (0,1.0) -- (5.2+1.2+0.2,1.0);
\draw (0,0) -- (5.2+1.2+0.2,0);
\draw [dashed] (5.2+1.2+0.6,0.0) -- (5.2+1.2+0.3,0.0); 
\draw [dashed] (5.2+1.2+0.6,1.0) -- (5.2+1.2+0.3,1.0);

\draw (2.3,0.0) -- (2.3,1.0);
\draw [fill] (2.3,0.0) circle [radius=0.07cm];
\draw [fill] (2.3,1.0) circle [radius=0.07cm];

\draw (5.5,0.0) -- (5.5,1.0);
\draw [fill] (5.5,0.0) circle [radius=0.07cm];
\draw [fill] (5.5,1.0) circle [radius=0.07cm];

\draw [fill=white] (0.15,1-0.3) rectangle (1.25,1+0.3);
\node at (0.7,1.0) {\footnotesize $U_{i,j}$};
\draw [fill=red!20] (1.4,1-0.3) rectangle (2.0,1+0.3);
\node at (1.7,1.0) {\footnotesize $X$};
\draw [fill=green!20] (2.6,1-0.3) rectangle (3.2,1+0.3);
\node at (2.9,1.0) {\footnotesize $XZ$};
\draw [fill=white] (3.35,1-0.3) rectangle (4.45,1+0.3);
\node at (3.9,1.0) {\footnotesize $U_{i,j+1}$};
\draw [fill=red!20] (4.6,1-0.3) rectangle (5.2,1+0.3);
\node at (4.9,1.0) {\footnotesize $Z$};
\draw [fill=green!20] (4.6+1.2,1-0.3) rectangle (5.2+1.2,1.3);
\node at (4.9+1.2,1.0) {\footnotesize $I$};

\draw [fill=white] (0.15,-0.3) rectangle (1.25,0.3);
\node at (0.7,0.0) {\footnotesize $U_{i+1,j}$};
\draw [fill=red!20] (1.4,-0.3) rectangle (2.0,0.3);
\node at (1.7,0.0) {\footnotesize $XZ$};
\draw [fill=green!20] (2.6,-0.3) rectangle (3.2,0.3);
\node at (2.9,0.0) {\footnotesize $X$};
\draw [fill=white] (3.35,-0.3) rectangle (4.45,0.3);
\node at (3.9,0.0) {\footnotesize $U_{i+1,j+1}$};
\draw [fill=red!20] (4.6,-0.3) rectangle (5.2,0.3);
\node at (4.9,0.0) {\footnotesize $X$};
\draw [fill=green!20] (4.6+1.2,-0.3) rectangle (5.2+1.2,0.3);
\node at (4.9+1.2,0.0) {\footnotesize $X$};

\end{tikzpicture}
\caption{\small}
\label{fig:correction2}
\end{subfigure}
\caption{\small Example of Quantum one-time pad. \textbf{(a)} The red Pauli gates apply the QOTP. \textbf{(b)} The green gates undo the QOTP coming from previous bands.}
\label{fig:corrections}
\end{figure}
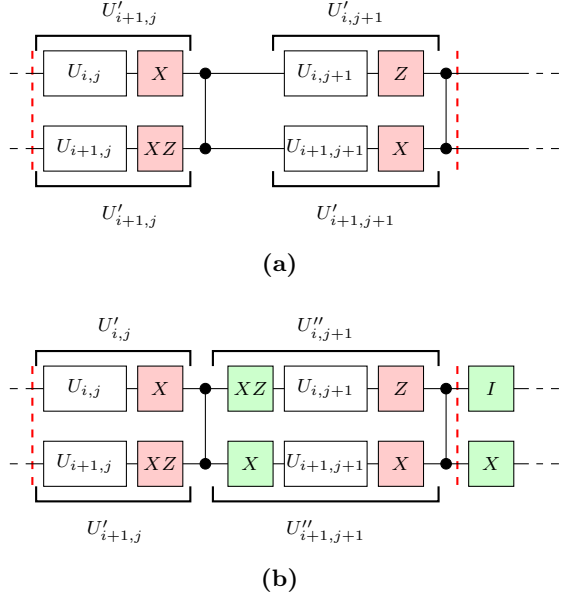

\noindent band. Since all measurements are in the Pauli-$X$ basis, the Pauli-$X$ component of this un-recovered QOTP is irrelevant, while its Pauli-$Z$ component bit-flips some of the outputs. These bit-flips can be undone by replacing each output $s_i$ with $s_i\oplus\alpha_{i,m}$ (a procedure that we call ``classical post-processing of the outputs''). This allows to recover the correct outputs.

After all the circuits have been implemented and the outputs have been post-processed, the flag bit is initialized to {$|\textup{acc}\rangle=|\textup{0}\rangle$}, then it is checked whether all the traps gave the correct output $\overline{s}=\overline{0}$. If they do, the protocol returns the output of the target together with the bit $|\textup{acc}\rangle$, otherwise it returns the output of the target together with the bit $|\textup{rej}\rangle=|1\rangle$. The output of the target is only accepted in the first case, while it is discarded in the second case.

In the absence of noise, our protocol always returns the correct output of the target circuit and always accepts it. Correctness of the target is ensured by the fact that the QOTP has no effect on the computation, as all the extra Pauli gates cancel out with each other or are countered by the classical post-processing of the outputs. Acceptance is ensured by the fact that in the absence of noise all the trap circuits always yield the correct outcome $\overline{s}=\overline{0}$.

We will now consider a noisy implementation of our protocol, explain the role played by the various tools (QOTP, trap circuits etc.) and show that with single-qubit gates suffering bounded noise, our protocol ensures that wrong outputs are rejected with high probability.\\

\noindent\noindent{\textbf{4. The credibility of our protocol: }}As per Equation \ref{eq:stateend}, we model noise as a set of CPTP maps acting on the whole system and on the environment (Figure \ref{fig:noisyprotocol}). For simplicity, let us begin with the assumption that all the rounds of single-qubit gates in our protocol are noiseless, i.e. that {for all circuits $k=1,\ldots,v+1$ and bands $j=1,\ldots,m$, a noisy implementation of the round of single-qubit gates is} (cfr. Figure \ref{fig:noisyprotocol} for notation)
\begin{align}
\label{eq:noiselesssq}
\widetilde{\U}_j^{\prime\prime(k)}=\E_j^{(k)}\big({\U}_j^{\prime\prime(k)}\otimes\I_E\big)\textup{ }\textup{ with }\textup{ }\E_j^{(k)}=\I_{SE}\textup{ ,}
\end{align}
where $\I_{SE}$ is the identity on system and environment. Under this assumption, a first simplification to the noise of type {N1} comes from the QOTP, a tool used in many works in verification \cite{GKK17} and benchmarking protocols \cite{WE16,Erhard&al19} that also plays a crucial role in our protocol. If single-qubit gates are noiseless, the QOTP allows to randomize all noise processes, even those non-local in space and time, to classically correlated Pauli errors (see Lemma \ref{lem:pauliz} in Appendix \ref{subapp:lem1}). {A similar result was previously proven in Ref. \cite{WE16} for Markovian noise, and here we show that this result holds also if the noise creates correlations in time.}

Having reduced arbitrary non-local noise to Pauli errors via the QOTP, we show (see Lemma \ref{lem:ct} in Appendix \ref{subapp:lem2}) that our trap circuits detect all Pauli errors with non-zero probability. The reasoning is as follows: Since the trap circuits contain only Clifford gates, the noise acting

\newpage
\onecolumngrid

\begin{figure}[H]
\centering
\begin{tikzpicture}[scale=0.72, every node/.style={scale=0.93}]

\node at (-0.6,3) {$\otimes_{i=1}^n|+\rangle_i\langle+|$};
\node at (14.-0.1,1.5) {$\otimes_{i=1}^n|+\rangle_i\langle+|$};

\draw (0.7,3-0.15) -- (1.0,3+0.15);
\draw (0.7+14.5,1.5-0.15) -- (1.0+14.5,1.5+0.15);

\node at (0.0,0.0) {$\rho_E$};

\draw (0.6,0*1.5) -- (7.5,0*1.5);
\draw [dashed] (7.5,0*1.5) -- (8.5,0*1.5);
\draw (8.4,0*1.5) -- (6.9+0.4+14.5,0*1.5);
\draw [dashed] (6.9+0.4+14.5,0*1.5) -- (14.5+8.1,0*1.5);

\draw (0.6,2*1.5) -- (7.5,2*1.5);
\draw [dashed] (7.5,2*1.5) -- (8.5,2*1.5);
\draw (8.4,2*1.5) -- (13,2*1.5);
\draw (13,2*1.5+0.05) -- (6.9+0.4+14.5,2*1.5+0.05);
\draw (13,2*1.5-0.05) -- (6.9+0.4+14.5,2*1.5-0.05);
\draw [dashed] (6.9+0.4+14.5,2*1.5+0.05) -- (14.5+8.1,2*1.5+0.05);
\draw [dashed] (6.9+0.4+14.5,2*1.5-0.05) -- (14.5+8.1,2*1.5-0.05);

\draw (14.5+0.6,1.5) -- (6.9+0.4+14.5,1.5);
\draw [dashed] (6.9+0.4+14.5,1.5) -- (14.5+8.1,1.5);

\draw [fill=white] (1.2,3.5) rectangle (2.2,-0.5);
\node at (1.7,1.5) {$\R^{(1)}$};

\draw [red,dashed,thick] (2.4,-0.5) -- (2.4,3.8);

\draw [fill=white] (2.55,2.5) rectangle (3.6,3.5);
\node at (3.1,3) {$\U_1^{\prime\prime(1)}$};
\draw [fill=white] (3.8,3.5) rectangle (4.8,-0.5);
\node at (4.3,1.5) {$\E_1^{(1)}$};

\draw [red,dashed,thick] (4.95,-0.5) -- (4.95,3.8);

\draw [fill=white] (4.7+0.4,3.5) rectangle (5.7+0.4,2.5);
\node at (5.2+0.4,3) {$\C\Z_{1}$};
\draw [fill=white] (5.9+0.4,3.5) rectangle (6.9+0.4,-0.5);
\node at (6.4+0.4,1.5) {$\F_1^{(1)}$};

\draw [red,dashed,thick] (6.9+0.55,-0.5) -- (6.9+0.55,3.8);

\draw [fill=white] (8.75,3.5) rectangle (9.8,2.5);
\node at (9.3,3) {$\U_m^{\prime\prime(1)}$};
\draw [fill=white] (10,3.5) rectangle (11,-0.5);
\node at (10.5,1.5) {$\E_m^{(1)}$};

\draw [red,dashed,thick] (11.15,-0.5) -- (11.15,3.8);

\draw [fill=white] (11.4,3.5) rectangle (12.4,-0.5);
\node at (11.9,1.5) {$\M^{(1)}$};

\draw [fill=white] (13,3) circle [radius=0.3cm];
\node at (13,3) {\scriptsize $X$};

\draw [fill=white] (1.2+14.5,2) rectangle (2.2+14.5,-0.5);
\node at (1.7+14.5,0.75) {$\R^{(2)}$};

\draw [red,dashed,thick] (2.2+14.65,-0.5) -- (2.2+14.65,2.3);

\draw [fill=white] (2.55+14.5,2) rectangle (3.6+14.5,1);
\node at (3.1+14.5,1.5) {$\U_1^{\prime\prime(2)}$};
\draw [fill=white] (3.8+14.5,2) rectangle (4.8+14.5,-0.5);
\node at (4.3+14.5,0.75) {$\E_1^{(2)}$};

\draw [red,dashed,thick] (4.8+14.65,-0.5) -- (4.8+14.65,2.3);

\draw [fill=white] (4.7+0.4+14.5,2) rectangle (5.7+0.4+14.5,1);
\node at (5.2+0.4+14.5,1.5) {$\C\Z_{1}$};
\draw [fill=white] (5.9+0.4+14.5,2) rectangle (6.9+0.4+14.5,-0.5);
\node at (6.4+0.4+14.5,0.75) {$\F_1^{(2)}$};

\draw [red,dashed,thick] (6.9+0.4+14.65,-0.5) -- (6.9+0.4+14.65,2.3);

\end{tikzpicture}
\caption{\small Schematic illustration of a noisy implementation of our protocol where all boxes represent CPTP maps. $\U_j^{\prime\prime(k)}$ implements the round of single-qubit gates in band $j$ of circuit $k$, $\C\Z_j$ implements the round of $cZ$ gates in band $j$. In each circuit $k=1,\ldots,v+1$: $\R^{(k)}$ is the noise in state preparation, $\M^{(k)}$ is the noise in measurements, $\E_j^{(k)}$ is the noise in the round of single-qubit gates in band $j$ and $\F_j^{(k)}$ is the noise in the round of $cZ$ gates in band $j$.}
\label{fig:noisyprotocol}
\end{figure}
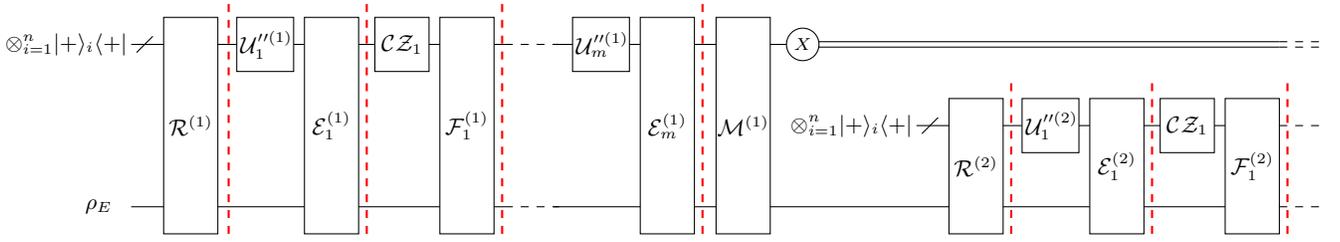

\twocolumngrid

\noindent at any point of a trap circuit can be factored to the end of the circuit. The noisy trap circuit is thus rewritten as the original one (Figure \ref{subfig:trapa}) with a Pauli-$Z$ error $P\in\{I,Z\}^{\otimes n}$ applied before the measurements. If $P\neq I^{\otimes n}$, the trap outputs a wrong output ($\overline{s}\neq\overline{0}$) and the noise is detected. However, if the errors in different parts of the circuit happen to cancel out, then $P=I^{\otimes n}$, the trap outputs $\overline{s}=\overline{0}$ and the noise is not detected. The role of $H$ and $S$-gates in our trap circuits is to ensure that this happens with suitably low probability for \textit{all} types of noise that can possibly affect the trap. These gates map Pauli errors into other Pauli errors as
\begin{align}
 & \textcolor{white}{\rightarrow}& H&\textup{ }&S\cr
X & \rightarrow& Z&&Y\cr
Y & \rightarrow& Y&&X\cr
Z & \rightarrow& X&&Z
\end{align}
where we omit unimportant prefactors (global phases do not affect outputs). Therefore, the random implementation of $H$ and $S$-gates prevents errors in state preparation and two-qubit gates from canceling trivially. Similarly, the rounds of Hadamard gates activated at random at the beginning and at the end of the trap circuits prevent measurement errors from canceling trivially with noise happening before. These arguments are used to prove the claim of Lemma \ref{lem:ct}, that states that our trap circuits can detect all possible Pauli errors with probability larger than 1/4.

The above arguments and Lemmas can be used to prove that our protocol can detect arbitrary noise in state preparation, measurement and two-qubit gates, provided that single-qubit gates are noiseless:
\begin{theorem}~\label{th:verif1}
Suppose that all single-qubit gates in our accreditation protocol are noiseless. For any number $v\geq3$ of trap circuits, our accreditation protocol can accredit the outputs of a noisy quantum computer affected by noise of the form {N1} with
\begin{equation}
\varepsilon=\frac{\kappa}{v+1}\textup{ ,}
\label{eq:eps}
\end{equation}
where $\kappa = 3(3/4)^2 \approx 1.7.$
\end{theorem}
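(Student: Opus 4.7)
The idea is to combine the two tools highlighted immediately before the theorem: randomization of arbitrary noise into Pauli errors via Lemma~\ref{lem:pauliz} (the QOTP lemma), and detection of non-trivial Pauli errors by trap circuits with probability greater than $1/4$ via Lemma~\ref{lem:ct}. These feed a symmetry argument over the uniformly random placement of the target among the $v+1$ circuits, which produces the $1/(v+1)$ scaling in $\varepsilon$.

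\textbf{Step 1.} First I would use Lemma~\ref{lem:pauliz} to replace the noisy implementation by a convex mixture, indexed by a random variable $\omega$, of ``ideal circuits plus Pauli errors''. Since all single-qubit gates are noiseless by hypothesis and the QOTP sandwiches each of them, every noise channel of type~\hyperlink{n1}{N1} on state preparations, $cZ$ gates and measurements is converted, after averaging the QOTP, into a classical distribution over Pauli patterns $E(\omega)=(E_1(\omega),\ldots,E_{v+1}(\omega))$ inserted at prescribed points of the $v+1$ circuits. These patterns may be arbitrarily correlated across slots (since the environment can carry correlations in space and time), but the random $H/S$ choices used by Routine~2 to build distinct trap circuits are mutually independent and independent of $\omega$, as is the uniformly random placement of the target among the $v+1$ slots.

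\textbf{Step 2.} Next, condition on $\omega$ and let $h(\omega)=|\{k\in\{1,\ldots,v+1\}: E_k(\omega)\neq I\}|$ be the number of slots carrying a non-trivial Pauli error. For the target output to be wrong, the target slot must lie in one of these $h(\omega)$ slots, an event of probability $h(\omega)/(v+1)$ by the uniform random choice of target position; trivial slots always produce the noiseless outcome. Conditioned on this, the remaining $h(\omega)-1$ non-trivial slots act as traps, and by Lemma~\ref{lem:ct} each of them outputs $\overline 0$ with probability at most $3/4$ over its $H/S$ randomization; independence across traps lets these factors multiply, yielding
\begin{align}
\Pr[\textup{bad accept}\mid\omega]\leq\frac{h(\omega)}{v+1}\left(\frac{3}{4}\right)^{h(\omega)-1}.
\end{align}
A one-dimensional optimization shows that $f(h)=h(3/4)^{h-1}$ attains its maximum over positive integers at $h=3$ and $h=4$, where $f(h)=3(3/4)^2=\kappa$ (which is why the hypothesis $v\geq 3$ appears: the worst case requires at least $v+1\geq 4$ slots to be realizable). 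Averaging over $\omega$ gives $b\leq\kappa/(v+1)=\varepsilon$. Finally, on any accepted run in which no bit-flip reaches the target's outputs, the classical post-processing of the outputs cancels the terminal QOTP exactly, so the reduced state of the target conditioned on uncorrupted acceptance is precisely $\sigma_{\textup{out}}^{\textup{tar}}$; this is what is needed to cast the end-of-protocol state in the form required by Equation~\ref{eq:stateend2} of Definition~\ref{def:accreditation}.

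\textbf{Main obstacle.} The delicate point is Step~2: although the environment may produce arbitrary temporal and spatial correlations among the $E_k(\omega)$'s, the trap-detection probabilities must still multiply into the $(3/4)^{h-1}$ factor. Verifying this requires a careful bookkeeping of the three independent sources of randomness (the noise realization $\omega$, the per-trap $H/S$ randomization, and the uniform target placement), and checking that Lemma~\ref{lem:ct} applies slot-by-slot irrespective of how different slots' errors are correlated through the environment. Once that independence is secured, the final maximization of $f(h)$ is elementary.
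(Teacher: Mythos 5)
Your proposal is correct and follows essentially the same route as the paper's proof: reduce noise of type \hyperlink{n1}{N1} to classically correlated Pauli-error collections via Lemma~\ref{lem:pauliz}, condition on a fixed collection affecting $\widehat{v}$ of the $v+1$ circuits, use the independence of the error collection from the target's position and from the traps' $H/S$ randomization to get the bound $\frac{\widehat{v}}{v+1}(3/4)^{\widehat{v}-1}$ via Lemma~\ref{lem:ct}, and maximize over $\widehat{v}$ to obtain $\kappa/(v+1)$. The only cosmetic slip is your aside on why $v\geq3$ is needed (the maximizer $\widehat{v}=3$ only requires $v+1\geq3$); it does not affect the argument.
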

To prove $\varepsilon={\kappa}/({v+1})$ we write the state of the system at the end of a noisy protocol run as in Equation \ref{eq:stateend2}. We do this using Lemmas \ref{lem:pauliz} and \ref{lem:ct}. The proof of Theorem \ref{th:verif1} is in Appendix \ref{app:th1}.

We now relax the assumption of noiseless single-qubit gates and generalize our results to noise of the form {N2}. We assume that all rounds of single-qubit gates suffer bounded noise, i.e. that for all circuits $k=1,\ldots,v+1$ and bands $j=1,\ldots,m$, a noisy implementation of the round of single-qubit gates is (cfr. Figure \ref{fig:noisyprotocol} for notation)
\begin{align}
\label{eq:noisebound}
\widetilde{\U}_j^{\prime\prime(k)}=\E_j^{(k)}\big({\U}_j^{\prime\prime(k)}\otimes\I_E\big)\textup{ }
\end{align}
with $\E_j^{(k)}=(1-r_j^{(k)})\I_{SE}+r_j^{(k)}\E_j^{\prime\textup{ }(k)}$ for some arbitrary CPTP map $\E_j^{\prime\textup{ }(k)}$ acting on both system and environment and for some number $0\leq r_j^{(k)}<1$. We refer to the number $r_{j}^{(k)}$ as ``error rate'' of ${\U}_j^{\prime\prime(k)}$. Since each ${\U}_j^{\prime\prime(k)}$ is chosen at random (depending on whether circuit $k$ is the target or a trap and on the QOTP) and since noise in single-qubit gates is potentially gate-dependent (condition {N2}), let us indicate with $r_{\textup{max, }j}^{(k)}$ the maximum error rate of single-qubit gates in band $j$ of circuit $k$, the maximum being taken over all possible choices of ${\U}_j^{\prime\prime(k)}$. 

We can now state Theorem \ref{th:verif2}:
\begin{theorem}~\label{th:verif2}
Our protocol with $v\geq3$ of trap circuits can accredit the outputs of a noisy quantum computer affected by noise of the form \textup{{N1}} and \textup{{N2}} with
\begin{align}
\label{eq:varepsilonth2}
\varepsilon=g\frac{\kappa}{v+1}+\textup{ }1\textup{ }-g\textrm{ ,}
\end{align}
where $\kappa = 3(3/4)^2 \approx 1.7$ and $g=\prod_{j,k}(1-r_{\textup{max, }j}^{(k)})$.
\end{theorem}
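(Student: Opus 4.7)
The plan is to reduce Theorem \ref{th:verif2} to Theorem \ref{th:verif1} by expanding each noisy single-qubit layer as a convex combination of an ideal and a ``fully-noisy'' channel, and then applying Theorem \ref{th:verif1} only to the branch in which every single-qubit layer happens to be the ideal one. The starting observation is that the notation introduced in Section \hyperlink{s1}{1} lets us write, for each circuit $k$ and band $j$,
\begin{align}
\E_j^{(k)} = \bigl(1 - r_{\text{max},j}^{(k)}\bigr)\I_{SE} + r_{\text{max},j}^{(k)}\,\widetilde{\F}_j^{(k)},
\end{align}
for some CPTP map $\widetilde{\F}_j^{(k)}$ on $SE$. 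Indeed, for each random realization of $\U_j^{\prime\prime(k)}$ the hypothesis \hyperlink{n2}{N2} gives $\E_j^{(k)} = (1-r_j^{(k)})\I_{SE} + r_j^{(k)}\F_j^{(k)}$ with $r_j^{(k)}\leq r_{\text{max},j}^{(k)}$, and the convex combination $\widetilde{\F}_j^{(k)} = (r_j^{(k)}/r_{\text{max},j}^{(k)})\F_j^{(k)} + (1 - r_j^{(k)}/r_{\text{max},j}^{(k)})\I_{SE}$ is still CPTP. Consequently each noisy layer $\widetilde{\U}_j^{\prime\prime(k)} = \E_j^{(k)}\bigl(\U_j^{\prime\prime(k)}\otimes\I_E\bigr)$ is a convex mixture of the ideal layer (weight $1 - r_{\text{max},j}^{(k)}$) and the CPTP map $\widetilde{\F}_j^{(k)}\bigl(\U_j^{\prime\prime(k)}\otimes\I_E\bigr)$ (weight $r_{\text{max},j}^{(k)}$).

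Next I would substitute this decomposition into Equation \ref{eq:stateend} and expand by linearity over all $m(v+1)$ single-qubit layers of the protocol. The full output state $\rho_{\text{out}}$ becomes a convex combination of $2^{m(v+1)}$ terms, each indexed by a choice of ``ideal'' or ``fully-noisy'' at every layer. Writing the all-ideal term separately gives
\begin{align}
\rho_{\text{out}} = g\,\rho_{\text{out}}^{\text{ideal-sq}} + (1-g)\,\rho_{\text{out}}^{\text{rest}},
\end{align}
where $g = \prod_{j,k}\bigl(1 - r_{\text{max},j}^{(k)}\bigr)$, $\rho_{\text{out}}^{\text{ideal-sq}}$ is the output when all single-qubit layers act as $\U_j^{\prime\prime(k)}\otimes\I_E$ (while state preparation, $cZ$ gates and measurements remain arbitrarily noisy of type \hyperlink{n1}{N1}), and $\rho_{\text{out}}^{\text{rest}}$ is a convex combination of the remaining $2^{m(v+1)}-1$ terms, which is itself a valid density operator.

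Finally, Theorem \ref{th:verif1} applies verbatim to $\rho_{\text{out}}^{\text{ideal-sq}}$ and decomposes it in the form of Equation \ref{eq:stateend2} with an ``acceptance-of-wrong-output'' weight $b^{\text{ideal-sq}}\leq \kappa/(v+1)$. For $\rho_{\text{out}}^{\text{rest}}$ no useful bound is available, so I would decompose it in the form of Equation \ref{eq:stateend2} with the worst-case weight $b^{\text{rest}}\leq 1$. Combining the two pieces via linearity, the weight on $\tau_{\text{out}}^{\prime\,\text{tar}}\otimes|\text{acc}\rangle\langle\text{acc}|$ in the resulting decomposition of $\rho_{\text{out}}$ is bounded by
\begin{align}
b \leq g\,\frac{\kappa}{v+1} + (1-g)\cdot 1,
\end{align}
which is exactly the $\varepsilon$ claimed in Equation \ref{eq:varepsilonth2}. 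Property 2 of Definition \ref{def:accreditation} then follows from Property 1 by the Hoeffding argument already given after Definition \ref{def:accreditation}. The main obstacle I anticipate is verifying that the convex expansion can indeed be carried out cleanly despite the noise maps $\E_j^{(k)}$ being allowed to depend on the random realization of $\U_j^{\prime\prime(k)}$ and on the environment; the use of the gate-independent worst-case rate $r_{\text{max},j}^{(k)}$, together with the recycled-CPTP construction above, is exactly what absorbs this dependence and justifies the global product $g$.
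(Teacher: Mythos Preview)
Your proposal is correct and follows essentially the same route as the paper: decompose each $\E_j^{(k)}$ as $(1-r_{\text{max},j}^{(k)})\I_{SE}+r_{\text{max},j}^{(k)}\Q_j^{(k)}$, expand by linearity to isolate the all-ideal-single-qubit-layer branch with weight $g$, apply Theorem \ref{th:verif1} to that branch, and bound the remaining $(1-g)$ branch by the worst case $h=1$. Your explicit construction of the CPTP map $\widetilde{\F}_j^{(k)}$ absorbing the gap between $r_j^{(k)}$ and $r_{\text{max},j}^{(k)}$ is a useful clarification that the paper leaves implicit.
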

To calculate $\varepsilon$ for the protocol with noisy single-qubit gates we use that $\E_j^{(k)}=(1-r_{\textup{max, }j}^{(k)})\I_{SE}+r_{\textup{max, }j}^{(k)}\Q_j^{(k)}$, where $\Q_j^{(k)}$ is a CPTP map encompassing the system and the environment. We can then rewrite the state of the system at the end of the protocol as
\begin{align}~\label{eq:rhoprime}
{\rho}_{\textup{out}}^\star=&g{\rho}_{\textup{out}}+\big(1-g\big)\textup{}\widetilde{\rho}_{\textup{out}}
\end{align}
where $\rho_{\textup{out}}$ is the state of the system at the end of a  protocol run with noiseless single-qubit gates|which by

\onecolumngrid

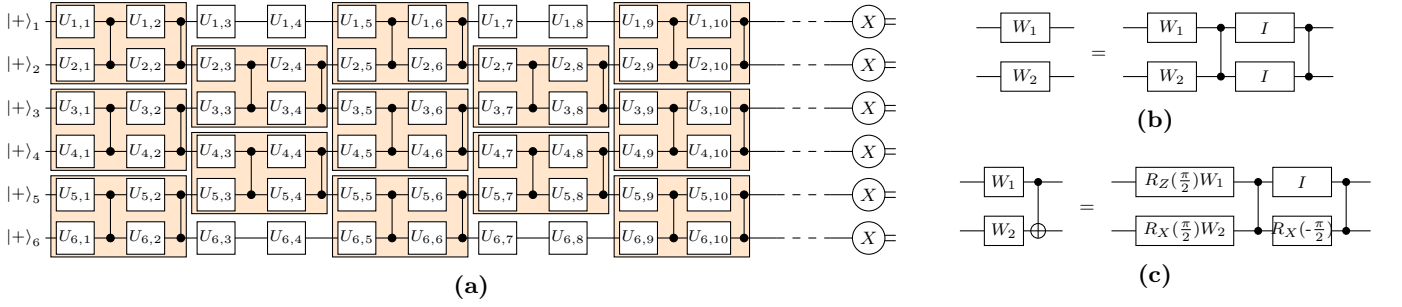
\begin{figure}[H]
\begin{subfigure}{0.7\textwidth}
\begin{tikzpicture}[scale=0.72, every node/.style={scale=0.85}]

\draw [fill=orange!20] (0.2,0*0.8-0.35) rectangle (2.7,1*0.8+0.35);
\draw [fill=orange!20] (0.2,2*0.8-0.35) rectangle (2.7,3*0.8+0.35);
\draw [fill=orange!20] (0.2,4*0.8-0.35) rectangle (2.7,5*0.8+0.35);

\draw [fill=orange!20] (0.2+1*2.6,1*0.8-0.35) rectangle (2.7+1*2.6,2*0.8+0.35);
\draw [fill=orange!20] (0.2+1*2.6,3*0.8-0.35) rectangle (2.7+1*2.6,4*0.8+0.35);

\draw [fill=orange!20] (0.2+2*2.6,0*0.8-0.35) rectangle (2.7+2*2.6,1*0.8+0.35);
\draw [fill=orange!20] (0.2+2*2.6,2*0.8-0.35) rectangle (2.7+2*2.6,3*0.8+0.35);
\draw [fill=orange!20] (0.2+2*2.6,4*0.8-0.35) rectangle (2.7+2*2.6,5*0.8+0.35);

\draw [fill=orange!20] (-0.1+3*2.7,1*0.8-0.35) rectangle (2.7+3*2.7-0.3,2*0.8+0.35);
\draw [fill=orange!20] (-0.1+3*2.7,3*0.8-0.35) rectangle (2.7+3*2.7-0.3,4*0.8+0.35);

\draw [fill=orange!20] (0.2+4*2.6,0*0.8-0.35) rectangle (2.7+4*2.6,1*0.8+0.35);
\draw [fill=orange!20] (0.2+4*2.6,2*0.8-0.35) rectangle (2.7+4*2.6,3*0.8+0.35);
\draw [fill=orange!20] (0.2+4*2.6,4*0.8-0.35) rectangle (2.7+4*2.6,5*0.8+0.35);

%

\foreach \x in {1,...,6}
\node at (-0.3,4.8-\x*0.8) {\scriptsize $\ket{+}_{\x}$};

\foreach \x in {0,...,5}
\draw (0.1,4.0-\x*0.8) -- (11+2*1.3,4.0-\x*0.8);

\foreach \x in {0,...,5}
\draw [dashed] (11+2*1.3,4.0-\x*0.8) -- (12+2*1.3,4.0-\x*0.8);

\foreach \x in {0,...,5}
\draw (12.7+2*1.3,4.0-\x*0.8) -- (12+2*1.3,4.0-\x*0.8);

\foreach \x in {0,...,5}
\draw [fill=white] (0.3,4.0-\x*0.8-0.3) rectangle (1,4.0-\x*0.8+0.3);

\foreach \x in {1,...,6}
\node at (0.65,4.8-\x*0.8) {\scriptsize $U_{\x,1}$};

\foreach \x in {1,...,6}
\draw [fill=black] (1.3,4.8-\x*0.8) circle [radius=0.07cm];

\draw (1.3,0.0) -- (1.3,0.8);
\draw (1.3,1.6) -- (1.3,2.4);
\draw (1.3,3.2) -- (1.3,4.0);

\foreach \x in {0,...,5}
\draw [fill=white] (1.6,4.0-\x*0.8-0.3) rectangle (2.3,4.0-\x*0.8+0.3);

\foreach \x in {1,...,6}
\node at (1.95,4.8-\x*0.8) {\scriptsize $U_{\x,2}$};

\foreach \x in {1,...,6}
\draw [fill=black] (1.3+1.3,4.8-\x*0.8) circle [radius=0.07cm];

\draw (1.3+1.3,0.0) -- (1.3+1.3,0.8);
\draw (1.3+1.3,1.6) -- (1.3+1.3,2.4);
\draw (1.3+1.3,3.2) -- (1.3+1.3,4.0);

\foreach \x in {0,...,5}
\draw [fill=white] (0.3+2*1.3,4.0-\x*0.8-0.3) rectangle (1+2*1.3,4.0-\x*0.8+0.3);

\foreach \x in {1,...,6}
\node at (0.65+2*1.3,4.8-\x*0.8) {\scriptsize $U_{\x,3}$};

\foreach \x in {2,...,5}
\draw [fill=black] (1.3+2*1.3,4.8-\x*0.8) circle [radius=0.07cm];

\draw (1.3+2*1.3,1.6) -- (1.3+2*1.3,0.8);
\draw (1.3+2*1.3,3.2) -- (1.3+2*1.3,2.4);

\foreach \x in {0,...,5}
\draw [fill=white] (0.3+3*1.3,4.0-\x*0.8-0.3) rectangle (1+3*1.3,4.0-\x*0.8+0.3);

\foreach \x in {1,...,6}
\node at (0.65+3*1.3,4.8-\x*0.8) {\scriptsize $U_{\x,4}$};

\foreach \x in {2,...,5}
\draw [fill=black] (1.3+3*1.3,4.8-\x*0.8) circle [radius=0.07cm];

\draw (1.3+3*1.3,1.6) -- (1.3+3*1.3,0.8);
\draw (1.3+3*1.3,3.2) -- (1.3+3*1.3,2.4);

\foreach \x in {0,...,5}
\draw [fill=white] (0.3+4*1.3,4.0-\x*0.8-0.3) rectangle (1+4*1.3,4.0-\x*0.8+0.3);

\foreach \x in {1,...,6}
\node at (0.65+4*1.3,4.8-\x*0.8) {\scriptsize $U_{\x,5}$};

\foreach \x in {1,...,6}
\draw [fill=black] (1.3+4*1.3,4.8-\x*0.8) circle [radius=0.07cm];

\draw (1.3+4*1.3,0.0) -- (1.3+4*1.3,0.8);
\draw (1.3+4*1.3,1.6) -- (1.3+4*1.3,2.4);
\draw (1.3+4*1.3,3.2) -- (1.3+4*1.3,4.0);

\foreach \x in {0,...,5}
\draw [fill=white] (0.3+5*1.3,4.0-\x*0.8-0.3) rectangle (1+5*1.3,4.0-\x*0.8+0.3);

\foreach \x in {1,...,6}
\node at (0.65+5*1.3,4.8-\x*0.8) {\scriptsize $U_{\x,6}$};

\foreach \x in {1,...,6}
\draw [fill=black] (1.3+5*1.3,4.8-\x*0.8) circle [radius=0.07cm];

\draw (1.3+5*1.3,0.0) -- (1.3+5*1.3,0.8);
\draw (1.3+5*1.3,1.6) -- (1.3+5*1.3,2.4);
\draw (1.3+5*1.3,3.2) -- (1.3+5*1.3,4.0);

\foreach \x in {0,...,5}
\draw [fill=white] (0.3+6*1.3,4.0-\x*0.8-0.3) rectangle (1+6*1.3,4.0-\x*0.8+0.3);

\foreach \x in {1,...,6}
\node at (0.65+6*1.3,4.8-\x*0.8) {\scriptsize $U_{\x,7}$};

\foreach \x in {2,...,5}
\draw [fill=black] (1.3+6*1.3,4.8-\x*0.8) circle [radius=0.07cm];

\draw (1.3+6*1.3,1.6) -- (1.3+6*1.3,0.8);
\draw (1.3+6*1.3,3.2) -- (1.3+6*1.3,2.4);

\foreach \x in {0,...,5}
\draw [fill=white] (0.3+7*1.3,4.0-\x*0.8-0.3) rectangle (1+7*1.3,4.0-\x*0.8+0.3);

\foreach \x in {1,...,6}
\node at (0.65+7*1.3,4.8-\x*0.8) {\scriptsize $U_{\x,8}$};

\foreach \x in {2,...,5}
\draw [fill=black] (1.3+7*1.3,4.8-\x*0.8) circle [radius=0.07cm];

\draw (1.3+7*1.3,1.6) -- (1.3+7*1.3,0.8);
\draw (1.3+7*1.3,3.2) -- (1.3+7*1.3,2.4);

\foreach \x in {0,...,5}
\draw [fill=white] (0.3+8*1.3,4.0-\x*0.8-0.3) rectangle (1+8*1.3,4.0-\x*0.8+0.3);

\foreach \x in {1,...,6}
\node at (0.65+8*1.3,4.8-\x*0.8) {\scriptsize $U_{\x,9}$};

\foreach \x in {1,...,6}
\draw [fill=black] (1.3+8*1.3,4.8-\x*0.8) circle [radius=0.07cm];

\draw (1.3+8*1.3,0.0) -- (1.3+8*1.3,0.8);
\draw (1.3+8*1.3,1.6) -- (1.3+8*1.3,2.4);
\draw (1.3+8*1.3,3.2) -- (1.3+8*1.3,4.0);

\foreach \x in {0,...,5}
\draw [fill=white] (0.25+9*1.3,4.0-\x*0.8-0.3) rectangle (1.05+9*1.3,4.0-\x*0.8+0.3);

\foreach \x in {1,...,6}
\node at (0.65+9*1.3,4.8-\x*0.8) {\scriptsize $U_{\x,10}$};

\foreach \x in {1,...,6}
\draw [fill=black] (1.3+9*1.3,4.8-\x*0.8) circle [radius=0.07cm];

\draw (1.3+9*1.3,0.0) -- (1.3+9*1.3,0.8);
\draw (1.3+9*1.3,1.6) -- (1.3+9*1.3,2.4);
\draw (1.3+9*1.3,3.2) -- (1.3+9*1.3,4.0);

\foreach \x in {1,...,6}
\draw (12.7+2*1.3,4.8-\x*0.8-0.05) -- (12.7+0.5+2*1.3,4.8-\x*0.8-0.05);
\foreach \x in {1,...,6}
\draw (12.7+2*1.3,4.8-\x*0.8+0.05) -- (12.7+0.5+2*1.3,4.8-\x*0.8+0.05);

\foreach \x in {1,...,6}
\draw [fill=white] (12.7+2*1.3,4.8-\x*0.8) circle [radius=0.3cm];
\foreach \x in {1,...,6}
\node at (12.7+2*1.3,4.8-\x*0.8) {\footnotesize $X$};

\end{tikzpicture}
\caption{}
\end{subfigure}
~
\begin{subfigure}{0.29\textwidth}
\centering
\begin{tikzpicture}[scale=0.65, every node/.style={scale=0.8}]

\draw (3.5,2.0) -- (7.8,2.0);
\draw (3.5,3.0) -- (7.8,3.0);

\draw [fill=white] (4,2.7) rectangle (5.,3.3);
\draw [fill=white] (4,1.7) rectangle (5.,2.3);
\node at (4.5,3) {\footnotesize $W_1$};
\node at (4.5,2) {\footnotesize $W_2$};

\draw [fill=black] (5.5,2.0) circle [radius=0.07];
\draw [fill=black] (5.5,3.0) circle [radius=0.07];
\draw [fill=black] (7.3,2.0) circle [radius=0.07];
\draw [fill=black] (7.3,3.0) circle [radius=0.07];

\draw (5.5,2.0) -- (5.5,3.0);
\draw (7.3,2.0) -- (7.3,3.0);

\draw [fill=white] (5.8,1.7) rectangle (7.0,2.3);
\node at (6.4,2) {\footnotesize $I$};
\draw [fill=white] (5.8,2.7) rectangle (7.0,3.3);
\node at (6.4,3) {\footnotesize $I$};

\node at (3.0,2.45) {=};

\draw (0.5,2.0) -- (2.5,2.0);
\draw (0.5,3.0) -- (2.5,3.0);

\draw [fill=white] (1.,2.7) rectangle (2.,3.3);
\node at (1.50,3) {\footnotesize $W_1$};
\draw [fill=white] (1.0,1.7) rectangle (2.0,2.3);
\node at (1.5,2) {\footnotesize $W_2$};

\end{tikzpicture}
\caption{}
\vspace{0.3cm}
\label{fig:gadget}
\begin{tikzpicture}[scale=0.65, every node/.style={scale=0.8}]

\draw (2.5,2.0) -- (7.8,2.0);
\draw (2.5,3.0) -- (7.8,3.0);

\draw [fill=white] (3,2.7) rectangle (5.,3.3);
\draw [fill=white] (3,1.7) rectangle (5.,2.3);
\node at (4.,3) {\footnotesize $R_Z(\frac{\pi}{2})W_1$};
\node at (4.,2) {\footnotesize $R_X(\frac{\pi}{2})W_2$};

\draw [fill=black] (5.5,2.0) circle [radius=0.07];
\draw [fill=black] (5.5,3.0) circle [radius=0.07];
\draw [fill=black] (7.3,2.0) circle [radius=0.07];
\draw [fill=black] (7.3,3.0) circle [radius=0.07];

\draw (5.5,2.0) -- (5.5,3.0);
\draw (7.3,2.0) -- (7.3,3.0);

\draw [fill=white] (5.8,1.7) rectangle (7.0,2.3);
\node at (6.4,2) {\footnotesize $R_X(\textrm{-}\frac{\pi}{2})$};
\draw [fill=white] (5.8,2.7) rectangle (7.0,3.3);
\node at (6.4,3) {\footnotesize $I$};

\node at (2.0,2.45) {=};

\draw (0.4-1.0,2.0) -- (2.5-1.0,2.0);
\draw (0.4-1.0,3.0) -- (2.5-1.0,3.0);

\draw [fill=white] (0.9-1.0,2.7) rectangle (1.7-1.0,3.3);
\node at (1.3-1.0,3) {\footnotesize $W_1$};
\draw [fill=white] (0.9-1.0,1.7) rectangle (1.7-1.0,2.3);
\node at (1.3-1.0,2) {\footnotesize $W_2$};

\draw (2.-1.0,1.85) -- (2.-1.0,3.0);
\draw [fill=black] (2.-1.0,3.0) circle [radius=0.07];
\draw [] (2.-1.0,2.0) circle [radius=0.15];

\end{tikzpicture}
\caption{}
\label{fig:ucx}
\end{subfigure}
\caption{\textbf{(a)} Six-qubit example of circuit in normal form. This circuit has the same repetitive structure as the Brickwork States \cite{BFK09}. Recompiling the target circuit into a normal form of this type can always be done using the circuit identities \textbf{(b)} and \textbf{(c)}.}
\label{fig:BwSType}
\end{figure}

\twocolumngrid
\noindent Theorem \ref{th:verif1} is of the form of Equation \ref{eq:stateend2} with $b\leq\kappa/(v+1)$|and $\widetilde{\rho}_{\textup{out}}$ is a quantum state containing the effects of noise in single-qubit gates. Expressing $\widetilde{\rho}_{\textup{out}}$ as
\begin{equation}~\label{eq:rhoprime}
\widetilde{\rho}_{\textup{out}}=h\textup{ }\widetilde{\tau}^{\textup{ tar}}_{1}\otimes|\textup{acc}\rangle\langle\textup{acc}|+(1-h)\textup{ }\widetilde{\tau}^{\textup{ tar}}_{2}\otimes|\textup{rej}\rangle\langle\textup{rej}|\textup{ ,}
\end{equation}  where $\widetilde{\tau}^{\textup{ tar}}_{1}$ and $\widetilde{\tau}^{\textup{ tar}}_{2}$ are arbitrary states for the target and $0\leq h\leq1$, we thus have
\begin{align}~\label{eq:end2}
&{\rho}_{\textup{out}}^\star\textup{=}g\bigg[b\textup{ }{\tau}_{\textup{out}}^{\prime\textup{ tar}}\otimes|\textup{acc}\rangle\langle\textup{acc}|+(1-b)\bigg(l\textup{}\textrm{ }{\sigma}_{\textup{out}}^{\textup{ tar}}\otimes|\textup{acc}\rangle\langle\textup{acc}|\cr
&+(1-l){\tau}_{\textup{out}}^{\textup{ tar}}\otimes|\textup{rej}\rangle\langle\textup{rej}|\bigg)\bigg]+(1-g)\bigg[h\textup{ }\widetilde{\tau}^{\textup{ tar}}_{1}\otimes|\textup{acc}\rangle\langle\textup{acc}|\cr
&+(1-h)\textup{ }\widetilde{\tau}^{\textup{ tar}}_{2}\otimes|\textup{rej}\rangle\langle\textup{rej}|\bigg]
\end{align}
As it can be seen, the probability that the target is in the wrong state and the flag bit is in the state $|\textup{acc}\rangle$ is $gb+(1-g)h\leq g\kappa/(v+1)+(1-g)h$, where we used that $b\leq\kappa/(v+1)$ from Theorem \ref{th:verif1}. This probability reaches its maximum for $h=1$, therefore we have $\varepsilon=g\kappa/(v+1)+1-g$. Note that if $r_{\textup{max},j}^{(k)}\leq r_0\ll1$, then
\begin{align}
g\geq\prod_{k=1}^{v+1}\prod_{j=1}^{m}(1-r_0)\approx 1-m(v+1)r_0+O(r_0^2).
\end{align}
Thus, if $r_0\ll1/m(v+1)$, then $g\approx1$ and $\varepsilon\approx1.7/(v+1)$.

It is worth noting that our Theorem \ref{th:verif1} also holds if single-qubit gates suffer unbounded noise, provided that this noise is gate-independent. Indeed, if $\E_j^{(k)}=\E$ does not depend on the parameters in $\U^{\prime\prime(k)}_j$ (cfr. Figure \ref{fig:noisyprotocol} for notation), using $\E_j^{(k)}\C\Z_j\F_j^{(k)}=\C\Z_j\F_j^{\textup{ }\prime\textup{ }(k)}$ (with $\F_j^{\textup{ }\prime\textup{ }(k)}=\C\Z_j^{-1}\E_j^{(k)}\C\Z_j\F_j^{(k)}$) we can factor this noise into that of $\C\Z_j$ and prove $\varepsilon=\kappa/(v+1)$ with the same arguments used in Theorem \ref{th:verif1}.  Similarly, we also expect our Theorem \ref{th:verif2} to hold if noise in single-qubit gates has a \textit{weak} gate-dependence, as is the case for some of the protocols centered around randomized benchmarking \cite{MGE11}. We leave the analysis of weakly gate-dependent noise to future works.\\

\noindent{\textbf{5. Mesothetic verification protocol: }} In Box {4} in the Methods we translate our accreditation protocol into a cryptographic protocol, obtaining what we call ``mesothetic'' verification protocol. To verify an $n$-qubit computation, in the mesothetic protocol the verifier (Alice) must possess a device that can receive $n$ qubits from the prover (Bob), implement single-qubit gates on all of them and send the qubits back to the Bob. In Appendix \ref{app:crypto} we present Theorems {D1} and {D2}, which are the counterparts of Theorems \ref{th:verif1} and \ref{th:verif2} for the cryptographic protocol. In the first two Theorems the number $\varepsilon$ is replaced by the soundness $\varepsilon_{\textup{cr}}$ (cfr Definition \ref{def:ver2} in Appendix \ref{app:crypto}), namely the probability that Alice accepts a wrong output for the target when Bob is cheating.

Our mesothetic verification protocol is different from prepare-and-send \cite{ABE08,FK12,BFKW13,B15,KD17,KW17,FKD17,ABEM17} or receive-and-measure \cite{HM15,MF16,HKSE17,MK18,TMMMF18} cryptographic protocols in that Alice encrypts the computation through the QOTP during the actual implementation of the circuits, and not before or after the implementation. To do this, she must possess an $n$-qubit quantum memory and be able to execute single-qubit gates. Despite being scalable, our protocol is more demanding that those in Ref. \cite{FK12,BFKW13,B15,KD17,KW17,FKD17,ABEM17,HM15,MF16,HKSE17,MK18,TMMMF18}, where Alice only requires a single-qubit memory. This suggests the interesting possibility that protocols optimized for experiments may translate into more demanding cryptographic protocols and vice-versa.

Similarly to post-hoc verification protocols \cite{FH15,MF16}, our protocol is not blind. Alice leaks crucial information to Bob regarding the target circuit, such as the position of two-qubit gates. This is not a concern for our goals, as verifiability in our protocol relies on Bob being incapable to distinguish between target and trap circuits, i.e. to retrieve the number $v_0$, see Lemma {D3} in Appendix \ref{app:crypto}. 

Blindness may be required to protect user's privacy in future scenarios of delegated quantum computing \cite{F16}. In Appendix \ref{app:crypto} we thus show how to make our protocol blind. This requires recompiling the target circuit into a circuit in normal form with fixed $cZ$ gates, such as the brickwork-type circuit in Figure \ref{fig:BwSType}. This yields an increase in circuit depth, hence the minimal overheads of our protocol must be traded for blindness.

\newpage
\onecolumngrid

\begin{figure}[H]
  \centering
 \begin{subfigure}{0.47\textwidth}
		\centering		
		\includegraphics[scale=0.6]{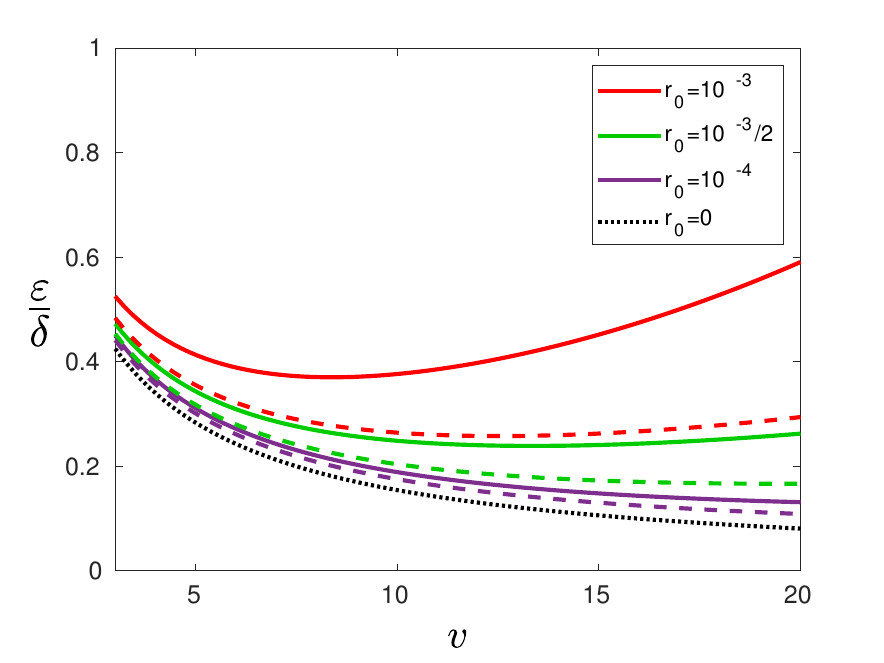}
        \caption{}
        \label{fig:GHZ}
 \end{subfigure}
 ~
 \begin{subfigure}{0.47\textwidth}
		\centering		
		\includegraphics[scale=0.6]{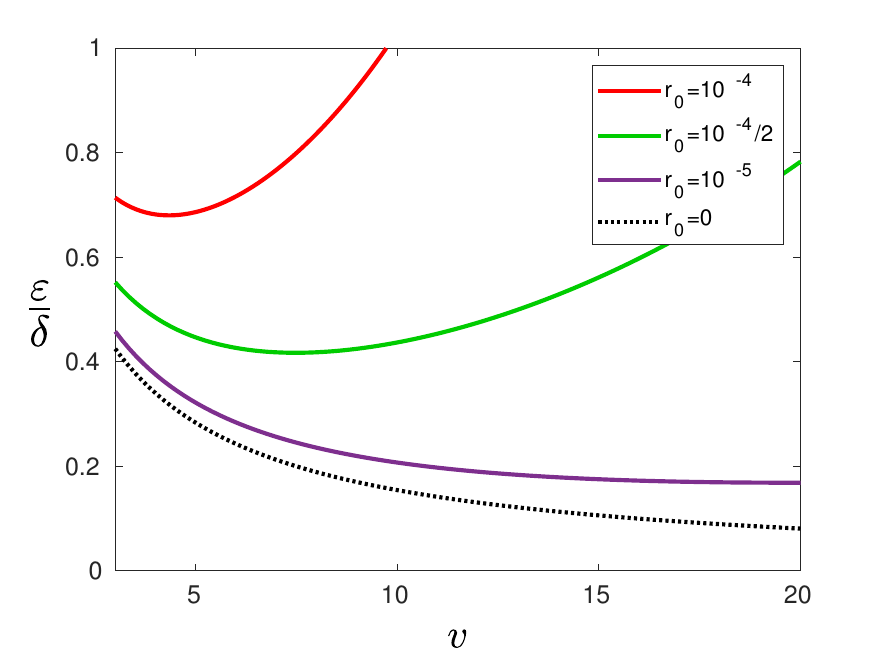}
        \caption{}
        \label{fig:RCS}
 \end{subfigure}
 \caption{RHS of Inequality \ref{eq:vardistboundednoise} for \textbf{(a)} target circuit preparing and measuring $n$-qubit GHZ states, with $n=m=7$ (dashed lines) and $n=m=10$ (solid lines) and \textbf{(b)} target circuit implementing a pseudo-random circuit for supremacy experiment \cite{B&al16} with $n=62$ qubits and circuit depth $m=34$. In these plots we assume that all operations are affected by bounded noise. We also assume that single-qubit state preparation, single-qubit measurements and $cZ$-gates have error rates $r_0$, and that all single-qubit gates have error rates $r_0/10$.}
 \label{fig:plot}
\end{figure}

\twocolumngrid

\noindent

\section*{Discussion}

We have presented a trap-based accreditation protocol for the outputs of NISQ devices. Our protocol is scalable and practical, and relies on minimal assumptions on the noise. Specifically, our protocol requires that single-qubit gates suffer bounded (but potentially gate-dependent and non-local in space and time) noise.

A single protocol run ends by either accepting or rejecting the output of the target circuit that we seek to accredit. We can then run our accreditation protocol multiple times (with the same target and with the same number of traps), each time keeping the output if the protocol accepts and discarding it if the protocol rejects. After multiple runs with i.i.d. noise, our protocol allows to bound the variation distance between noisy and noiseless probability distribution of the accepted outputs (Equation \ref{eq:eq1}).

Real-world devices can be accredited by running our protocol on them. The accreditation is provided by bounds on the variation distance that rely on $\varepsilon$, which
we obtained in Theorems \ref{th:verif1} and \ref{th:verif2}, and the acceptance probability prob(acc) of our accreditation protocol. The latter is estimated experimentally by running our protocol multiple times on the device being accredited.

Some noise models allow to lower-bound the acceptance probability analytically and consequently to upper-bound the variation distance. For instance, if all operations $\E_S^{(p)}$ in our protocol suffer bounded noise and have error rates $r_p$, we can write the state of the system at the end of the protocol as
\begin{equation}
\rho_{\textup{out}}=\delta\textup{ }\sigma_{\textup{out}}^{\textup{tar}}\otimes|\textup{acc}\rangle\langle\textup{acc}|+(1-\delta)\widetilde{\tau}\textup{ ,}
\end{equation}
where $\sigma_{\textup{out}}^{\textup{tar}}$ is the state of the target at the end of a noiseless protocol run, $\widetilde{\tau}$ is an arbitrary state for target and flag and $\delta=\prod_p(1-r_p)\in[0,1]$. This yields prob(acc)$\geq \delta$ and (cfr. Equation \ref{eq:bbbboundvardistt})
\begin{equation}
\label{eq:vardistboundednoise}
\frac{1}{2}\sum_{\overline{s}}\big|p_{\textup{noiseless}}(\overline{s})-p_{\textup{noisy}}(\overline{s})\big|\leq\frac{\varepsilon}{\textup{prob(acc)}}\leq \frac{\varepsilon}{\delta}
\end{equation}
In Figure \ref{fig:plot} we plot the RHS of the above equation. Plots of this type can be used to seek error rates that will provide the desired upper-bound on the variation distance.

Considering the error rates of present NISQ devices \cite{H&al14,H&al16,IBM,B&al14,SMCG16} we expect that our protocol may provide worthwhile upper-bounds for target circuits with up to $n\approx7$ qubits and $m\approx7$ bands (Figure \ref{fig:GHZ}). We also expect that larger target circuits may yield upper-bounds that are too large to be useful. In fact, for large target circuits, it is also possible that none of the protocol runs will accept the output of the target, and thus that our protocol will provide no upper-bound. Nevertheless, it is worth remarking that this does not indicate that our accreditation protocol is unable to accredit computations on NISQ devices. On the contrary, by providing large upper-bounds, our protocol reveals that the device being tested suffers high levels of noise and that its outputs are not credible.

Our work leaves several open questions. Our Theorem \ref{th:verif2} shows that our protocol requires reducing the error rates of single-qubit gates with the size of the target circuit. This requirement is similar to that found in other works \cite{GKW15,KD17,TMMMF18}, and is a known obstacle towards scalable quantum computing. A strategy that has been exploited in previous works is to incorporate fault-tolerance into the existing protocols \cite{GKW15,KD17}. Another has been to define verifiable fault-tolerance using notions such as acceptability and detectability \cite{FH17}. Interfacing fault tolerance with accreditation is an interesting challenge for the future.

Another open question regards the applicability of our accreditation protocol if single-qubit gates suffer unbounded noise. In its current state, the analysis of our protocol does not account for unbounded gate-dependent noise in single-qubit gates, including unitary errors such as over- or under-rotations. The reason is that the QOTP (which maps coherent errors into classically correlated Pauli errors) is applied at the level of single-qubit gates. Unbounded errors that depend on the gates used to randomize arbitrary noise processes to Pauli errors are an obstacle to other works including cryptographic protocols \cite{KD17} and protocols based on randomized benchmarking \cite{EAZ05,K&al07,DCEL09,Erhard&al19,CGFF17}.

Finally, with the {mesothetic} protocol we show how to adapt our protocol to the cryptographic setting. In the mesothetic protocol the verifier requires an $n$-qubit memory and the ability to execute single-qubit gates. {This protocol is }more demanding than several existing cryptographic protocols \cite{FK12,BFKW13,B15,KD17,KW17,FKD17,ABEM17,HM15,MF16,HKSE17,MK18,TMMMF18} requiring single-qubit memory for the verifier. An interesting question is whether {a mesothetic} protocol can be devised that only requires single-qubit gates and single-qubit memory for the verifier.

\section*{Acknowledgments}
This research was supported by the UK EPSRC
(EP/K04057X/2) and the UK Networked Quantum In-
formation Technologies (NQIT) Hub (EP/M013243/1).
We acknowledge helpful discussions with \textcolor{black}{Joel Wallman, Elham Kashefi, Dominic Branford, Zhang Jiang, Joseph Emerson}.

\section*{Methods}

\noindent\textbf{Overhead of our accreditation protocol: } Here we count the overhead of our protocol. Our protocol has no quantum overhead, as all circuits have the same size as the one being verified. The classical overhead consists in $O(nm)$ bits for each of the $v+1$ computations. Specifically, the target computation has an overhead of $2nm+n$ bits (the $2nm$ random bits $\alpha_{i,j},\alpha_{i,j}'$ and the $n$ random bits $\gamma_i$ in Routine 1), while the traps have an overhead of at most $2nm+n+nm$ bits (the $2nm+n$ random bits in Routine 1 and at most $nm$ random bits in Routine 2).\\

\vspace{0.8cm}

\newpage
\begin{footnotesize}
\noindent\fbox{
  \parbox{0.47\textwidth}{

\noindent\textbf{Box 1. {Accreditation} Protocol.}\\

\noindent\textbf{Input:}\\
$1.$ A target circuit that takes as input $n$ qubits in the state $\ket{+}$, contains only single-qubit gates and $cZ$ gates arranged in $m$ bands and ends with Pauli-$X$ measurements (Figure \ref{fig:circuit}).\\
$2.$ The number $v$ of trap circuits.\\

\noindent\textbf{Routine:}\\
\noindent 1. Choose a random number $v_0\in\{1,\dots,v+1\}$ and define $\{U_{i,j}^{(v_0)}\}=\{U_{i,j}\}$, where $\{U_{i,j}\}$ is the set of single-qubit gates in the target circuit.\\

\noindent 2. For $k=1,\dots,v+1$: If $k\neq v_0$ (trap circuit), run Routine 2 and obtain the set of single-qubit gates $\{U_{i,j}^{(k)}\}$ for the $k$-th trap circuit.\\

\noindent 3. For $k=1,\dots,v+1$: Run Routine 1 and obtain $\{U_{i,j}^{\prime\prime(k)}\}$, together with the bit-string $(\alpha^{(k)}_{1,m},\ldots\alpha^{(k)}_{n,m})$.\\

\noindent 4. For $k=1,\dots,v+1$: 
\begin{itemize}
\item[4.1] Create a state $\rho_{\textup{in}}=|+\rangle^{\otimes n}$.
\item[4.2] Implement circuit $k$ with single-qubit gates from the set $\{U_{i,j}^{\prime\prime(k)}\}$ and obtain output $\overline{s}^{(k)}=({s}^{(k)}_1,\ldots,{s}^{(k)}_n)$. Next, for all $i=1,\ldots,n$, recompute ${s}^{(k)}_i$ as $s_i^{(k)}\oplus\alpha^{(k)}_{i,m}$.
\end{itemize}
\noindent 5. Initialize a flag bit to $|\textup{acc}\rangle=|\textup{0}\rangle$. Then, for $k=1,\dots,v+1$: if $\overline{s}^{(k)}\neq\overline{0}$ and $k\neq v_0$ (trap circuit), set the flag bit to $|\textup{rej}\rangle=|\textup{acc}\oplus1\rangle$.\\

\noindent\textbf{Output:}\\
The output $\overline{s}^{(v_0)}$ of the target circuit and the flag bit.}}
\end{footnotesize}

\noindent\begin{center}
\begin{footnotesize}
\fbox{
  \parbox{0.47\textwidth}{
\textbf{Box 2. Routine {1}. [Quantum One-Time Pad].}\\
  
	\noindent\textbf{Input:}\\
	A set $\{U_{i,j}\}$ of single-qubit gates, for $j=1,\ldots,m$ and $i=1,\ldots,n$.\\
	
	\noindent\textbf{Routine:}\\
	\noindent {1.} For $j=1,\ldots,m$ and $i=1,\ldots,n$:\\
	\noindent Choose two random bits $\alpha_{i,j}$ and $\alpha^\prime_{i,j}$. Next, define $U'_{i,j}=X^{\alpha'_{i,j}}Z^{\alpha_{i,j}}U_{i,j}$.\\

	\noindent {2.} For $i=1,\ldots,n$:\\
	Choose a random bit $\gamma_i$ and define $U_{i,1}''=U_{i,1}'X^{\gamma_{i}}$.\\

	\noindent {3.} For $j=1,\ldots,m-1$:\\ 
	\noindent Using Equations \ref{eq:czconjj1} and \ref{eq:czconjj2} define $U_{i,j+1}^{\prime\prime}$ so that
	\begin{align}			&\big(\otimes_iU''_{i,j+1}\big) \widehat{cZ}_{j} \big(\otimes_iU'_{i,j}\big)=\big(\otimes_iU'_{i,j+1}\big) \widehat{cZ}_{j} \big(\otimes_iU_{i,j}\big)
	\end{align}	
	 where $\widehat{cZ}_{j}$ is the entangling operation in the $j$th band.\\

	\noindent\textbf{Output:}\\
	The set $\{U''_{i,j}\}$ and the $n$-bit string $(\alpha_{1,m},\ldots,\alpha_{n,m})$.
}}
\end{footnotesize}
\end{center}

\onecolumngrid

\begin{footnotesize}
\vspace{0.4cm}
\noindent\fbox{
  \parbox{1\textwidth}{
\noindent{\textbf{Box 3. Routine 2. [Single-qubit gates for trap circuits].}}\\

\noindent\textbf{Input:}\\
The target circuit.\\

\noindent\textbf{Routine:}\\
\noindent 1. Initialize the set $\{U_{i,j}=I_i\}$, for $i=1,\dots,n$ and $j=1,\dots,m$.\\

\noindent 2. For all $j=1,\dots,m-1$:
\begin{itemize}
\item[2.1] For all $i=1,\dots,n$: If in band $j$ of the target circuit qubits $i$ and $i'$ are connected by a $cZ$ gate, set\\
$\bullet$ $U_{i,j}=S_iU_{i,j}$ and $U_{i',j}=H_{i'}U_{i',j}$ with probability 1/2.\\
$\bullet$ $U_{i,j}=H_iU_{i,j}$ and $U_{i',j}=S_{i'}U_{i',j}$ with probability 1/2.\\
Otherwise, set $U_{i,j}=S_iU_{i,j}$ or $U_{i,j}=H_{i}U_{i,j}$ with probability 1/2.
\item[2.2] For all $i=1,\dots,n$: Set $U_{i,j+1}=U_{i,j}^{\dagger}$.
\end{itemize}
\noindent 3. For all $i=1,\dots,n$:\\
Choose a random bit $t\in\{0,1\}$. Next, set $U_{i,1}=U_{i,1}H^t$ and $U_{i,m}=H^tU_{i,m}$.\\

\noindent\textbf{Output:}\\
The set $\{U_{i,j}\}$.
}}
\end{footnotesize}

\vspace{0.3cm}

\begin{small}
\noindent\fbox{
  \parbox{\textwidth}{
\noindent\textbf{{Box 4}: Mesothetic Protocol} (further details in Appendix \ref{app:crypto})\\

\noindent\textbf{Input:} \\
A classical description of the target circuit and the number $v$ of traps. (The input is known to both Alice and Bob).\\
 
\noindent\textbf{Preliminary Operations:}\\
\noindent{1.} Alice randomly chooses which circuit $v_0\in\{1,\ldots,v+1\}$ will be used to implement the target. Next she defines $\{U_{i,j}^{(v_0)}\}=\{U_{i,j}\}$, where $\{U_{i,j}\}$ is the set of single-qubit gates in the target circuit.\\
\noindent{2.} For $k=1,\ldots,v+1$: If $k\neq v_0$ (trap circuit), Alice runs Routine 2 and obtains the set $\{U_{i,j}^{(k)}\}$ of single-qubit gates for the $k$-th circuit.\\
\noindent{3.} For $k=1,\ldots,v+1$: Alice runs Routine 1 and obtains the set of gates $\{U_{i,j}^{\prime\prime(k)}\}$, together with the random bits $\alpha_{i,m}^{(k)}$.\\

\noindent\textbf{Routine:}\\
\noindent{4.} For all $k=1,\ldots,v+1$, Alice and Bob interact as follows:
\begin{itemize}
\item[{4.1}] Bob creates $n$ qubits in state $\ket{+}$.
\item[{4.2}] For $j=1,\ldots,m$:
\begin{itemize}
\item[{4.2.1}] Bob sends all the qubits to Alice. For $i=1,\ldots,n$, Alice executes $U_{i,j}^{\prime\prime(k)}$ on qubit $i$. Finally, Alice sends all the qubits back to Bob.
\item[{4.2.2}] Bob applies the entangling gates $\widehat{cZ}_{j}$ contained in the $j$-th band of the target circuit.
\end{itemize}
\item[{4.3}] For $i=1,\ldots,n$:  Bob measures qubit $i$ in the Pauli-$X$ basis and communicates the output to Alice. Alice bit-flips the output if $\alpha_{i,m}^{(k)}=1$, otherwise she does nothing. Next, if $k\neq v_0$ and this output is $s_i=1$, Alice aborts the protocol, otherwise she does nothing.
\end{itemize}
\noindent{5.} Alice initializes a flag bit to the state $|\textup{acc}\rangle=|\textup{0}\rangle$. Next, for all $k=1,\ldots,v+1$: if $k\neq v_0$ (trap circuit) and $s_i^{(k)}\oplus\alpha_{i,m}^{(k)}\neq{0}$ for some $i\in\{1,\ldots,n\}$, Alice sets the flag bit to $|\textup{rej}\rangle=|\textup{acc}\oplus 1\rangle$.\\

\noindent\textbf{Output:}\\
The outputs of the target circuit and the flag bit.}}

\end{small}
\vspace{0.5cm}

\bibliographystyle{unsrtnat}
\bibliography{verification}

\begin{thebibliography}{54}
\providecommand{\natexlab}[1]{#1}
\providecommand{\url}[1]{\texttt{#1}}
\expandafter\ifx\csname urlstyle\endcsname\relax
  \providecommand{\doi}[1]{doi: #1}\else
  \providecommand{\doi}{doi: \begingroup \urlstyle{rm}\Url}\fi

\bibitem[Goo()]{Google}
\href{https://ai.googleblog.com/2018/03/a-preview-of-bristlecone-googles-new.html}{Google
  Bristlecone}.

\bibitem[IBM()]{IBM}
\href{https://quantumexperience.ng.bluemix.net/qx/devices}{IBM Quantum
  Experience}.

\bibitem[Rig()]{Rigetti}
\href{https://rigetti.com/}{Rigetti quantum computing}.

\bibitem[Aaronson and Arkhipov(2011)]{AA11}
S.~Aaronson and A.~Arkhipov.
\newblock \href{https://arxiv.org/abs/1011.3245}{The Computational Complexity
  of Linear Optics}.
\newblock \emph{Proceedings of the 43rd annual ACM symposium on Theory of
  computing, pp. 333–342}, 2011.

\bibitem[Bremner et~al.(2016)Bremner, Montanaro, and Shepherd]{BMS16}
M.~Bremner, A.~Montanaro, and D.~Shepherd.
\newblock
  \href{https://journals.aps.org/prl/abstract/10.1103/PhysRevLett.117.080501}{Average-case
  complexity versus approximate simulation of commuting quantum computations}.
\newblock \emph{Phys. Rev. Lett. 117, 080501}, 2016.

\bibitem[Gao et~al.(2017)Gao, Wang, and Duan]{GWD17}
X.~Gao, S.-T. Wang, and L.-M. Duan.
\newblock
  \href{https://journals.aps.org/prl/abstract/10.1103/PhysRevLett.118.040502}{Quantum
  Supremacy for Simulating A Translation-Invariant Ising Spin Model}.
\newblock \emph{Phys. Rev. Lett. 118, 040502}, 2017.

\bibitem[Terhal(2018)]{T18}
B.~Terhal.
\newblock \href{https://www.nature.com/articles/s41567-018-0131-y}{Quantum
  supremacy, here we come}.
\newblock \emph{Nature Physics 14, 530–531}, 2018.

\bibitem[Dalzell et~al.(2018)Dalzell, Harrow, Koh, and La~Placa]{DHKL18}
A.~Dalzell, A.~Harrow, D.~Koh, and R.~La~Placa.
\newblock \href{https://arxiv.org/abs/1805.05224}{How many qubits are needed
  for quantum computational supremacy?}
\newblock \emph{arXiv:1805.05224}, 2018.

\bibitem[Neill et~al.(2017)]{N&al17}
C.~Neill et~al.
\newblock \href{http://science.sciencemag.org/content/360/6385/195}{A blueprint
  for demonstrating quantum supremacy with superconducting qubits}.
\newblock \emph{Science, Vol. 360, pp. 195-199}, 2017.

\bibitem[Preskill(2018)]{Preskill18}
J.~Preskill.
\newblock \href{https://quantum-journal.org/papers/q-2018-08-06-79/}{Quantum
  Computing in the NISQ era and beyond}.
\newblock \emph{Quantum 2, 79}, 2018.

\bibitem[Gottesman(1998)]{G98}
D.~Gottesman.
\newblock \href{https://arxiv.org/abs/quant-ph/9807006}{The Heisenberg
  Representation of Quantum Computers}.
\newblock \emph{arXiv:quant-ph/9807006}, 1998.

\bibitem[Bravyi and Gosset(2017)]{BG16}
S.~Bravyi and D.~Gosset.
\newblock
  \href{https://journals.aps.org/prl/abstract/10.1103/PhysRevLett.116.250501}{Improved
  classical simulation of quantum circuits dominated by Clifford gates}.
\newblock \emph{Phys. Rev. Lett. 116,250501}, 2017.

\bibitem[Bennink et~al.(2017)]{B&al17}
R.~Bennink et~al.
\newblock
  \href{https://journals.aps.org/pra/abstract/10.1103/PhysRevA.95.062337}{Unbiased
  simulation of near-Clifford quantum circuits}.
\newblock \emph{Phys. Rev. A 95, 062337}, 2017.

\bibitem[Boixo et~al.(2018)]{B&al16}
S.~Boixo et~al.
\newblock
  \href{https://www.nature.com/articles/s41567-018-0124-x}{Characterizing
  Quantum Supremacy in Near-Term Devices}.
\newblock \emph{Nature Physics 14, 595-600}, 2018.

\bibitem[Villalonga et~al.(2019)]{Villalonga&al19}
B.~Villalonga et~al.
\newblock \href{https://www.nature.com/articles/s41534-019-0196-1}{A flexible
  high-performance simulator for the verification and benchmarking of quantum
  circuits implemented on real hardware}.
\newblock \emph{npj Quantum Information volume 5, Article number: 86}, 2019.

\bibitem[Emerson et~al.(2005)Emerson, Alicki, and Zyczkowski]{EAZ05}
J.~Emerson, R.~Alicki, and K.~Zyczkowski.
\newblock \href{https://arxiv.org/abs/quant-ph/0503243}{Scalable Noise
  Estimation with Random Unitary Operators}.
\newblock \emph{J. Opt. B: Quantum Semiclass. Opt., Volume 7, Number 10}, 2005.

\bibitem[Knill et~al.(2007)]{K&al07}
E.~Knill et~al.
\newblock \href{https://arxiv.org/abs/0707.0963}{Randomized Benchmarking of
  Quantum Gates}.
\newblock \emph{Phys. Rev. A 77, 012307}, 2007.

\bibitem[Dankert et~al.(2009)Dankert, Cleve, Emerson, and Livine]{DCEL09}
C.~Dankert, E.~Cleve, J.~Emerson, and E.~Livine.
\newblock
  \href{https://journals.aps.org/pra/abstract/10.1103/PhysRevA.80.012304}{Exact
  and Approximate Unitary 2-Designs: Constructions and Applications}.
\newblock \emph{Physical Review A 80, 012304}, 2009.

\bibitem[Magesan et~al.(2011)Magesan, Gambetta, and Emerson]{MGE11}
E.~Magesan, J.~Gambetta, and J.~Emerson.
\newblock
  \href{https://journals.aps.org/prl/abstract/10.1103/PhysRevLett.106.180504}{Scalable
  and Robust Randomized Benchmarking of Quantum Processes}.
\newblock \emph{Phys. Rev. Lett. 106, 180504}, 2011.

\bibitem[Erhard et~al.(2019)]{Erhard&al19}
A.~Erhard et~al.
\newblock \href{https://arxiv.org/abs/1902.08543}{Characterizing Large-Scale
  Quantum Computers Via Cycle Benchmarking}.
\newblock \emph{arXiv:1902.08543}, 2019.

\bibitem[Combes et~al.(2017)Combes, Granade, Ferrie, and Flammia]{CGFF17}
J.~Combes, C.~Granade, C.~Ferrie, and S.~Flammia.
\newblock \href{https://arxiv.org/abs/1702.03688}{Logical Randomized
  Benchmarking}.
\newblock \emph{arXiv:1702.03688}, 2017.

\bibitem[Sanders et~al.(2016)Sanders, Wallman, and Sanders]{SWS16}
Y.~Sanders, J.~Wallman, and B.~Sanders.
\newblock \href{https://arxiv.org/abs/1501.04932}{Bounding quantum gate error
  rate based on reported average fidelity}.
\newblock \emph{New Journal of Physics 18 012002}, 2016.

\bibitem[Wallman(2018)]{Wallman17}
J.~Wallman.
\newblock
  \href{https://quantum-journal.org/papers/q-2018-01-29-47/#}{Randomized
  benchmarking with gate-dependent noise}.
\newblock \emph{Quantum 2, 47}, 2018.

\bibitem[Merkel et~al.(2018)Merkel, Pritchett, and Fong]{MPF18}
S.~Merkel, E.~Pritchett, and B.~Fong.
\newblock \href{https://arxiv.org/abs/1804.05951}{Randomized Benchmarking as
  Convolution: Fourier Analysis of Gate Dependent Errors}.
\newblock \emph{arXiv:1804.05951}, 2018.

\bibitem[Gheorghiu et~al.(2018)Gheorghiu, Kapourniotis, and Kashefi]{GKK17}
A.~Gheorghiu, T.~Kapourniotis, and E.~Kashefi.
\newblock
  \href{https://link.springer.com/article/10.1007%2Fs00224-018-9872-3}{Verification
  of quantum computation: An overview of existing approaches}.
\newblock \emph{Theory of Computing Systems, Volume 63, Issue 4, pp 715-808},
  2018.

\bibitem[Aharonov et~al.(2010)Aharonov, Ben-Or, and Eban]{ABE08}
D.~Aharonov, M.~Ben-Or, and E.~Eban.
\newblock \href{https://arxiv.org/abs/0810.5375}{Interactive Proofs For Quantum
  Computations}.
\newblock \emph{Innovations in Computer Science - ICS 2010, Tsinghua
  University, Beijing, China, January 5-7, 2010. Proceedings 453–469}, 2010.

\bibitem[Fitzsimons and Kashefi(2017)]{FK12}
J.~Fitzsimons and E.~Kashefi.
\newblock
  \href{https://journals.aps.org/pra/abstract/10.1103/PhysRevA.96.012303}{Unconditionally
  {V}erifiable {B}lind {C}omputation}.
\newblock \emph{Phys. Rev. A 96, 012303}, 2017.

\bibitem[Barz et~al.(2013)Barz, Fitzsimons, Kashefi, and Walther]{BFKW13}
S.~Barz, J.~Fitzsimons, E.~Kashefi, and P.~Walther.
\newblock \href{https://www.nature.com/articles/nphys2763}{Experimental
  verification of quantum computations}.
\newblock \emph{Nature Physics 9, 727-731}, 2013.

\bibitem[Kapourniotis and Datta(2017)]{KD17}
T.~Kapourniotis and A.~Datta.
\newblock
  \href{https://quantum-journal.org/papers/q-2019-07-12-164/}{Nonadaptive
  fault-tolerant verification of quantum supremacy with noise}.
\newblock \emph{Quantum 3, 164}, 2017.

\bibitem[Kashefi and Wallden(2017)]{KW17}
E.~Kashefi and P.~Wallden.
\newblock
  \href{http://iopscience.iop.org/article/10.1088/1751-8121/aa5dac/meta}{Optimised
  resource construction for verifiable quantum computation}.
\newblock \emph{J. Phys. A: Math. Theor., Volume 50, Number 14}, 2017.

\bibitem[Ferracin et~al.(2017)Ferracin, Kapourniotis, and Datta]{FKD17}
S.~Ferracin, T.~Kapourniotis, and A.~Datta.
\newblock
  \href{https://journals.aps.org/pra/abstract/10.1103/PhysRevA.98.022323}{Reducing
  resources for verification of quantum computations}.
\newblock \emph{Phys. Rev. A 98, 022323}, 2017.

\bibitem[Aharonov et~al.(2017)Aharonov, Ben-Or, Eban, and Mahadev]{ABEM17}
D.~Aharonov, M.~Ben-Or, E.~Eban, and U.~Mahadev.
\newblock \href{https://arxiv.org/abs/1704.04487}{Interactive Proofs For
  Quantum Computations}.
\newblock \emph{arXiv:1704.04487}, 2017.

\bibitem[Broadbent(2018)]{B15}
A.~Broadbent.
\newblock \href{http://www.theoryofcomputing.org/articles/v014a011/}{How to
  Verify a Quantum Computation}.
\newblock \emph{Theory of Computing 14(11):1-37}, 2018.

\bibitem[Hayashi and Morimae(2015)]{HM15}
M.~Hayashi and T.~Morimae.
\newblock
  \href{https://journals.aps.org/prl/abstract/10.1103/PhysRevLett.115.220502}{Verifiable
  measurement-only blind quantum computing with stabilizer testing}.
\newblock \emph{Phys. Rev. Lett. 115, 220502}, 2015.

\bibitem[Morimae and Fitzsimons(2018)]{MF16}
T.~Morimae and J.~Fitzsimons.
\newblock
  \href{https://journals.aps.org/prl/abstract/10.1103/PhysRevLett.120.040501}{Post
  hoc verification with a single prover}.
\newblock \emph{Phys. Rev. Lett. 120, 040501}, 2018.

\bibitem[Hangleiter et~al.(2017)Hangleiter, Kliesch, Schwarz, and
  Eisert]{HKSE17}
D.~Hangleiter, M.~Kliesch, M.~Schwarz, and J.~Eisert.
\newblock
  \href{http://iopscience.iop.org/article/10.1088/2058-9565/2/1/015004/meta}{Direct
  certification of a class of quantum simulations}.
\newblock \emph{Quantum Science and Technology 2(1) 015004}, 2017.

\bibitem[Markham and Krause(2018)]{MK18}
D.~Markham and A.~Krause.
\newblock \href{https://arxiv.org/abs/1801.05057}{A simple protocol for
  certifying graph states and applications in quantum networks}.
\newblock \emph{arXiv:1801.05057}, 2018.

\bibitem[Takeuchi et~al.(2018)Takeuchi, Mantri, Morimae, Mizutani, and
  Fitzsimons]{TMMMF18}
Y.~Takeuchi, A.~Mantri, T.~Morimae, A.~Mizutani, and J.~Fitzsimons.
\newblock
  \href{https://www.nature.com/articles/s41534-019-0142-2}{Resource-efficient
  verification of quantum computing using Serfling's bound}.
\newblock \emph{npj Quantum Information volume 5, Article number: 27}, 2018.

\bibitem[Mahadev(2018)]{UM18}
U.~Mahadev.
\newblock \href{https://arxiv.org/abs/1804.01082}{Classical Verification of
  Quantum Computations}.
\newblock \emph{arXiv:1804.01082}, 2018.

\bibitem[Reichardt et~al.(2012)Reichardt, Unger, and Vazirani]{RUV12}
B.W. Reichardt, F.~Unger, and U.~Vazirani.
\newblock \href{https://arxiv.org/abs/1209.0448}{A classical leash for a
  quantum system: Command of quantum systems via rigidity of CHSH games}.
\newblock \emph{arXiv:1209.0448}, 2012.

\bibitem[Gheorghiu et~al.(2015)Gheorghiu, Kashefi, and Wallden]{GKW15}
A.~Gheorghiu, E.~Kashefi, and P.~Wallden.
\newblock
  \href{http://iopscience.iop.org/article/10.1088/1367-2630/17/8/083040/meta}{Robustness
  and device independence of verifiable blind quantum computing}.
\newblock \emph{New J. Phys. 17, 083040}, 2015.

\bibitem[McKague(2016)]{M16}
M.~McKague.
\newblock \href{https://arxiv.org/abs/1311.1534}{Interactive proofs for BQP via
  self-tested graph states}.
\newblock \emph{Theory of Computing, Volume 12, Article 3, pp 1-42}, 2016.

\bibitem[Fitzsimons and Hajdusek(2015)]{FH15}
J.~Fitzsimons and M.~Hajdusek.
\newblock \href{https://arxiv.org/abs/1512.04375}{Post hoc verification of
  quantum computation}.
\newblock \emph{arXiv:1502.02563}, 2015.

\bibitem[Natarajan and Vidick(2016)]{NV16}
A.~Natarajan and T.~Vidick.
\newblock \href{https://arxiv.org/abs/1610.03574}{Robust self-testing of
  many-qubit states}, journal = {arXiv:1610.03574}.
\newblock 2016.

\bibitem[Harty et~al.(2014)Harty, Allcock, Ballance, Guidoni, Janacek, Linke,
  Stacey, and Lucas]{H&al14}
T.P. Harty, D.T.C Allcock, C.J. Ballance, L.~Guidoni, H.A. Janacek, N.M. Linke,
  D.N. Stacey, and D.M. Lucas.
\newblock
  \href{https://journals.aps.org/prl/abstract/10.1103/PhysRevLett.113.220501}{High-Fidelity
  Preparation, Gates, Memory, and Readout of a Trapped-Ion Quantum Bit}.
\newblock \emph{Phys. Rev. Lett. 113,220501}, 2014.

\bibitem[Harty et~al.(2016)Harty, Sepiol, Allcock, Ballance, Tarlton, and
  Lucas]{H&al16}
T.P. Harty, M.A. Sepiol, D.T.C Allcock, C.J. Ballance, E.K. Tarlton, and D.M.
  Lucas.
\newblock
  \href{https://journals.aps.org/prl/abstract/10.1103/PhysRevLett.117.140501}{High-Fidelity
  Trapped-Ion Quantum Logic Using Near-Field Microwaves}.
\newblock \emph{Phys. Rev. Lett. 117,140501}, 2016.

\bibitem[Barends et~al.(2014)]{B&al14}
R.~Barends et~al.
\newblock \href{https://www.nature.com/articles/nature13171}{Superconducting
  quantum circuits at the surface code threshold for fault tolerance}.
\newblock \emph{Nature volume 508, pages 500–503}, 2014.

\bibitem[Sheldon et~al.(2016)Sheldon, Magesan, Chow, and Gambetta]{SMCG16}
S.~Sheldon, E.~Magesan, J.~Chow, and J.~Gambetta.
\newblock
  \href{https://journals.aps.org/pra/abstract/10.1103/PhysRevA.93.060302}{Procedure
  for systematically tuning up cross-talk in the cross-resonance gate}.
\newblock \emph{Phys. Rev. A 93, 060302(R)}, 2016.

\bibitem[Aharonov et~al.(1999)Aharonov, Kitaev, and Nisan]{AKN98}
D.~Aharonov, A.~Kitaev, and N.~Nisan.
\newblock \href{https://arxiv.org/abs/quant-ph/9806029#}{Quantum Circuits with
  Mixed States}.
\newblock \emph{Proceedings of the Thirtieth Annual ACM Symposium on Theory of
  Computation (STOC)'}, 1999.

\bibitem[Nielsen and Chuang(2000)]{NC00}
M.~Nielsen and I.~Chuang.
\newblock Quantum computation and quantum information: 10th anniversary
  edition.
\newblock \emph{Cambridge University Press New York, NY, USA}, 2000.

\bibitem[Wallman and Emerson(2016)]{WE16}
J.~Wallman and J.~Emerson.
\newblock
  \href{https://journals.aps.org/pra/abstract/10.1103/PhysRevA.94.052325}{Noise
  tailoring for scalable quantum computation via randomized compiling}.
\newblock \emph{Phys. Rev. A 94, 052325}, 2016.

\bibitem[Broadbent et~al.(2009)Broadbent, Fitzsimons, and Kashefi]{BFK09}
A.~Broadbent, J.~Fitzsimons, and E.~Kashefi.
\newblock \href{https://ieeexplore.ieee.org/document/5438603}{Universal Blind
  Quantum Computation}.
\newblock \emph{Proceedings of the 50th Annual IEEE Symposium on Foundations of
  Computer Science, pp. 517-526}, 2009.

\bibitem[Fitzsimons(2017)]{F16}
J.~Fitzsimons.
\newblock \href{https://www.nature.com/articles/s41534-017-0025-3}{Private
  quantum computation: An introduction to blind quantum computing and related
  protocols}.
\newblock \emph{npj Quantum Information, Volume 3, Article number 23}, 2017.

\bibitem[Fujii and Hayashi(2017)]{FH17}
K.~Fujii and M.~Hayashi.
\newblock
  \href{https://journals.aps.org/pra/abstract/10.1103/PhysRevA.96.030301}{Verifiable
  fault-tolerance in measurement-based quantum computation}.
\newblock \emph{Phys. Rev. A 96, 030301}, 2017.

\end{thebibliography}

\onecolumngrid

\newpage
\vspace{\columnsep}

\appendix 
\noindent\textbf{Notation:}  In these Appendices we will indicate the action of the round of single-qubit gates in a band $j\in\{1,\ldots,m\}$ of a circuit $k\in\{1,\ldots,v+1\}$ as $\U_{j}^{\prime\prime(k)}(\rho_S)=\otimes_{i=1}^nU^{\prime\prime(k)}_{i,j}(\rho_S)U^{\dagger\prime\prime(k)}_{i,j}$, where $\rho_S$ is the state of the system. Similarly, we will indicate the action of a round of $cZ$ gates on the system as $\C\Z_{j}(\rho_S)=\widehat{cZ}_{j}(\rho_S)\widehat{cZ}_{j}$, where $\widehat{cZ}_{j}$ is the tensor product of all $cZ$ gates in band $j$ in the target circuit, and the action of $n$-qubit Pauli operators as $\p\in\{\I,\X,\Y,\Z\}^{\otimes n}$, where $\I(\rho)=\rho$, $\X(\rho)=X\rho X$, $\Y(\rho)=Y\rho Y$, $\Z(\rho)=Z\rho Z$ are single-qubit Pauli operators.

In Appendix \ref{subapp:lem1} we provide statement and proof of Lemma \ref{lem:pauliz}, In Appendix \ref{subapp:lem2} we provide statement and proof of Lemma \ref{lem:ct} , In Appendix \ref{app:th1} we prove Theorem \ref{th:verif1} and in Appendix \ref{app:crypto} we prove soundness of the mesothetic protocol.

\section{Statement and Proof of Lemma 1}~\label{subapp:lem1}
We now present and prove Lemma \ref{lem:pauliz}, which is as follows:
\begin{lemma}~\label{lem:pauliz}
Suppose that all single-qubit gates in all the circuits implemented in our protocol are noiseless, and that state preparation, measurements and two-qubit gates suffer noise of the type type {N1}. For a fixed choice of single-qubit gates ${\U}^{(1)}_{1},\ldots,{\U}^{(v+1)}_{m}$, summed over all the random bits $\alpha_{i,j},\alpha_{i,j}',\gamma_i$ (cfr. Routine 1), the joint state of the target circuit and of the traps after they have all been implemented is of the form
\begin{align}~\label{eq:rhoout}
{\rho}_{\textup{out}}\big({\U}^{(1)}_{1},\ldots&,{\U}^{(v+1)}_{m}\big)
=\sum_{\substack{\overline{s}^{(1)},\ldots,\overline{s}^{(v+1)}}}\textup{ }\sum_{\substack{{\p}^{(1)}_{0},\ldots,{\p}^{(v+1)}_{m}}}\frac{
\textup{prob}\big({\p}^{(1)}_{0},\ldots,{\p}^{(v+1)}_{m}\big)}{2^{n(v+1)}}\textup{ }\times\cr
&\bigotimes_{k=1}^{v+1}\textup{ }\langle+|^{\otimes n}\bigg[\Z^{\overline{s}^{(k)}}{\p}^{(k)}_{m}{\U}^{(k)}_{m}\circ_{j=1}^{m-1}\bigg({\C\Z}_{j} {\p}^{(k)}_{j}{\U}^{(k)}_{j}\bigg)\circ{\p}^{(k)}_{0}\big(\rho_{\textup{in}}\big)\bigg]|+\rangle^{\otimes n}\bigg(\otimes_{i} Z_i^{s^{(k)}_i}|+\rangle_{i}\langle+|Z_i^{s^{(k)}_i}\bigg)
\end{align}
where $\rho_{\textup{in}}=\otimes_{i}|+\rangle_{i}\langle+|$, $\overline{s}^{(k)}=({s}_1^{(k)},\ldots,s^{(k)}_n)$ is a binary string representing the output of the $k$-th circuit, $\Z^{\overline{s}^{(k)}}(\rho)=\otimes_{i} Z_i^{s^{(k)}_i}\rho Z_i^{s^{(k)}_i}$ and $\textup{prob}\big({\p}^{(1)}_{0},\ldots,{\p}^{(v+1)}_{m}\big)$ is the joint probability of a collection of Pauli errors ${\p}^{(1)}_{0},\ldots,{\p}^{(v+1)}_{m}$ affecting the system, with ${\p}^{(k)}_{1},\ldots,{\p}^{(k)}_{m-1}\in\{\I,\X,\Y,\Z\}^{\otimes n}$ and ${\p}^{(k)}_{0},{\p}^{(k)}_{m}\in\{I,\Z\}^{\otimes n}$ for all $k$.
\end{lemma}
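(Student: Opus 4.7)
The plan is to carry out an $n$-qubit Pauli twirling argument band-by-band, lifted to system-plus-environment so that it handles the non-Markovian noise allowed by assumption \hyperlink{n1}{N1}. First I would unpack each dressed single-qubit round as $\U_j^{\prime\prime(k)} = \p_j^{\text{out},(k)} \U_j^{(k)} \p_j^{\text{in},(k)}$, where $\p_j^{\text{out},(k)} = \otimes_i X^{\alpha_{i,j}^{\prime(k)}} Z^{\alpha_{i,j}^{(k)}}$ is the fresh QOTP Pauli and $\p_j^{\text{in},(k)}$ is the one constructed by Routine~1 so that, using identities \eqref{eq:czconjj1}--\eqref{eq:czconjj2}, it exactly undoes the commuted version $\tilde{\p}_{j-1}^{(k)} = \widehat{cZ}_{j-1}\,\p_{j-1}^{\text{out},(k)}\,\widehat{cZ}_{j-1}$ of the previous band's QOTP in the noiseless case. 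In the noisy case these two Paulis instead bracket the noise channel of that band: locally one sees $\p_j^{\text{in},(k)} \F_{j-1}^{(k)} \tilde{\p}_{j-1}^{(k)} = \tilde{\p}_{j-1}^{(k)\dagger} \F_{j-1}^{(k)} \tilde{\p}_{j-1}^{(k)}$ (up to the ideal $\C\Z_{j-1}$).

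The next step is to observe that conjugation by $\widehat{cZ}_{j-1}$ is a group automorphism of the Pauli group, so as $\p_{j-1}^{\text{out},(k)}$ varies uniformly over $\{\I,\X,\Y,\Z\}^{\otimes n}$ (which is exactly what averaging over the $2n$ bits $\alpha_{i,j-1}^{(k)},\alpha_{i,j-1}^{\prime(k)}$ does), so does $\tilde{\p}_{j-1}^{(k)}$. Summing over those bits therefore implements a complete $n$-qubit Pauli twirl on the system factor of $\F_{j-1}^{(k)}$, leaving the environment acted on coherently. A standard calculation then gives that every such twirled CPTP map on $SE$ takes the form
\begin{equation*}
\rho_{SE} \mapsto \sum_{\p \in \{\I,\X,\Y,\Z\}^{\otimes n}} (P \otimes I_E)\, \G_{\p}(\rho_{SE})\, (P \otimes I_E),
\end{equation*}
where the $\G_{\p}$ are environment-acting superoperators satisfying $\sum_{\p} \textup{Tr}_S \G_{\p} = \textup{Tr}_S$ and, crucially, retaining all temporal and inter-circuit correlations of the original noise. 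Composing these twirled maps across all $m$ bands and $v+1$ circuits and tracing over $E$ at the very end produces a single joint distribution $\textup{prob}(\p_1^{(1)},\ldots,\p_m^{(v+1)})$ on Pauli trajectories, exactly as claimed in Equation \eqref{eq:rhoout}. This is the point at which the proof extends the Markovian twirling result of Ref.~\cite{WE16} to noise with arbitrary spatio-temporal correlations.

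The two restrictions on the boundary Paulis come from the special input and measurement. At the input, the extra $X^{\gamma_i^{(k)}}$ sits between $\R^{(k)}$ and the first single-qubit round, and $X_i|+\rangle_i = |+\rangle_i$; averaging over $\gamma_i^{(k)}$ thus performs an $X$-only twirl on $\R^{(k)}(|+\rangle\langle +|^{\otimes n})$, which annihilates every Pauli component outside $\{I,Z\}^{\otimes n}$ and forces $\p_0^{(k)} \in \{I,Z\}^{\otimes n}$. At the output, measurement in the Pauli-$X$ basis is insensitive to the $X$-part of the last-band error, so only its $Z$-part contributes; this is the origin of the factor $\otimes_i Z_i^{s_i^{(k)}}|+\rangle_i\langle +|Z_i^{s_i^{(k)}}$ and of the restriction $\p_m^{(k)}\in\{I,Z\}^{\otimes n}$. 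Combined with the fact that the scalar amplitude for outcome $\overline{s}^{(k)}$ is $\langle +|^{\otimes n} \Z^{\overline{s}^{(k)}}(\cdot) |+\rangle^{\otimes n}$ and that the uniform $1/2^{n(v+1)}$ measurement weight pulls out cleanly, substituting the twirled channels into the full expression yields Equation \eqref{eq:rhoout} verbatim.

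The main technical obstacle is verifying that the band-wise twirls really do factorise despite the fact that the environment is common to all bands and potentially to all $v+1$ circuits. This reduces to checking that the random bits used in different bands are independent (which they are by Routine~1) and that the twirling operators act only on the system register, so that a bandwise sum can be pushed past any environment-mediated memory in $\F_j^{(k)}$ without incurring any Markovian assumption. A secondary, purely syntactic obstacle is choreographing the signs produced by moving $X$'s past $Z$'s and by the $cZ$ conjugation so that the undo-Pauli $\p_j^{\text{in},(k)}$ is well-defined; however, these are all global phases on operators being sandwiched between $\p$ and $\p^\dagger$ and therefore drop out, leaving precisely the form stated in the lemma.
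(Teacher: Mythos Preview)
Your proposal is correct and follows essentially the same route as the paper: a band-by-band Pauli twirl carried out on the system register while the environment is kept coherent (via Stinespring dilation) until the very end, with restricted $\{I,X\}$-twirls at the input (from the $\gamma_i$ bits) and at the output (from the $\alpha'_{i,m}$ bits together with the $X$-basis measurement) to force $\p_0^{(k)},\p_m^{(k)}\in\{I,Z\}^{\otimes n}$. The paper executes the same argument at a lower level of abstraction---writing the noise as explicit unitaries on $SE$, inserting environment resolutions of identity, expanding each resulting system operator in the Pauli basis, and then applying the Pauli Twirl and Restricted Pauli Twirl lemmas one layer at a time---which makes the non-negativity of the final weights $\textup{prob}(\p_0^{(1)},\ldots,\p_m^{(v+1)})$ manifest as a sum of squared moduli; your channel-level formulation $\sum_\p (P\otimes I_E)\,\G_\p(\cdot)\,(P\otimes I_E)$ with each $\G_\p$ completely positive achieves the same conclusion more compactly.
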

This Lemma shows that if single-qubit gates are noiseless, the QOTP allows to reduce
noise of the type {N1} to classically correlated Pauli errors. These Pauli errors affect each circuit after state preparation ($\p_{0}^{(k)}$), before each entangling operation $\C\Z_{j}$ ($\p_{j}^{(k)}$, for $j\in\{1,\ldots,m-1\}$) and before the measurements ($\p_{m}^{(k)}$). Errors in the $cZ$ gates can be Pauli-$X$, $Y$ and $Z$, while those in state preparation and measurements are Pauli-$Z$ (this is because their Pauli-$X$ components stabilize $\rho_{\textup{in}}$ and Pauli-$X$ measurements respectively).

The main tool used in this section is the {``Pauli Twirl''~\cite{DCEL09}.}\\

\noindent\textup{\textbf{[{Pauli Twirl}]. }}\textit{Let $\rho$ be a $2^n\times2^n$ density matrix and let $P,P'$ be two $n$-fold tensor products of the set of Pauli operators $\{I,X,Y,Z\}$. Denoting by $\{Q_r\}$ the set of all $n$-fold tensor products of the set of Pauli operators $\{I,X,Y,Z\}$},
\begin{equation}~\label{eq:twirl}
\sum_{r=1}^{4^n}Q_rPQ_r\rho Q_rP'Q_r=0\textrm{ $\forall$ } {P\neq P'.}
\end{equation}
We will also use a restricted version of the Pauli Twirl, which is proven in Ref. \cite{KD17}\\

\noindent\textup{\textbf{[{Restricted Pauli Twirl}]. }}\textit{Let $\rho$ be a $2^n\times2^n$ density matrix and let $P,P'$ be two $n$-fold tensor products of the set of Pauli operators $\{I,Z\}$. Denoting by $\{Q_r\}$ the set of all $n$-fold tensor products of the set of Pauli operators $\{I,X\}$},
\begin{equation}~\label{eq:twirl}
\sum_{r=1}^{2^n}Q_rPQ_r\rho Q_rP'Q_r=0\textrm{ $\forall$ }{P\neq P'.}
\end{equation}
\textit{The same holds if $P$ and $P'$ are two $n$-fold tensor products of the set of Pauli operators $\{I,X\}$ and $\{Q_r\}$ is the set of all $n$-fold tensor products of the set of Pauli operators $\{I,Z\}$.}
\newpage
\begin{figure}[H]
\centering
\begin{tikzpicture}[scale=0.86, every node/.style={scale=0.99}]

\foreach \x in {1,...,6}
\node at (-0.4-1,4.8-\x*0.8) {\scriptsize $\ket{+}_{\x}$};
\node at (-0.4-1,-2*0.8) {$\ket{e_0}$};

\foreach \x in {0,...,5}
\draw (-0.0-1,4.0-\x*0.8) -- (5.7,4.0-\x*0.8);
\draw (-0.0-1,-2*0.8) -- (5.7,-2*0.8);

\foreach \x in {0,...,5}
\draw [dashed] (5.7,4.0-\x*0.8) -- (6.8,4.0-\x*0.8);
\draw [dashed] (5.7,-2*0.8) -- (6.8,-2*0.8);

\foreach \x in {0,...,5}
\draw (6.8,4.0-\x*0.8) -- (13.5,4.0-\x*0.8);
\draw (6.8,-2*0.8) -- (13.5,-2*0.8);

\draw [fill=white] (0.3-1,-1.6-0.3) rectangle (1.0-1,4.3);
\node at (0.65-1, 4.0-3.5*0.8) {$\mathbf{{R}}$};

\foreach \x in {0,...,5}
\draw [fill=white] (0.3,4.0-\x*0.8-0.3) rectangle (1.0,4.0-\x*0.8+0.3);

\foreach \x in {1,...,6}
\node at (0.65,4.8-\x*0.8) {\scriptsize $U^{\prime\prime}_{\x,1}$};

\draw [fill=white] (2.3-1,-1.6-0.3) rectangle (3.1-1,4.3);
\node at (2.7-1, 4.0-3.5*0.8) {$\mathbf{ {F}}_{1}$};

\foreach \x in {1,...,4}
\draw [fill=black] (2.3,4.8-\x*0.8) circle [radius=0.07cm];

\draw (2.3,2.4) -- (2.3,1.6);
\draw (2.3,3.2) -- (2.3,4.0);

\draw [thick,red,dashed] (2.6,4.5) -- (2.6,-1.6-0.4);

\foreach \x in {0,...,5}
\draw [fill=white] (1.3+1*1.6,4.0-\x*0.8-0.3) rectangle (2.0+1*1.6,4.0-\x*0.8+0.3);

\foreach \x in {1,...,6}
\node at (1.65+1*1.6,4.8-\x*0.8) {\scriptsize $U^{\prime\prime}_{\x,2}$};

\draw [fill=black] (3.7+1*1.6,4.8-1*0.8) circle [radius=0.07cm];
\draw [fill=black] (3.4+1*1.6,4.8-3*0.8) circle [radius=0.07cm];
\draw [fill=black] (3.7+1*1.6,4.8-4*0.8) circle [radius=0.07cm];
\draw [fill=black] (3.4+1*1.6,4.8-6*0.8) circle [radius=0.07cm];

\draw (3.4+1*1.6,0.0) -- (3.4+1*1.6,2.4);
\draw (3.7+1*1.6,4.0) -- (3.7+1*1.6,1.6);

\draw [fill=white] (3.9,-1.6-0.3) rectangle (4.7,4.3);
\node at (4.3, 4.0-3.5*0.8) {$\mathbf{ {F}}_{2}$};

\draw [thick,red,dashed] (5.5,4.5) -- (5.5,-1.6-0.4);

\draw [thick,red,dashed] (7,4.5) -- (7,-1.6-0.4);

\foreach \x in {0,...,5}
\draw [fill=white] (7.3,4.0-\x*0.8-0.3) rectangle (8.4,4.0-\x*0.8+0.3);

\foreach \x in {1,...,6}
\node at (7.85,4.8-\x*0.8) {\scriptsize $U^{\prime\prime}_{\x,m-1}$};

\draw [fill=white] (8.7,-1.6-0.3) rectangle (10,4.3);
\node at (9.35, 4.0-3.5*0.8) {$\mathbf{ {F}}_{m-1}$};

\draw [fill=black] (10.3,4.8-2*0.8) circle [radius=0.07cm];
\draw [fill=black] (10.3,4.8-5*0.8) circle [radius=0.07cm];

\draw (10.3,0.8) -- (10.3,3.2);

\draw [thick,red,dashed] (10.6,4.5) -- (10.6,-1.6-0.4);

\foreach \x in {0,...,5}
\draw [fill=white] (10.9,4.0-\x*0.8-0.3) rectangle (11.7,4.0-\x*0.8+0.3);

\foreach \x in {1,...,6}
\node at (11.3,4.8-\x*0.8) {\scriptsize $U^{\prime\prime}_{\x,m}$};

\draw [fill=white] (12,-1.6-0.3) rectangle (12.8,4.3);
\node at (12.4, 4.0-3.5*0.8) {$\mathbf{ {M}}$};

\foreach \x in {1,...,6}
\draw (13.5,4.8-\x*0.8-0.05) -- (14.,4.8-\x*0.8-0.05);
\foreach \x in {1,...,6}
\draw (13.5,4.8-\x*0.8+0.05) -- (14.,4.8-\x*0.8+0.05);

\foreach \x in {0,...,5}
\draw [fill=white] (13.5,\x*0.8) circle [radius=0.3cm];
\foreach \x in {1,...,6}
\node at (13.5,4.8-\x*0.8) {\scriptsize $X$};

\end{tikzpicture}
\caption{\small Noisy implementation of the 6-qubit target circuit in Figure \ref{fig:circuit}. The noise in state preparation is described by the unitary $ {\mathbf{R}}$, that in the measurements by $ {\mathbf{M}}$, that in the $cZ$-gates in a band $j=1,\ldots,m-1$ by $ {\mathbf{F}}_{j}$. All these unitaries act simultaneously on the system and on the environment (initially in the ground state $|e_0\rangle$).}
\label{fig:noise}
\end{figure}
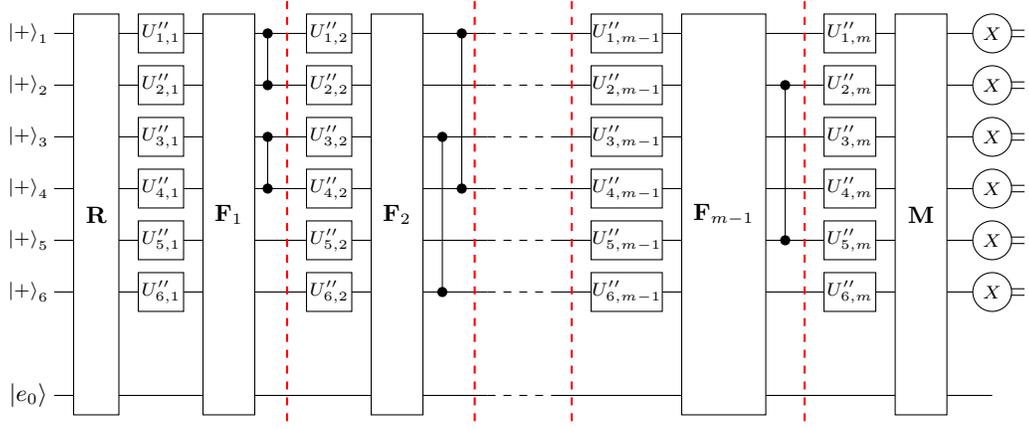

\begin{proof} \textit{(Lemma \ref{lem:pauliz})} We start proving the lemma for the case where we run a single circuit ($v=0$), and then we generalize to multiple circuits ($v>0$). Including all purifications in the environment, we can rewrite the noise as unitary matrices acting on system and environment (for clarity we write these unitaries in bold font). Doing this, for a fixed choice of gates ${\U}^{\prime\prime}_{j},\ldots,{\U}^{\prime\prime}_{m}$ (which depend on the choice of gates ${\U}^{}_{1},\ldots,{\U}^{}_{m}$ and on all the random bits $\alpha_{i,j},\alpha_{i,j}^\prime,\gamma_{i}$, cfr. Routine 1), the state of the system before the measurement becomes
\begin{align}
\label{eq:rhomultiplebands}
&\rho\big({\U}^{\prime\prime}_{1},\ldots,{\U}^{\prime\prime}_{m}\big)
=\cr
&\textup{Tr}_{\textup{E}}\bigg[\mathbf{ {M}}\textrm{ } {U}^{\prime\prime}_{m} \widehat{cZ}_{m-1}\mathbf{ {F}}_{m-1}  {U}^{\prime\prime}_{m-1}\ldots\widehat{cZ}_{1}\mathbf{ {F}}_{1}  {U}^{\prime\prime}_{1}\mathbf{ {R}}\bigg({\rho}_{\textrm{in}}\otimes\rho_E\bigg)\mathbf{ {R}}^\dagger\textrm{ } {U}^{\prime\prime\dagger}_{1} \mathbf{ {F}}_{1}^\dagger \widehat{cZ}_{1}\ldots {U}^{\prime\prime\dagger}_{m-1} \mathbf{ {F}}_{m-1}^\dagger \widehat{cZ}_{m-1} {U}^{\prime\prime\dagger}_{m}\mathbf{ {M}}^\dagger\bigg]\textup{ ,}
\end{align}
{where $\rho_{\textrm{in}}=\otimes_i|+\rangle_i\langle+|$, $\rho_E=|e_0\rangle\langle e_0|$ is the initial state of the environment, $ {U}^{\prime\prime}_{j}=\otimes_iU_{i,j}^{\prime\prime}$ are the gates output by Routine 1, the unitary matrix $ {\mathbf{R}}$ represents the noise in state preparation, the unitary matrix $ {\mathbf{M}}$ represents the noise in the measurements, $ \widehat{cZ}_{j}\mathbf{ {F}}_{j}$  is the noisy round of entangling gates in a band $j$ and Tr$_{\textup{E}}\big[\cdot\big]$ is the trace over the environment.}

For simplicity, we first prove our result for a circuit with $m=2$ bands and generalize to $m>2$ bands later. In this case, defining an orthonormal basis $\{|e_p\rangle\langle e_p|\}$ for the environment, the state in Equation \ref{eq:rhomultiplebands} is
\begin{align}
\rho{\big({\U}^{{\prime\prime}}_{1},{\U}^{\prime\prime}_{2}\big)}
=\sum_p\langle e_p|\mathbf{ {M}}\textrm{ } {U}^{\prime\prime}_{2} \widehat{cZ}_{1}\mathbf{ {F}}_{1}  {U}^{\prime\prime}_{1}\mathbf{ {R}}\big({\rho}_{\textrm{in}}\otimes|e_0\rangle\langle e_0|\big)\mathbf{ {R}}^\dagger\textrm{ } {U}^{\prime\prime\dagger}_{1} \mathbf{ {F}}_{1}^\dagger \widehat{cZ}_{1} {U}^{\prime\prime\dagger}_{2}\mathbf{ {M}}^\dagger|e_p\rangle\textup{ .}
\end{align}
Introducing resolutions of the identity on the environment before and after every noise operator, we have
\begin{align}
\rho{\big({\U}^{{\prime\prime}}_{1},{\U}^{\prime\prime}_{2}\big)}
=
&\sum_{\substack{p\\k_1,k_2,l_1,l_2}}\bigg[\langle e_p|\mathbf{ {M}}|e_{k_1}\rangle\textrm{ }  {U}^{\prime\prime}_{2} \widehat{cZ}_{1}\langle e_{k_1}|\mathbf{ {F}}_{1}|e_{k_2}\rangle  {U}^{\prime\prime}_{1}\langle e_{k_2}|\mathbf{ {R}}|e_0\rangle\bigg]\big({\rho}_{\textrm{in}}\big)\bigg[\langle e_0|\mathbf{ {R}}^\dagger|e_{l_2}\rangle\textrm{ } {U}^{\prime\prime\dagger}_{1} \langle e_{l_2}|\mathbf{ {F}}^\dagger_{1}|e_{l_1}\rangle \widehat{cZ}_{j} {U}^{\prime\prime\dagger}_{2} \langle e_{l_1}|\mathbf{ {M}^\dagger}|e_p\rangle\bigg]\textup{ ,}
\end{align}
since $\sum_k|e_k\rangle\langle e_k|=I_E$ and $\langle e_k|V_S|e_{k'}\rangle=V_S\delta_{k,k'}$ for every operator $V_S$ acting only on the system.
The operators $\langle e_p|\mathbf{ {M}}|e_{k_1}\rangle\textrm{, }\langle e_{k_1}|\mathbf{ {F}}_{1}|e_{k_2}\rangle \textrm{, }\langle e_{k_2}|\mathbf{ {R}}|e_0\rangle\textrm{, }\langle e_0|\mathbf{ {R}}^\dagger|e_{l_2}\rangle\textrm{, }\langle e_{l_2}|\mathbf{ {F}}^\dagger_{1}|e_{l_1}\rangle\textrm{, } \langle e_{l_1}|\mathbf{ {M}^\dagger}|e_p\rangle$ act only on the system, and can thus be written as in Table 2.
\begin{center}
\begin{eqnarray*}
\begin{tabular}{|l|l|l|}
\hline
&&\cr
$\langle e_{k_2}|\mathbf{ {R}}|e_{0}\rangle=\sum_{\mu_1}\eta^{(R)}_{k_2,0,\mu_1} {P}_{\mu_1}$
&$\langle e_{k_1}|\mathbf{ {F}}_{1}|e_{k_2}\rangle=\sum_{\mu_2}\eta^{(F)}_{k_1,k_2,\mu_2} {P}_{\mu_2}$&$\langle e_p|\mathbf{ {M}}|e_{k_1}\rangle=\sum_{\mu_3}\eta^{(M)}_{p,k_1,\mu_3} {P}_{\mu_3}$\cr
&&\cr
\hline
&&\cr
$\langle e_0|\mathbf{ {R}}^\dagger|e_{l_2}\rangle=\sum_{\nu_1}\eta^{*(R)}_{l_2,0,\nu_1} {P}_{\nu_1}$&
$\langle e_{l_2}|\mathbf{ {F}}_{1}^\dagger|e_{l_1}\rangle=\sum_{\nu_2}\eta^{*(F)}_{l_1,l_2,\nu_2} {P}_{\nu_2}$&$\langle e_{l_1}|\mathbf{ {M}}^\dagger|e_p\rangle=\sum_{\nu_3}\eta^{*(M)}_{p,l_1,\nu_3} {P}_{\nu_3}$\cr
&&\cr
\hline
\end{tabular}
\end{eqnarray*}
Table 2.
\vspace{0.3cm}
\end{center}
In Table 2, $ {P}_\mu\in\{I,X,Y,Z\}^{\otimes n}$ are $n$-fold tensor products of Pauli operators acting on the system and $\eta^{(R)}_{k_2,0,\mu_1}\textup{, }\eta^{(F)}_{k_1,k_2,\mu_2}\textup{, }\eta^{(M)}_{k_p,k_1,\mu_3}$ are complex numbers. We then obtain
\begin{align}
\rho{\big({\U}^{{\prime\prime}}_{1},{\U}^{\prime\prime}_{2}\big)}
=
&\sum_{\substack{p,k_1,k_2,l_1,l_2\\\mu_1,\mu_2,\mu_3\\\nu_1,\nu_2,\nu_3}}\bigg(\eta^{(R)}_{k_2,0,\mu_1}\eta^{(F)}_{k_1,k_2,\mu_2}\eta^{(M)}_{p,k_1,\mu_3}\eta^{*(R)}_{l_2,0,\nu_1}\eta^{*(F)}_{l_1,l_2,\nu_2}\eta^{*(M)}_{p,l_1,\nu_3}\bigg)\textrm{ }  {P}_{\mu_3} {U}^{\prime\prime}_{2} \widehat{cZ}_{1} {P}_{\mu_2}  {U}^{\prime\prime}_{1} {P}_{\mu_1}\big({\rho}_{\textrm{in}}\big)
 {P}_{\nu_1} {U}^{\prime\prime\dagger}_{1}  {P}_{\nu_2} \widehat{cZ}_{1} {U}^{\prime\prime\dagger}_{2} {P}_{\nu_3}\textup{ .}
\end{align}
We will now describe how to apply the Pauli twirl Lemmas iteratively, in the order the operations apply on the input. Therefore, we start by showing how to eliminate terms of the sum where $\mu_1 \neq \nu_1$. Since $X$ stabilizes $\ket{+}$ states, we can rewrite ${\rho}_{\textrm{in}}$ as $\big(\otimes_iX_i^{\gamma_{i}}\big){\rho}_{\textrm{in}}\big(\otimes_iX_i^{\gamma_{i}}\big)$. Moreover, using $ {U}^{\prime\prime}_{1}={U}^{\prime}_{1}\big(\otimes_iX_i^{\gamma_{i}}\big)$, cfr. Routine 1, the above state becomes
\begin{align}
\rho{\big({\U}^{{\prime\prime}}_{1},{\U}^{\prime\prime}_{2}\big)}
=&\sum_{\substack{p,k_1,k_2,l_1,l_2\\\mu_1,\mu_2,\mu_3\\\nu_1,\nu_2,\nu_3}}\bigg(\eta^{(R)}_{k_2,0,\mu_1}\eta^{(F)}_{k_1,k_2,\mu_2}\eta^{(M)}_{p,k_1,\mu_3}\eta^{*(R)}_{l_2,0,\nu_1}\eta^{*(F)}_{l_1,l_2,\nu_2}\eta^{*(M)}_{p,l_1,\nu_3}\bigg)\cr
&{P}_{\mu_3} {U}^{\prime\prime}_{2} \widehat{cZ}_{1} {P}_{\mu_2}  {U}^{\prime}_{1}\big(\otimes_iX_i^{\gamma_{i}}\big) {P}_{\mu_1}\big(\otimes_iX_i^{\gamma_{i}}\big)\big({\rho}_{\textrm{in}}\big)\big(\otimes_iX_i^{\gamma_{i}}\big)
 {P}_{\nu_1}\big(\otimes_iX_i^{\gamma_{i}}\big) {U}^{\prime\dagger}_{1}  {P}_{\nu_2}  \widehat{cZ}_{1} {U}^{\prime\prime\dagger}_{2} {P}_{\nu_3}\textup{ .}
\end{align}
Summing over all possible $\gamma_i$ and applying the {Restricted Pauli Twirl} (the Pauli-$X$ components of both $ {P}_{\mu_1}$ and $ {P}_{\nu_1}$ stabilize $\rho_{\textup{in}}$ and can thus be ignored), we obtain a factor $\delta_{\mu_1,\nu_1}$, and thus the above state becomes
\begin{align}
\rho{\big({\U}^{{\prime}}_{1},{\U}^{\prime\prime}_{2}\big)}&=
\frac{1}{2^n}\sum_{\{\gamma_i\}}\rho{\big({\U}^{{\prime\prime}}_{1},{\U}^{\prime\prime}_{2}\big)}\cr
&=\sum_{\substack{p,k_1,k_2,l_1,l_2\\\mu_1,\mu_2,\mu_3\\\ \nu_2,\nu_3}}\bigg(\eta^{(R)}_{k_2,0,\mu_1}\eta^{(F)}_{k_1,k_2,\mu_2}\eta^{(M)}_{p,k_1,\mu_3}\eta^{*(R)}_{l_2,0,\mu_1}\eta^{*(F)}_{l_1,l_2,\nu_2}\eta^{*(M)}_{p,l_1,\nu_3}\bigg)  {P}_{\mu_3} {U}^{\prime\prime}_{2} \widehat{cZ}_{1} {P}_{\mu_2}  {U}^{\prime}_{1} {P}_{\mu_1}\big({\rho}_{\textrm{in}}\big)
 {P}_{\mu_1} {U}^{\prime\dagger}_{1}  {P}_{\nu_2} \widehat{cZ}_{1} {U}^{\prime\prime\dagger}_{2} {P}_{\nu_3}\textup{ .}\cr
\end{align}
To operate a Pauli twirl on $ {P}_{\mu_2}$ and $ {P}_{\nu_2}$, we rewrite $ {U}^{\prime}_{1}$ as $\big(\otimes_iZ^{\alpha_{i,1}}X^{\alpha'_{1,i}}\big) {U}^{}_{1}$ and $ {U}^{\prime\prime}_{2} \widehat{cZ}_{1}$ as ${U}^{\prime}_{2} \widehat{cZ}_{1}\big(\otimes_iX^{\alpha'_{1,i}}Z^{\alpha_{i,1}}\big)$, cfr. Routine 1. Summing over $\alpha_{i,1}$ and $\alpha_{i,1}'$ and using the {Pauli Twirl}, we obtain $\delta_{\mu_2,\nu_2}$, and thus
\begin{align}
\rho{\big({\U}^{{}}_{1},{\U}^{\prime}_{2}\big)}=&\frac{1}{2^{2n}}\sum_{\{\alpha_{i,1}\},\{\alpha'_{i,1}\}}\rho{\big({\U}^{{\prime}}_{1},{\U}^{\prime\prime}_{2}\big)}\cr
=&\sum_{\substack{p,k_1,k_2,l_1,l_2\\\mu_1,\mu_2,\mu_3\\\ \nu_3}}\bigg(\eta^{(R)}_{k_2,0,\mu_1}\eta^{(F)}_{k_1,k_2,\mu_2}\eta^{(M)}_{p,k_1,\mu_3}\eta^{*(R)}_{l_2,0,\mu_1}\eta^{*(F)}_{l_1,l_2,\mu_2}\eta^{*(M)}_{p,l_1,\nu_3}\bigg)  {P}_{\mu_3} {U}^{\prime}_{2} \widehat{cZ}_{1} {P}_{\mu_2}  {U}^{}_{1} {P}_{\mu_1}\big({\rho}_{\textrm{in}}\big)
 {P}_{\mu_1} {U}^{\dagger}_{1}  {P}_{\mu_2} \widehat{cZ}_{1} {U}^{\prime\dagger}_{2} {P}_{\nu_3}\textup{ .}\cr
\end{align}
To operate a Pauli twirl on $ {P}_{\mu_3}$ and $ {P}_{\nu_3}$ we write the state of the system after the measurements:
\begin{align}
\rho_{\textup{meas}}\big({\U}^{{}}_{1},&{\U}^{\prime}_{2}\big)
=\frac{1}{2^{n}}\sum_{\{s_i\}}\otimes_i\bigg(\langle+|_iZ_i^{s_i}\textup{ }\rho{\big({\U}^{{}}_{1},{\U}^{\prime}_{2}\big)}\textup{ } Z_i^{s_i}|+\rangle_i\bigg)\textup{ }Z_{i}^{s_i}|+\rangle_i\langle+|Z_i^{s_i}\cr
=&\frac{1}{2^n}\sum_{\substack{p,k_1,k_2,l_1,l_2\\\mu_1,\mu_2,\mu_3,\nu_3\\ s_1,\ldots,s_n}}\bigg(\eta^{(R)}_{k_2,0,\mu_1}\eta^{(F)}_{k_1,k_2,\mu_2}\eta^{(M)}_{p,k_1,\mu_3}\eta^{*(R)}_{l_2,0,\mu_1}\eta^{*(F)}_{l_1,l_2,\mu_2}\eta^{*(M)}_{p,l_1,\nu_3}\bigg)\times\cr
&\langle+|^{\otimes n}\big(\otimes_iZ_i^{s_i}\big)  {P}_{\mu_3} {U}^{\prime}_{2} \widehat{cZ}_{1} {P}_{\mu_2}  {U}^{}_{1} {P}_{\mu_1}\big({\rho}_{\textrm{in}}\big)
 {P}_{\mu_1} {U}^{\dagger}_{1}  {P}_{\mu_2}\widehat{cZ}_{1} {U}^{\prime\dagger}_{2} {P}_{\nu_3}\big(\otimes_iZ_i^{s_i}\big)|+\rangle^{\otimes n}\times\bigg(\otimes_iZ_{i}^{s_i}|+\rangle_i\langle+|Z_i^{s_i}\bigg)\nonumber\\\cr
=&\frac{1}{2^n}\sum_{\substack{p,k_1,k_2,l_1,l_2\\\mu_1,\mu_2,\mu_3,\nu_3\\ s_1,\ldots,s_n}}\bigg(\eta^{(R)}_{k_2,0,\mu_1}\eta^{(F)}_{k_1,k_2,\mu_2}\eta^{(M)}_{p,k_1,\mu_3}\eta^{*(R)}_{l_2,0,\mu_1}\eta^{*(F)}_{l_1,l_2,\mu_2}\eta^{*(M)}_{p,l_1,\nu_3}\bigg)\cr
&\times\langle+|^{\otimes n}\big(\otimes_iZ_i^{s_i}\big)  {P}_{\mu_3}\big(\otimes_iX_i^{\alpha'_{i,2}}Z_i^{\alpha_{i,2}}\big) {U}^{}_{2}\widehat{cZ}_{1} {P}_{\mu_2}  {U}^{}_{1} {P}_{\mu_1}\big({\rho}_{\textrm{in}}\big)
 {P}_{\mu_1} {U}^{\dagger}_{1}  {P}_{\mu_2}\widehat{cZ}_{1} {U}^{\dagger}_{2}\big(\otimes_iX_i^{\alpha'_{i,2}}Z_i^{\alpha_{i,2}}\big) {P}_{\nu_3}\big(\otimes_iZ_i^{s_i}\big)|+\rangle^{\otimes n}\nonumber\\\cr
&\bigg(\otimes_iZ_{i}^{s_i}|+\rangle_i\langle+|Z_i^{s_i}\bigg)\textup{ ,}
\end{align}
where in the second equality we used $ {U}^{\prime}_{2}=\big(\otimes_iX_i^{\alpha'_{i,2}}Z_i^{\alpha_{i,2}}\big) {U}^{}_{2}$ . We can now rewrite $\ket{+}^{\otimes n}$ as $\otimes_iX_i^{\alpha_{i,2}^{\prime\prime}}\ket{+}^{\otimes n}$ and sum over $\{\alpha_{i,2}'\}$. Using the {Restricted Pauli Twirl} (the Pauli-$X$ components of both $ {P}_{\mu_3}$ and $ {P}_{\nu_3}$ stabilize $|+\rangle^{\otimes n}$ and can thus be ignored), we obtain $\delta_{\mu_3,\nu_3}$: 
\begin{align}
\rho_{\textup{meas}}\big({\U}^{{}}_{1},{\U}^{}_{2}\big)=&\frac{1}{2^n}\sum_{\alpha_{i,2}'}\rho_{\textup{meas}}\big({\U}^{{}}_{1},{\U}^{\prime}_{2}\big)\cr
&
=\frac{1}{2^n}\sum_{\substack{p,k_1,k_2,l_1,l_2\\\mu_1,\mu_2,\mu_3,\nu_3\\ s_1,\ldots,s_n}}\bigg(\eta^{(R)}_{k_2,0,\mu_1}\eta^{(F)}_{k_1,k_2,\mu_2}\eta^{(M)}_{p,k_1,\mu_3}\eta^{*(R)}_{l_2,0,\mu_1}\eta^{*(F)}_{l_1,l_2,\mu_2}\eta^{*(M)}_{p,l_1,\mu_3}\bigg)\cr
&\times\langle+|^{\otimes n}\big(\otimes_iZ_i^{s_i}\big)  {P}_{\mu_3}\big(\otimes_iZ_i^{\alpha_{i,2}}\big) {U}^{}_{2}\widehat{cZ}_{1} {P}_{\mu_2}  {U}^{}_{1} {P}_{\mu_1}\big({\rho}_{\textrm{in}}\big)
 {P}_{\mu_1} {U}^{\dagger}_{1}  {P}_{\mu_2}\widehat{cZ}_{1} {U}^{\dagger}_{2}\big(\otimes_iZ_i^{\alpha_{i,2}}\big) {P}_{\mu_3}\big(\otimes_iZ_i^{s_i}\big)|+\rangle^{\otimes n}\nonumber\\\cr
&\times\otimes_iZ_{i}^{s_i}|+\rangle_i\langle+|Z_i^{s_i}
\end{align}
Finally, after the classical post-processing (which replaces the outputs $s_i$ with $s_i\oplus\alpha_{i,2}$), average over $\{\alpha_{i,2}\}$ yields the outcome state
\begin{align}~\label{eq:final}
\rho_{\textup{out}}\big({\U}^{{}}_{1},&\textup{ }{\U}^{}_{2}\big)=\frac{1}{2^n}\sum_{\{\alpha_{i,2}\}}\frac{1}{2^n}\sum_{\substack{p,k_1,k_2,l_1,l_2\\\mu_1,\mu_2,\mu_3\\ s_1,\ldots,s_n}}\bigg(\eta^{(R)}_{k_2,0,\mu_1}\eta^{(F)}_{k_1,k_2,\mu_2}\eta^{(M)}_{p,k_1,\mu_3}\eta^{*(R)}_{l_2,0,\mu_1}\eta^{*(F)}_{l_1,l_2,\mu_2}\eta^{*(M)}_{p,l_1,\mu_3}\bigg)\cr
&\times\langle+|^{\otimes n}\big(\otimes_iZ_i^{s_i}\big)  {P}_{\mu_3}
\big(\otimes_iZ_i^{\alpha_{i,2}}\big) {U}^{}_{2}\widehat{cZ}_{1} {P}_{\mu_2}  {U}^{}_{1} {P}_{\mu_1}\big({\rho}_{\textrm{in}}\big)
 {P}_{\mu_1} {U}^{\dagger}_{1}  {P}_{\mu_2}\widehat{cZ}_{1} {U}^{\dagger}_{2}\big(\otimes_iZ_i^{\alpha_{i,2}}\big) {P}_{\mu_3}\big(\otimes_iZ_i^{s_i}\big)|+\rangle^{\otimes n}\nonumber\\\cr
&\times\otimes_iZ_{i}^{s_i\oplus\alpha_{i,2}}|+\rangle_i\langle+|Z_i^{s_i\oplus\alpha_{i,2}}\cr
&=\frac{1}{2^n}\sum_{\{\alpha_{i,2}\}}\frac{1}{2^n}\sum_{\substack{p,k_1,k_2,l_1,l_2\\\mu_1,\mu_2,\mu_3\\ s_1,\ldots,s_n}}\bigg(\eta^{(R)}_{k_2,0,\mu_1}\eta^{(F)}_{k_1,k_2,\mu_2}\eta^{(M)}_{p,k_1,\mu_3}\eta^{*(R)}_{l_2,0,\mu_1}\eta^{*(F)}_{l_1,l_2,\mu_2}\eta^{*(M)}_{p,l_1,\mu_3}\bigg)\cr
&\times\langle+|^{\otimes n}\big(\otimes_iZ_i^{s_i\oplus\alpha_{i,2}} \big) {P}_{\mu_3}\big(\otimes_iZ_i^{\alpha_{i,2}}\big) {U}^{}_{2}\widehat{cZ}_{1} {P}_{\mu_2}  {U}^{}_{1} {P}_{\mu_1}\big({\rho}_{\textrm{in}}\big)
 {P}_{\mu_1} {U}^{\dagger}_{1}  {P}_{\mu_2}\widehat{cZ}_{1} {U}^{\dagger}_{2}\big(\otimes_iZ_i^{\alpha_{i,2}}\big) {P}_{\mu_3}\big(\otimes_iZ_i^{s_i\oplus\alpha_{i,2}}\big)|+\rangle^{\otimes n}\nonumber\\\cr
&\times\otimes_iZ_{i}^{s_i}|+\rangle_i\langle+|Z_i^{s_i}\cr
&=\frac{1}{2^n}\sum_{\substack{\mu_1,\mu_2,\mu_3\\ s_1,\ldots,s_n}}\bigg(\sum_{\substack{p,k_1,k_2,l_1,l_2}}\eta^{(R)}_{k_2,0,\mu_1}\eta^{(F)}_{k_1,k_2,\mu_2}\eta^{(M)}_{p,k_1,\mu_3}\eta^{*(R)}_{l_2,0,\mu_1}\eta^{*(F)}_{l_1,l_2,\mu_2}\eta^{*(M)}_{p,l_1,\mu_3}\bigg)\cr
&\times\langle+|^{\otimes n}\big(\otimes_iZ_i^{s_i}\big)  {P}_{\mu_3} {U}^{}_{2}\widehat{cZ}_{1} {P}_{\mu_2}  {U}^{}_{1} {P}_{\mu_1}\big({\rho}_{\textrm{in}}\big)
 {P}_{\mu_1} {U}^{\dagger}_{1}  {P}_{\mu_2}\widehat{cZ}_{1} {U}^{\dagger}_{2} {P}_{\mu_3}\big(\otimes_iZ_i^{s_i}\big)|+\rangle^{\otimes n}\nonumber\\\cr
&\times\otimes_iZ_{i}^{s_i}|+\rangle_i\langle+|Z_i^{s_i}\cr
&=\frac{1}{2^n}\sum_{\substack{\mu_1,\mu_2,\mu_3\\ s_1,\ldots,s_n}}\phi_{\mu_1,\mu_2,\mu_3}\textup{ }\langle+|^{\otimes n}\big(\otimes_iZ_i^{s_i}\big)  {P}_{\mu_3} {U}^{}_{2}\widehat{cZ}_{1} {P}_{\mu_2}  {U}^{}_{1} {P}_{\mu_1}\big({\rho}_{\textrm{in}}\big)
 {P}_{\mu_1} {U}^{\dagger}_{1}  {P}_{\mu_2}\widehat{cZ}_{1} {U}^{\dagger}_{2} {P}_{\mu_3}\big(\otimes_iZ_i^{s_i}\big)|+\rangle^{\otimes n}\nonumber\\\cr
&\times\otimes_iZ_{i}^{s_i}|+\rangle_i\langle+|Z_i^{s_i}\textup{ ,}
\end{align}
where 
\begin{align}
\phi_{\mu_1,\mu_2,\mu_3}=\sum_{\substack{p,k_1,k_2,l_1,l_2}}\eta^{(R)}_{k_2,0,\mu_1}\eta^{(F)}_{k_1,k_2,\mu_2}\eta^{(M)}_{p,k_1,\mu_3}\eta^{*(R)}_{l_2,0,\mu_1}\eta^{*(F)}_{l_1,l_2,\mu_2}\eta^{*(M)}_{p,l_1,\mu_3}=\sum_p\bigg|\sum_{\substack{k_1,k_2}}\eta^{(R)}_{k_2,0,\mu_1}\eta^{(F)}_{k_1,k_2,\mu_2}\eta^{(M)}_{p,k_1,\mu_3}\bigg|^2\geq0\textup{}
\end{align}
and $\sum_{\mu_1,\mu_2,\mu_3}\phi_{\mu_1,\mu_2,\mu_3}=1$. $\rho_{\textup{out}}\big({\U}_{1},\textup{ }{\U}^{}_{2}\big)$ is therefore a convex combination of quantum states and $\phi_{\mu_1,\mu_2,\mu_3}$ can be seen as the joint probability of Pauli errors ${P}_{\mu_1},{P}_{\mu_2}$ and ${P}_{\mu_3}$. We can thus rewrite
\begin{align}
{\rho}_{\textup{out}}\big({\U}_{1}&,{\U}_{2}\big)
=\sum_{\substack{{\p}_{0},{\p}_{1},{\p}_{2}\\s_1,\ldots,s_n}}\frac{
\textup{prob}\big({\p}_{0},{\p}_{1},{\p}_{2}\big)}{2^{n}}\textup{ }\textup{ }\langle+|^{\otimes n}\bigg[\Z^{\overline{s}}{\p}_{2}\textup{ }{\U}_{2}\textup{ }{\C\Z}_{1}\textup{ }{\p}_{1}\textup{ }{\U}_{1}\textup{ }{\p}_{0}\big(\rho_{\textup{in}}\big)\bigg]|+\rangle^{\otimes n}\bigg(\otimes_{i} Z_i^{s_i}|+\rangle_{i}\langle+|Z_i^{s_i}\bigg)\textup{ ,}
\end{align}
where $\p_0,\p_2\in\{\I,\Z\}^{\otimes n}$, $\p_1\in\{\I,,\X,\Y,\Z\}^{\otimes n}$ and $\textup{prob}\big({\p}_{0},{\p}_{1},{\p}_{2}\big)$ is the joint probability of Pauli errors ${\p}_{0},{\p}_{1},{\p}_{2}$. This concludes the proof for the protocol with $v=0$ and $m=2$.

The generalization to a protocol with $v=0$ and $m>2$ is straightforward. Starting from the state in Equation \ref{eq:rhomultiplebands}, one can use the same arguments as for the two-band circuit. To generalize to multiple circuits ($v>0$), we start by noticing that the circuits are implemented in series, hence the noise can only affect one circuit at a time. By the principle of deferred measurements, we {can execute all} the measurements at the end of the protocol. Moreover, {we can prepare the input qubits} for all the circuits at the beginning of the protocol. Doing this, the state of the system after all circuits have been implemented becomes
\begin{small}
\begin{align}
&\rho{\big({\U}^{\prime\prime(1)}_{1},\ldots,{\U}^{\prime\prime(v+1)}_{m}\big)}
=\cr
\textup{Tr}_{\textup{E}}\bigg[&\bigg(\mathbf{ {M}}^{(v+1)}\textrm{ } {U}^{\prime\prime(v+1)}_{m}\widehat{cZ}_{m-1}\mathbf{ {F}}^{(v+1)}_{m-1}  {U}^{\prime\prime(v+1)}_{m-1}\ldots\widehat{cZ}_{1}\mathbf{ {F}}_{v+1} ^{(1)} {U}^{\prime\prime(v+1)}_{1}\mathbf{ {R}}^{(v+1)}\bigg)\ldots\bigg(\mathbf{ {M}}^{(1)}\textrm{ } {U}^{\prime\prime(1)}_{m}\widehat{cZ}_{m-1}\mathbf{ {F}}^{(1)}_{m-1}  {U}^{\prime\prime(1)}_{m-1}\ldots\widehat{cZ}_{1}\mathbf{ {F}}_{1} ^{(1)} {U}^{\prime\prime(1)}_{1}\mathbf{ {R}}^{(1)}\bigg)\cr
&\bigg({\rho}_{\textrm{in}}\otimes\rho_E\bigg)\bigg(\mathbf{ {R}}^{\dagger(1)} {U}^{\prime\prime\dagger(1)}_{1} \mathbf{ {F}}_{1}^{\dagger(1)} \widehat{cZ}_{1}\ldots {U}^{\prime\prime\dagger(1)}_{m-1} \mathbf{ {F}}_{m-1}^{\dagger(1)} \widehat{cZ}_{m-1} {U}^{\prime\prime\dagger(1)}_{m}\mathbf{ {M}}^{\dagger(1)}\bigg)\ldots\bigg(\mathbf{ {R}}^{\dagger(v+1)} {U}^{\prime\prime\dagger(v+1)}_{1} \mathbf{ {F}}_{1}^{\dagger(v+1)} \widehat{cZ}_{1}\ldots {U}^{\prime\prime\dagger(v+1)}_{m-1} \mathbf{ {F}}_{m-1}^{\dagger(v+1)}\cr
&\widehat{cZ}_{m-1} {U}^{\prime\prime\dagger(v+1)}_{m}\mathbf{ {M}}^{\dagger(v+1)}\bigg)\bigg]\cr
\end{align}
\end{small}
where $\rho_{\textup{in}}=\otimes_{k=1}^{v+1}\otimes_{i=1}^n|+\rangle^{(k)}_i\langle+|$, $\mathbf{ {R}}^{(k)},\mathbf{ {F}}^{(k)}_{j},\mathbf{ {M}}^{(k)}$ are unitary matrices that act only on the qubits in the $k$-th circuit and on the environment (which is the same for all the circuits) and ${U}^{\prime\prime(k)}_{j}=\otimes_i U_{i,j}^{\prime\prime(k)}$. Starting from here and using the same arguments as above, one can finally obtain Equation \ref{eq:rhoout}.
\end{proof}

\section{Statement and Proof of Lemma 2}~\label{subapp:lem2}
Lemma \ref{lem:pauliz} shows that in our accreditation protocol the noise of the form {N1} can be reduced to classically correlated collections of Pauli errors affecting the circuits. In this Appendix we prove Lemma \ref{lem:ct}, which states that all collections of Pauli errors can be detected with probability larger than 1/4. More formally:
\begin{lemma}~\label{lem:ct}
For any collection of Pauli errors affecting a trap circuit, summing over all possible single-qubit gates in the trap circuit (i.e. over all possible sets $\{U_{i,j}\}$ output by Routine 2), the probability that the trap circuit outputs $\overline{s}=\overline{0}$ is at most 3/4.  
\end{lemma}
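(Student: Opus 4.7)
My first step is to apply Lemma~\ref{lem:pauliz} to replace the noisy trap circuit with the noiseless trap punctured by a collection of classically correlated Pauli errors $(\p_{0},\p_{1},\ldots,\p_{m})$, with $\p_{0},\p_{m}\in\{I,Z\}^{\otimes n}$ and $\p_{j}\in\{I,X,Y,Z\}^{\otimes n}$ for $1\leq j\leq m-1$, and to fix an arbitrary such collection that is not uniformly the identity. Since every gate in the trap is Clifford (single-qubit $H$, $S$ or their inverses coming from Routine~\hyperlink{r2}{2}, the $cZ$ gates inherited from the target, and the optional opening/closing $H^{t}$ layers), I can commute every Pauli error past the subsequent gates to obtain a single effective Pauli $P^{\textup{eff}}$ sitting immediately before the final round of Pauli-$X$ measurements. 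In the absence of these errors the trap deterministically outputs $\overline{0}$, so the actual outcome is $\overline{s}=\overline{0}$ if and only if $P^{\textup{eff}}\,\ket{+}^{\otimes n}$ is proportional to $\ket{+}^{\otimes n}$, which happens exactly when $P^{\textup{eff}}\in\pm\{I,X\}^{\otimes n}$: any $Y$ or $Z$ factor on any qubit flips that qubit's Pauli-$X$ outcome.

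It therefore suffices to show $\Pr[P^{\textup{eff}}\in\pm\{I,X\}^{\otimes n}]\leq 3/4$ over the randomness in Routine~\hyperlink{r2}{2}. I would split on the smallest index $j^{\star}$ for which $\p_{j^{\star}}\neq I^{\otimes n}$. The case $j^{\star}=m$ is immediate: $\p_{m}$ sits directly before the measurement and is a non-trivial element of $\{I,Z\}^{\otimes n}$, so $P^{\textup{eff}}=\p_{m}$ carries a $Z$ and the trap fails with probability one. For $j^{\star}=0$, I would use the bit $t$ that toggles the boundary $H^{t}$ layers (employing $H Z H = X$) together with an iteration-1 $\{H,S\}$ coin on a qubit carrying a $Z$-error; the joint randomization across these two coins produces four candidate Paulis for $P^{\textup{eff}}$, and a short computation with the Pauli commutation relations shows that at most three of them can lie in $\pm\{I,X\}^{\otimes n}$. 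For $1\leq j^{\star}\leq m-1$, I would pick a qubit $i$ with $\p_{j^{\star},i}\neq I$ and condition on all Routine~\hyperlink{r2}{2} randomness except the iteration-$j^{\star}$ coin selecting the single-qubit gate on qubit $i$ in band $j^{\star}$; the conjugation table
\begin{align*}
HXH&=Z,& HYH&=-Y,& HZH&=X,\\
SXS^{\dagger}&=Y,& SYS^{\dagger}&=-X,& SZS^{\dagger}&=Z,
\end{align*}
shows that the two options always map $\p_{j^{\star},i}$ to distinct single-qubit Paulis (even up to sign). Because the remaining conditioned circuit acts bijectively on Paulis, the two resulting effective Paulis differ and at most one lies in $\pm\{I,X\}^{\otimes n}$.

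The main obstacle will be the bookkeeping in the case $1\leq j^{\star}\leq m-1$ when qubit $i$ is $cZ$-linked to some partner $i^{\prime}$ in band $j^{\star}$. Here the iteration-$j^{\star}$ joint coin couples the randomization on $i$ and $i^{\prime}$ by drawing $(U_{i,j^{\star}},U_{i^{\prime},j^{\star}})$ from $\{(H_{i}\otimes S_{i^{\prime}}),(S_{i}\otimes H_{i^{\prime}})\}$, and the $\C\Z_{j^{\star}}$ immediately following $\p_{j^{\star}}$ further mixes the two qubits through $cZ(X\otimes I)cZ=X\otimes Z$ and $cZ(Z\otimes I)cZ=Z\otimes I$. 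I would enumerate the fifteen non-trivial joint Paulis $\p_{j^{\star},i}\otimes\p_{j^{\star},i^{\prime}}$ and verify from the conjugation tables that the two joint gate choices always produce distinct post-conjugation two-qubit Paulis; a parallel argument handles the multi-qubit sub-case of $j^{\star}=0$. These sub-cases are mechanical once the tables are laid out, and reassembling them delivers the uniform upper bound $3/4$ stated in the lemma.
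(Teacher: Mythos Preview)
Your argument has a genuine gap in the case $1\leq j^{\star}\leq m-1$. The step ``the two resulting effective Paulis differ and at most one lies in $\pm\{I,X\}^{\otimes n}$'' is a non sequitur: two \emph{distinct} $n$-qubit Paulis can both lie in $\pm\{I,X\}^{\otimes n}$. What you would actually need is that their \emph{ratio} $D(P_1P_2^{-1})$ (the Clifford image, under the frozen remainder of the circuit, of the single-qubit Pauli $P_1P_2^{-1}$) is never in $\pm\{I,X\}^{\otimes n}$, and this is simply false. A one-qubit, three-band counterexample already kills the argument: take $\p_0=\p_3=I$ and $\p_1=\p_2=X$, so $j^{\star}=1$; condition on $t=0$ and $\V_2=H$. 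Then $P^{\textup{eff}}=(H^{0}\V_2^{-1}X\V_2 H^{0})(H^{0}\V_1^{-1}X\V_1 H^{0})=Z\cdot(\V_1^{-1}X\V_1)$, which equals $Z\cdot Z=I$ for $\V_1=H$ and $Z\cdot Y\propto X$ for $\V_1=S$. Both choices give $P^{\textup{eff}}\in\pm\{I,X\}$, so the conditional probability of $\overline{s}=\overline{0}$ is $1$, not $\leq 1/2$. (Averaging over $t$ and $\V_2$ does bring this example down to exactly $3/4$, but your conditioning has already thrown $t$ away.) The underlying issue is that when several bands carry errors, the later errors can align with the randomised image of $\p_{j^{\star}}$, and a single $\V_{j^{\star}}$ coin cannot break that alignment on its own; the $t$-randomness is essential and you have frozen it.

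The paper's proof avoids this by a different decomposition: it inducts on the \emph{number of bands} carrying non-trivial errors. Single-band collections are shown (Statement~1) to be detected with probability at least $1/2$, using the $t$-coin. For two-band collections (Statement~2), one error is commuted to the other and merged; the $\V_j$-coin at the meeting band guarantees that the merged Pauli is non-trivial with probability at least $1/2$, reducing to the single-band case and yielding $\tfrac{1}{2}\cdot 1+\tfrac{1}{2}\cdot\tfrac{1}{2}=\tfrac{3}{4}$. Higher-band collections then follow by iterating the merge. The point is that the $\V_j$-coins and the $t$-coin are spent on \emph{different} subproblems (merging vs.\ final detection), which is exactly what your single-coin conditioning cannot reproduce.
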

\begin{proof}
For a given collection of Pauli errors $\{\p_{j}\}_{j=0}^m$ affecting a trap circuit, the state of the trap circuit after the measurements is of the form
\begin{equation}
\rho_{\textup{out}}^{\textup{trap}}\big(\{\p_{j}\}\big)=\frac{1}{M_1\times\cdots\times M_{m-1}}\sum_{\substack{\U_{1},\cdots,\U_{m}}}\p_{m}\U_m\circ\bigg(\circ_{j=1}^{m-1}
\C\Z_{j}\p_{j}\U_{j}\bigg)\circ
\p_{0}\big(\rho_{\textup{in}}^{\textup{trap}}\big)\textup{ ,}
\end{equation}
where $\rho_{\textup{in}}^{\textup{trap}}=\otimes_{i=1}^n|+\rangle_i\langle+|$, $\C\Z_{j}$ is the entangling operation in band $j$, $\p_{0},\p_{m}\in\{\I,\Z\}^{\otimes n}$, $\p_{j}\in\{\I,\X,\Y,\Z\}^{\otimes n}$ for all $j=1,\dots,m$ and $M_j$ is the number of choices of $\U_{j}$. Note that each number $M_j$ depends on the number of qubits connected by a $cZ$ in band $j$ of the trap circuit, cfr. Routine 2.

In a trap circuit the gate $\U_{1}$ in the first band is of the form $\U_{1}={\V}_{1}\h^t$, where ${\V}_{1}$ implements a gate from $\{H,S\}^{\otimes n}$ (cfr. step 2.1 of Routine 2) and $\h^t$ is the round of Hadamard gates activated at random (cfr. step 3 of Routine 2). Similarly, for all $j=2,\ldots,m-1$, $\U_{j}$ implements a gate belonging to the set $\{I,HS^{\dagger},SH\}^{\otimes n}$. These gates undo the gates in previous band and implement new ones (cfr. step 2.1 Routine 2 and Figure \ref{fig:trap}), thus we can write them as $\U_{j}=\V_{j}\V^{-1}_{j-1}$|with each $\V_{j}$ implementing a gate from the set $\{H,S\}^{\otimes n}$. Finally, the gate $\U_{m}$ in the last band is of the form $\U_{m}={\h}^t{\V}_{m-1}^{-1}$, where ${\V}_{m-1}^{-1}$ implements a gate from $\{H,S^\dagger\}^{\otimes n}$ and undoes the gate in band $m-1$ (cfr. step 2.2 of Routine 2). Using this, we obtain
\begin{equation}
\rho_{\textup{out}}^{\textup{trap}}\big(\{\p_{j}\}\big)=\frac{1}{2}\sum_{\substack{t=0,1}}\frac{1}{N_1\times\cdots\times N_{m-1}}\sum_{\substack{\V_{1},\cdots,\V_{m-1}}}\p_{m}\h^t\circ\bigg(\circ_{j=1}^{m-1}
\V_{j}^{{-1}}\C\Z_{j}\p_{j}\V_{j}\bigg)
\circ\h^t\p_{0}\big(\rho_{\textup{in}}^{\textup{trap}}\big)\textup{ ,}
\end{equation}
where $N_j$ is the number of possible choices of $\V_{j}$.

Using that $\V_{j}^{-1}\C\Z_{j}\V_{j}=\C\X_{j}$ is a tensor product of $cX$ gates, the above state can also be rewritten as
\begin{equation}
\rho_{\textup{out}}^{\textup{trap}}\big(\{\p_{j}\}\big)=\frac{1}{2}\sum_{\substack{t=0,1}}\frac{1}{N_1\times\cdots\times N_{m-1}}\sum_{\substack{\V_{1},\cdots,\V_{m-1}}}\p_{m}\h^t\circ\bigg(\circ_{j=1}^{m-1}
\C\X_{j}\V_{j}^{{-1}}\p_{j}\V_{j}\bigg)
\circ\h^t\p_{0}\big(\rho_{\textup{in}}^{\textup{trap}}\big)\textup{ .}
\end{equation}
Notice that each $\C\X_j$ carries an implicit dependency on $\V_j$ (the orientation of the $cX$ gates depends on $\V_j$, cfr. Figure \ref{fig:trap}). 

The probability that the trap outputs $\overline{s}=\overline{0}$ is
\begin{equation}
\textup{prob}\big(\overline{s}=\overline{0}\textrm{ $|$ }\{\p_{j}\}\big)={}\langle+|^{\otimes n}\rho_{\textup{out}}^{\textup{trap}}\big(\{\p_{j}\}\big)|+\rangle^{\otimes n}\textup{ .}
\end{equation}
To upper-bound this probability by 3/4, we first consider ``single-band'' collections of errors, namely collections $\{\p_{j}\}$ such that $\p_{j_0}\neq\I$ for some $j_0\in\{0,\ldots,m\}$ and $\p_{j}=\I$ for all other $j\neq j_0$. For these collections, we prove that the probability that the output of the trap is the correct one $\overline{s}=\overline{0}$ is smaller than 1/2:
\begin{equation}~\label{eq:h-1}
\textup{prob}(\overline{s}=\overline{0}\textup{ $|$  single-band collection})\leq \frac{1}{2}
\end{equation}
We prove this in {Statement 1}. 

Next, we consider ``two-band'' collections of errors. We obtain
\begin{equation}~\label{eq:h-2}
\textup{prob}(\overline{s}=\overline{0}\textup{ $|$  two-band collection})\leq\frac{3}{4}
\end{equation}   
We prove this in {Statement 2}. To obtain this bound, we \textit{move} the two errors towards each other (i.e. we commute them with all the gates in the middle) and subsequently \textit{merge} them, rewriting them as a single Pauli operator. The resulting Pauli operator is the identity $\I$ with probability $c$, or is a different operator with probability $1-c$. In the former case, the errors have canceled out with each other, while in the latter they have reduced to a single-band error. Importantly, in {Statement 2} we prove that $c\leq1/2$. This yields
\begin{align}
\textup{prob}(\overline{s}=\overline{0}\textup{ $|$  two-band collection})&=(1-c)\textup{prob}(\overline{s}=\overline{0}\textup{ $|$  single-band collection})+c\textup{ }\textup{prob}(\overline{s}=\overline{0}\textup{ $|$  no error})\cr
&\leq \frac{1-c}{2}+c=\frac{1}{2}+\frac{c}{2}\textup{ ,}
\end{align}
where we used $\textup{prob}(\overline{s}=\overline{0}\textup{ $|$  no error})=1$ and $\textup{prob}(\overline{s}=\overline{0}\textup{ $|$  single-band collection})\leq1/2$. Maximizing over $c\in[0,1/2]$, we find
\begin{equation}
\textup{prob}(\overline{s}=\overline{0}\textup{ $|$  two-band collection})\leq\max_{0\leq c\leq\frac{1}{2}}\bigg(\frac{1}{2}+\frac{c}{2}\bigg)=\frac{3}{4}
\end{equation}
Finally, we generalise to collections affecting more than two bands. For three-band collections, again we move two of these Pauli operators towards each other and merge them. Doing this, the three-band collection reduces to a single-band one with probability $c\leq1/2$ or to a two-band one with probability $1-c$. Thus, using the above results, we have
\begin{align}
\textup{prob}(\overline{s}=\overline{0}\textup{ $|$  three-band collection})&=(1-c)\textup{prob}(\overline{s}=\overline{0}\textup{ $|$  two-band collection})+c\textup{ }\textup{prob}(\overline{s}=\overline{0}\textup{ $|$  single-band collection})\cr
&\leq \frac{3(1-c)}{4}+\frac{c}{2}\leq\max_{0\leq c\leq\frac{1}{2}}\bigg(\frac{3(1-c)}{4}+\frac{c}{2}\bigg)=\frac{3}{4}
\end{align}
This argument can be iterated: at any fixed $h$, if $\textup{prob}(\overline{s}=\overline{0}\textup{ $|$  ($h-$2)-band collection})\leq3/4$ and $\textup{prob}(\overline{s}=\overline{0}\textup{ $|$  ($h-$1)-band collection})\leq3/4$, then it can be easily shown that $\textup{prob}(\overline{s}=\overline{0}\textup{ $|$  $h$-band collection})\leq3/4$.
We now complete the proof by proving {Statement 1} and {Statement 2}. \\

\noindent\textbf{Statement 1.} Single-band collections are defined as follows:
\begin{equation}
\p_{j}\neq \I\textup{ for }j=j_0\in\{0,\cdots,m\}\textup{ , }\p_{j}= \I\textup{ for all }j\neq j_0.
\end{equation}
If $j_0=0$, using $cX|++\rangle=|++\rangle$ and $cX|00\rangle=|00\rangle$, we have
\begin{align}
\textup{prob}\big(\overline{s}=\overline{0}\textrm{ $|$ }\{\p_{j}\}\big)=&\frac{1}{2}\sum_{t=0,1}{}\langle+|^{\otimes n}\h^t\circ\bigg(\circ_{j=1}^m\C\X_{j}\bigg)\circ \h^t\p_{0}|+\rangle^{\otimes n}\cr
=&\langle+|^{\otimes n}\p_{0}|+\rangle^{\otimes n}=0\textup{ ,}
\end{align}
since $\p_{0}\neq\I\in\{\I,\Z\}^{\otimes n}$, and the same happens for $\p_{m}\neq\I$. 

If $1\leq j_0\leq m-1$, we have
\begin{equation}
\textup{prob}\big(\overline{s}=\overline{0}\textrm{ $|$ }\{\p_{j}\}\big)=\frac{1}{N_{j_0}}\sum_{\V_{j_0}}\frac{1}{2}\sum_{t=0,1}{}\langle+|^{\otimes n}\h^t\V_{j_0}^{-1}\p_{j_0}\V_{j_0}\h^t|+\rangle^{\otimes n}\textup{ ,}
\end{equation}
where we used again that $cX|++\rangle=|++\rangle$ and $cX|00\rangle=|00\rangle$. Notice that $\langle+|^{\otimes n}\p|+\rangle^{\otimes n}=0$ for all Pauli operators $P$ whose Pauli-$Z$ component is non-trivial, therefore $\sum_t\langle+|^{\otimes n}\h^t\p\h^t|+\rangle^{\otimes n}/2\leq1/2$ for all $\p\in\{\I,\X,\Y,\Z\}^{\otimes n}/\I$. This yields
\begin{equation}
\textup{prob}\big(\overline{s}=\overline{0}\textrm{ $|$ }\{\p_{j}\}\big)\leq\frac{1}{N_{j_0}}\sum_{\V_{j_0}}\frac{1}{2}\leq\frac{1}{2}\textrm{ ,}
\end{equation}
where we used that $\V_{j_0}^{-1}\p_{j_0}\V_{j_0}$ is a Pauli operator for any $\V_{j_0}$, as this $\V_{j_0}$ implements a gate from the set $\{H,S\}^{\otimes n}$.\\

\noindent\textbf{Statement 2.} Two-band collections are defined as follows:
\begin{equation}
\p_{j}\neq \I\textup{ for }j=j_1,j_2\in\{0,\cdots,m\} \textup{ (with }j_1< j_2\textup{)}\textup{ , }\p_{j}= \I\textup{ for all }j\neq j_1,j_2.
\end{equation}
We can distinguish four classes of two-band collections:\\

\indent 1) Errors in state preparation and entangling gates, i.e. $j_1=0$ and $1\leq j_2\leq m-1$.

\indent 2) Errors in entangling gates and measurements, i.e. $1\leq j_1\leq m-1$ and $j_2=m$.

\indent 3) Errors in two different entangling gates, i.e. $1\leq j_1<j_2\leq m-1$.
 
\indent 4) Errors in state preparation and measurements, i.e. $j_1=0$ and $j_2=m$.\\

\noindent Errors in class 1 yield $\overline{s}=\overline{0}$ with probability at most 3/4. To prove this, we start by rewriting this probability as
\begin{align}
\textup{prob}\big(\overline{s}=\overline{0}\textrm{ $|$ }\{{P}_{j}\}\big)=&\frac{1}{2}\sum_{\substack{t=0,1}}\frac{1}{N_1\times\cdots\times N_{m-1}}\sum_{\substack{\V_{1},\cdots,\V_{m-1}}}\langle+|^{\otimes n}\h^t\circ\bigg(\circ_{j=j_2}^{m-1}
\C\X_{j}\bigg)\circ\V_{j_2}^{{-1}}\p_{j_2}
\V_{j_2}\cr
&\circ\bigg(\circ_{j=1}^{j_2-1}
\C\X_{j}\bigg)
\circ\h^t\p_{0}\big(\rho_{\textup{in}}^{\textup{trap}}
\big)|+\rangle^{\otimes n}\textup{}\cr
=&\frac{1}{2}\sum_{\substack{t=0,1}}\frac{1}{N_1\times\cdots\times N_{m-1}}\sum_{\substack{\V_{1},\cdots,\V_{j_2}}}\langle+|^{\otimes n}\h^t\V_{j_2}^{{-1}}\p_{j_2}
\V_{j_2}\circ\bigg(\circ_{j=1}^{j_2-1}
\C\X_{j}\bigg)
\circ\h^t\p_{0}\big(\rho_{\textup{in}}^{\textup{trap}}
\big)|+\rangle^{\otimes n}\textup{ ,}\cr
\end{align}
where we used $cX|++\rangle=|++\rangle$ and $cX|00\rangle=|00\rangle$. We now start from the case $j_2=1$. We then note that (i) if $n=1$ (single-qubit circuit), all $\p_1\in\{\X,\Y,\Z\}$ and all $\p_0=\Z$ lead to
\begin{align}
\label{ineq:multi1}
\textup{prob}\big(\overline{s}=\overline{0}\textrm{ $|$ }\{{P}_{j}\}\textup{, 1 qubit}\big)=&\frac{1}{2}\sum_{\substack{t=0,1}}\frac{1}{N_1}\sum_{\substack{\V_{1}\in\{\h,\s\}}}\langle+|^{}\h^t\V_1^{-1}\p_1\V_1
\h^t\p_{0}\big(\rho_{\textup{in}}^{\textup{trap}}\big)|+\rangle^{}\leq\frac{3}{4}
\textup{ ,}
\end{align}
and (ii) if $n=2$ (two-qubit circuit) and in band $1$ the two qubits are connected by $cZ$, all $\p_1\neq\I\in\{\I,\X,\Y,\Z\}^{\otimes 2}$ and all $\p_0\neq\I\in\{\I,\Z\}^{\otimes 2}$ lead to
\begin{align}
\label{ineq:multi2}
\textup{prob}\big(\overline{s}=\overline{0}\textrm{ $|$ }\{{P}_{j}\}\textup{, 2 qubits}\big)=&\frac{1}{2}\sum_{\substack{t=0,1}}\frac{1}{N_1}\sum_{\substack{\V_{1}\in\{\h\otimes\s,\s\otimes\h\}}}\langle+|^{\otimes 2}\h^t\V_1^{-1}\p_1\V_1
\h^t\p_{0}\big(\rho_{\textup{in}}^{\textup{trap}}\big)|+\rangle^{\otimes 2}\leq\frac{3}{4}
\textup{ .}
\end{align}
The above inequalities for $n=1$ and $n=2$ can be proven using that $H$ maps $\{X,Y,Z\}$ into $\{Z,Y,X\}$ under conjugation and $S$ maps $\{X,Y,Z\}$ into $\{Y,X,Z\}$ under conjugation (apart from unimportant global phases). Extension to more than two qubits is as follows: If $\textup{prob}\big(\overline{s}=\overline{0}\textrm{ $|$ }\{{P}_{j}\}\textup{ , $n_0$ qubits}\big)\leq3/4$, then tensoring one more qubit yields
\begin{align}
\textup{prob}\big(\overline{s}=\overline{0}&\textrm{ $|$ }\{{P}_{j}\}\textup{, $n_0+1$ qubits}\big)=\cr
&\frac{1}{2}\sum_{\substack{t=0,1}}\frac{1}{N_1}\sum_{\substack{\V_{1}}}\bigg(\langle+|^{\otimes n_0}\h^t_{1,\ldots,n_0}\V_{1|1,\ldots,n_0}^{-1}\p_{1|1,\ldots,n_0}\V_{1|1,\ldots,n_0}
\h^t_{1,\ldots,n_0}\p_{0|1,\ldots,n_0}\big(\rho_{\textup{in}}^{\textup{trap}}\big)|+\rangle^{\otimes n_0}\times\cr
&\langle+|^{}\h^t_{n_0+1}\V_{1|n_0+1}^{-1}\p_{1|n_0+1}\V_{1|n_0+1}
\h^t_{n_0+1}\p_{0|n_0+1}\big(\rho_{\textup{in}}^{\textup{trap}}\big)|+\rangle^{}\bigg)\textup{ ,}
\end{align}
where $\h^t_{1,\ldots,n_0}, \V_{1|1,\ldots,n_0},\p_{1|1,\ldots,n_0}$ and $\p_{0|1,\ldots,n_0}$ are the components of $\h^t, \V_{1},\p_{1}$ and $\p_{0}$ acting on qubits $\{1,\ldots,n_0\}$ and $\h^t_{n_0+1}, \V_{1|n_0+1},\p_{1|n_0+1}$ and $\p_{0|n_0+1}$ the components acting on qubit $n_0+1$. Using that if $A_h,B_h\geq0\textup{ $\forall$ }h$, then $\sum_{h}A_hB_h\leq\sum_{h}A_h\sum_{h}B_h$, we obtain
\begin{align}
\textup{prob}\big(\overline{s}=\overline{0}&\textrm{ $|$ }\{{P}_{j}\}\textup{, $n_0+1$ qubits}\big)\leq\cr
&\frac{1}{2}\sum_{\substack{t=0,1}}\frac{1}{N_1}\sum_{\substack{\V_{1}}}\bigg(\langle+|^{\otimes n_0}\h^t_{1,\ldots,n_0}\V_{1|1,\ldots,n_0}^{-1}\p_{1|1,\ldots,n_0}\V_{1|1,\ldots,n_0}
\h^t_{1,\ldots,n_0}\p_{0|1,\ldots,n_0}\big(\rho_{\textup{in}}^{\textup{trap}}\big)|+\rangle^{\otimes n_0}\bigg)\times\cr
&\frac{1}{2}\sum_{\substack{t=0,1}}\frac{1}{N_1}\sum_{\substack{\V_{1}}}\bigg(\langle+|^{}\h^t_{n_0+1}\V_{1|n_0+1}^{-1}\p_{1|n_0+1}\V_{1|n_0+1}
\h^t_{n_0+1}\p_{0|n_0+1}\big(\rho_{\textup{in}}^{\textup{trap}}\big)|+\rangle^{}\bigg)\cr
&\leq\frac{3}{4}\times\frac{3}{4}\leq\frac{3}{4}\textup{ ,}
\end{align}
Tensoring two qubits connected by $cZ$ yields the same bound, and this concludes the proof by induction for $j_2=1$. If $j_2\in\{1,\ldots,m-1\}$ the proof is similar, but the Pauli operator $\V_{j_2}^{-1}\p_{j_2}\V_{j_2}$ must be commuted with $\C\X_1,\ldots,\C\X_{j_2-1}$ (where we remember that each $\C\X_j$ depends on $\V_j$). At fixed $\V_1,\ldots,\V_{j_2-1}$ it can be shown (with the same arguments as used for $j_2=1$, i.e. considering first the cases $n=1$ and $n=2$ and then generalizing to $n>2$) that summations over $\V_{j_2}$ and $t$ yield an upper-bound of 3/4. The upper-bound on $\textup{prob}\big(\overline{s}=\overline{0}\textrm{ $|$ }\{{P}_{j}\}\textup{, $n_0+1$ qubits}\big)$ follows by summing over all possible values of $\V_1,\ldots,\V_{j_2-1}$.

Errors in class 2 yield $\overline{s}=\overline{0}$ with probability at most 3/4. This can be proven with the same arguments as for errors in class 1.

Errors in class 3 yield $\overline{s}=\overline{0}$ with probability at most 3/4. To see this, consider first the case where the errors affect neighboring bands ($j_2=j_1+1$), which yields
\begin{equation}
\textup{prob}\big(\overline{s}=\overline{0}\textrm{ $|$ }\{\p_{j}\}\big)=\frac{1}{2}\sum_{t=0,1}\frac{1}{N_{j_1}N_{j_1+1}}\sum_{\V_{j_1},\V_{j_1+1}}\langle+|^{\otimes n}\h^t\V_{j_1+1}^{-1} \p_{j_1+1}\V_{j_1+1}\V_{j_1}^{-1} \p_{j_1}\V_{j_1}\h^t|+\rangle^{\otimes n}\leq\frac{3}{4}
\end{equation}
As for errors in class 1, this can be shown by proving that the bound holds for the single-qubit case and the two-qubit one, and then using induction. If the errors affect two non-neighboring bands ($j_2\neq j_1+1$), we have
\begin{equation}
\textup{prob}\big(\overline{s}=\overline{0}\textrm{ $|$ }\{\p_{j}\}\big)=\frac{1}{2}\sum_{t=0,1}\frac{1}{N_{j_1}N_{j_2}}\sum_{\V_{j_1},\V_{j_2}}\langle+|^{\otimes n}\h^t\V_{j_2}^{-1} \p_{j_2}\V_{j_2}\bigg(\circ_{j=j_1+1}^{j_2-1}\C\X_{j}\bigg)\V_{j_1}^{-1} \p_{j_1}\V_{j_1}\h^t|+\rangle^{\otimes n}\leq\frac{3}{4}
\end{equation}
To prove the inequality, one can commute $\V_{j_1}^{-1} \p_{j_1}\V_{j_1}$ (which is a Pauli operator) with the entangling operation and use the same arguments as for $j_2=j_1+1$.

Finally, errors in class 4 yield 
\begin{align}
\textup{prob}\big(\overline{s}=\overline{0}\textrm{ $|$ }\{{P}_{j}\}\big)=&\frac{1}{2}\sum_{\substack{t=0,1}}\frac{1}{N_1\times\cdots\times N_{m-1}}\sum_{\substack{\V_{1},\cdots,\V_{m-1}}}\langle+|^{\otimes n}\p_m\h^t\circ\bigg(\circ_{j=1}^{m-1}
\C\X_{j}\bigg)\circ\h^t\p_{0}\big(\rho_{\textup{in}}^{\textup{trap}}
\big)|+\rangle^{\otimes n}\leq\frac{1}{2}
\end{align}
To see this, consider first the case $t=0$, and consider commuting $\p_{0}\in\{\I,\Z\}^{\otimes n}/\I$ with all the gates in the circuit. Since all these gates are $cX$ gates with random orientation, the identities
\begin{align}
\label{eq:zerrors}
cX\textrm{}(Z_1\otimes I_2)&=(Z_1\otimes I_2) cX\cr
cX(I_1\otimes Z_2)&=(Z_1\otimes Z_2) cX \cr
cX(Z_1\otimes Z_2)&=(I_1\otimes Z_2) cX
\end{align}
ensure that every time time that a Pauli-$Z$ error is commuted with a $cX$, this error becomes another error, chosen at random from two possible ones|Figure \ref{fig:single-band}. This can be used to prove that if $t=0$, errors in class 4 are detected with probability larger than 1/2. The same considerations apply to the case $t=1$, where the identities 
\begin{align}
\label{eq:xerrors}
cX\textrm{}(X_1\otimes I_2)&=(X_1\otimes X_2) cX \cr
cX(I_1\otimes X_2)&=(I_1\otimes X_2) cX\cr
cX(X_1\otimes X_2)&=(X_1\otimes I_2) cX
\end{align}
must be used instead of identities \ref{eq:zerrors}. 
\end{proof}

\begin{figure}[H]
	\centering
	\begin{tikzpicture}[scale=0.8, every node/.style={scale=0.75}]		
	
	\foreach \x in {1,2} 
	\node at (-0.6,3.0-1*\x) {\large $\ket{+}_{\x}$};

	\foreach \x in {1,...,2} 
	\draw (0.0-0.1,0.0+1*\x) -- (3.5,0.0+1*\x);
	
	\draw [fill=red!30] (0.2,1.7) rectangle (0.8,2.3);
	\node at (0.5,2.0) {$Z$};
	
	\draw [fill=ufogreen!30] (1.0,0.7) rectangle (2.2,2.3);
	
	\draw (1.3,0.8) -- (1.3,2.0);
	\draw (1.1,1) -- (1.5,1);
	\draw [fill=black] (1.3,2) circle [radius=0.07cm];
	\draw [] (1.3,1) circle [radius=0.2cm];
	
	\node at (1.6,1.5) {\large or};	
	
	\draw (1.9,1) -- (1.9,2.2);
	\draw (1.7,2) -- (2.1,2);
	\draw [fill=black] (1.9,1) circle [radius=0.07cm];
	\draw [] (1.9,2) circle [radius=0.2cm];

	\draw [fill=red!30] (1.5+0.9,1.7) rectangle (2.1+0.9,2.3);
	\node at (1.8+0.9,2.0) {$Z$};
	
	\foreach \x in {1,...,2} 
	\draw (2.7+0.9,0.0+1*\x-0.05) -- (3.2+0.9,0.0+1*\x-0.05);
	\foreach \x in {1,...,2} 
	\draw (2.7+0.9,0.0+1*\x+0.05) -- (3.2+0.9,0.0+1*\x+0.05);
	\foreach \x in {1,...,2} 
	\draw [fill=white] (2.7+0.9,0.0+1*\x) circle [radius=0.33cm];
	\foreach \x in {1,...,2} 
	\node at (2.7+0.9,0.0+1*\x) {$X$};
	
	\node at (3.5+0.9,1.0) {\large 0};
	\node at (3.5+0.9,2.0) {\large 0};
	\node at (4.0+0.9,1.5) {\large or};
	\node at (4.5+0.9,1.0) {\large 0};
	\node at (4.5+0.9,2.0) {\large 1};
	\end{tikzpicture}
	\caption{\footnotesize In this example, $\p_{0}=\p_{1}=Z_1$ (red gates) and $t=0$. Due to identities \ref{eq:zerrors}, commuting $\p_{1}$ with the entangling gate (green box, $cX$ gate with random orientation) make the two errors cancel out if qubit 1 is the control qubit. On the contrary, if qubit 1 is the target qubit, the errors do not cancel and cause a bit-flip of output $s_1$. Thus, for $t=0$ these errors are detected with probability 1/2. The same can be proven for $t=1$ using identities \ref{eq:xerrors}, as well as for all other errors $\p_{0},\p_{1}$.}
\label{fig:single-band}
\end{figure}
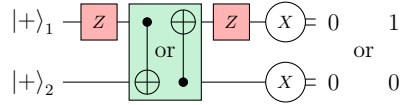

\section{Proof of Theorem 1}~\label{app:th1}
We start by using Lemma \ref{lem:pauliz} (Appendix \ref{lem:pauliz}), which allows to reduce noise of the type {N1} to classically correlated single-qubit Pauli errors and to write the joint state of target and trap circuits after all circuits have been implemented (for a fixed choice of single-qubit gates $\U_j^{(k)}$) as
\begin{align}
{\rho}_{\textup{out}}\big({\U}^{(1)}_{1},&\ldots,{\U}^{(v+1)}_{m}\big)
=\sum_{\substack{\overline{s}^{(1)},\ldots,\overline{s}^{(v+1)}}}\textup{ }\sum_{\substack{{\p}^{(1)}_{1},\ldots,{\p}^{(v+1)}_{m}}}\frac{
\textup{prob}\big({\p}^{(1)}_{1},\ldots,{\p}^{(v+1)}_{m}\big)}{2^{n(v+1)}}\textup{ }\times\cr
&\bigotimes_{k=1}^{v+1}\textup{ }\langle+|^{\otimes n}\bigg[\Z^{\overline{s}^{(k)}}{\p}^{(k)}_{m}{\U}^{(k)}_{m}\circ_{j=1}^{m-1}\bigg({\C\Z}_{j} {\p}^{(k)}_{j}{\U}^{(k)}_{j}\bigg)\circ{\p}^{(k)}_{0}\big(\rho_{\textup{in}}\big)\bigg]|+\rangle^{\otimes n}\bigg(\otimes_{i} Z_i^{s^{(k)}_i}|+\rangle_{i}\langle+|Z_i^{s^{(k)}_i}\bigg)
\end{align}
where $\rho_{\textup{in}}=\otimes_{i}|+\rangle_{i}\langle+|$, $\overline{s}^{(k)}=({s}_1^{(k)},\ldots,s^{(k)}_n)$ is the output of the $k$-th circuit, $\Z^{\overline{s}^{(k)}}(\rho)=\otimes_{i} Z_i^{s^{(k)}_i}\rho Z_i^{s^{(k)}_i}$, and $\textup{prob}\big({\p}^{(1)}_{1},\ldots,{\p}^{(v+1)}_{m}\big)$ is the joint probability of a collection of Pauli errors $\{{\p}^{(1)}_{1},\ldots,{\p}^{(v+1)}_{m}\}$ affecting the system, with ${\p}^{(k)}_{1},\ldots,{\p}^{(k)}_{m-1}\in\{\I,\X,\Y,\Z\}^{\otimes n}$ and ${\p}^{(k)}_{0},{\p}^{(k)}_{m}\in\{I,\Z\}^{\otimes n}$ for all $k$. Summing over all possible choices of single-qubit gates we thus obtain the state of target and traps at the end of the protocol:
\begin{align}
{\rho}_{\textup{out}}=&\textup{ }\sum_{{\U}^{(1)}_{1},\ldots,{\U}^{(v+1)}_{m}}
{\textup{prob}\big({\U}^{(1)}_{1},\ldots,{\U}^{(v+1)}_{m}\big)}\textup{ }\textup{ }{\rho}_{\textup{out}}{\big({\U}^{(1)}_{1},\ldots,{\U}^{(v+1)}_{m}\big)}\cr
=&\textup{ }\sum_{{\U}^{(1)}_{1},\ldots,{\U}^{(v+1)}_{m}}
{\textup{prob}\big({\U}^{(1)}_{1},\ldots,{\U}^{(v+1)}_{m}\big)}\sum_{\substack{\overline{s}^{(1)},\ldots,\overline{s}^{(v+1)}}}\textup{ }\sum_{\substack{{\p}^{(1)}_{1},\ldots,{\p}^{(v+1)}_{m}}}\frac{
\textup{prob}\big({\p}^{(1)}_{1},\ldots,{\p}^{(v+1)}_{m}\big)}{2^{n(v+1)}}\textup{ }\times\cr
&\bigotimes_{k=1}^{v+1}\textup{ }\langle+|^{\otimes n}\bigg[\Z^{\overline{s}^{(k)}}{\p}^{(k)}_{m}{\U}^{(k)}_{m}\circ_{j=1}^{m-1}\bigg({\C\Z}_{j} {\p}^{(k)}_{j}{\U}^{(k)}_{j}\bigg)\circ{\p}^{(k)}_{0}\big(\rho_{\textup{in}}\big)\bigg]|+\rangle^{\otimes n}\bigg(\otimes_{i} Z_i^{s^{(k)}_i}|+\rangle_{i}\langle+|Z_i^{s^{(k)}_i}\bigg)\textup{ ,}
\end{align}
where $\textup{prob}\big({\U}^{(1)}_{1},\ldots,{\U}^{(v+1)}_{m}\big)$ is the probability of single-qubit gates ${\U}^{(1)}_{1},\ldots,{\U}^{(v+1)}_{m}$ being chosen. Crucially, notice that the probability associated to each collection of Pauli errors does not depend on the specific choice of single-qubit gates $\U_j^{(k)}$. We can thus rewrite the above state as
\begin{align}
{\rho}_{\textup{out}}=&\sum_{\substack{\overline{s}^{(1)},\ldots,\overline{s}^{(v+1)}}}\textup{ }\sum_{\substack{{\p}^{(1)}_{1},\ldots,{\p}^{(v+1)}_{m}}}\frac{
\textup{prob}\big({\p}^{(1)}_{1},\ldots,{\p}^{(v+1)}_{m}\big)}{2^{n(v+1)}}\textup{ }\sum_{{\U}^{(1)}_{1},\ldots,{\U}^{(v+1)}_{m}}
{\textup{prob}\big({\U}^{(1)}_{1},\ldots,{\U}^{(v+1)}_{m}\big)}\textup{ }\times\cr
&\bigotimes_{k=1}^{v+1}\textup{ }\langle+|^{\otimes n}\bigg[\Z^{\overline{s}^{(k)}}{\p}^{(k)}_{m}{\U}^{(k)}_{m}\circ_{j=1}^{m-1}\bigg({\C\Z}_{j} {\p}^{(k)}_{j}{\U}^{(k)}_{j}\bigg)\circ{\p}^{(k)}_{0}\big(\rho_{\textup{in}}\big)\bigg]|+\rangle^{\otimes n}\bigg(\otimes_{i} Z_i^{s^{(k)}_i}|+\rangle_{i}\langle+|Z_i^{s^{(k)}_i}\bigg)
\end{align}
Consider now the state 
\begin{align}
{\rho}_{\textup{out}}\big({\p}^{(1)}_{1},&\ldots,{\p}^{(v+1)}_{m}\big)
=\sum_{\substack{\overline{s}^{(1)},\ldots,\overline{s}^{(v+1)}}}\textup{ }\frac{1}{2^{n(v+1)}}\textup{ }\sum_{{\U}^{(1)}_{1},\ldots,{\U}^{(v+1)}_{m}}
{\textup{prob}\big({\U}^{(1)}_{1},\ldots,{\U}^{(v+1)}_{m}\big)}\times\cr
&\bigotimes_{k=1}^{v+1}\textup{ }\langle+|^{\otimes n}\bigg[\Z^{\overline{s}^{(k)}}{\p}^{(k)}_{m}{\U}^{(k)}_{m}\circ_{j=1}^{m-1}\bigg({\C\Z}_{j} {\p}^{(k)}_{j}{\U}^{(k)}_{j}\bigg)\circ{\p}^{(k)}_{0}\big(\rho_{\textup{in}}\big)\bigg]|+\rangle^{\otimes n}\bigg(\otimes_{i} Z_i^{s^{(k)}_i}|+\rangle_{i}\langle+|Z_i^{s^{(k)}_i}\bigg)\textup{ }\textup{ }
\end{align}
corresponding to a fixed collection of Pauli errors $\{{\p}^{(1)}_{1},\ldots,{\p}^{(v+1)}_{m}\}$ and assume that the Pauli errors only affect $\widehat{v}\in\{1,\ldots,v+1\}$ circuits. In this case, as the Pauli errors do not depend on the single-qubit gates, they do not depend on the random number $v_0$ (which labels the position of the target circuit) nor on the random parameters in the trap circuits. Therefore, summing over all choices of single-qubit gates, the probability that the target is among the $\widehat{v}$ circuits affected by noise is $\widehat{v}/(v+1)$. 

Next using Lemma \ref{lem:ct} (which states that summed over all possible choices of single-qubit gates, trap circuits output $\overline{s}=\overline{0}$ with probability at most 3/4) we have that if $\widehat{v}-1$ trap circuits are affected by noise, they all output $\overline{s}=\overline{0}$ with probability at most $(3/4)^{\widehat{v}-1}$. We thus obtain
\begin{align}
{\rho}_{\textup{out}}{\big({\p}^{(1)}_{1},\ldots,{\p}^{(v+1)}_{m}\big)}
=&\widetilde{b}\textup{  }\widetilde{\tau}_{\textup{out}}^{\prime\textup{ tar}}\otimes\sigma_{\overline{s}=\overline{0}}^{\textup{trap}}+(1-\widetilde{b})\bigg(
\widetilde{l}\textup{ }\sigma_{\textup{out}}^{\textup{tar}}\otimes\sigma_{\overline{s}=\overline{0}}^{\textup{trap}}+(1-\widetilde{l})\textup{ }\widetilde{\tau}_{\textup{out}}^{\textup{ tar}}\otimes\sigma_{\overline{s}\neq\overline{0}}^{\textup{trap}}\bigg)
\end{align}
where $\widetilde{\tau}_{\textup{out}}^{\prime\textup{ tar}}$ ($\sigma_{\textup{out}}^{\textup{tar}}$) is the state of the target circuit if the target computation is (is not) among the $\widehat{v}$ traps affected by noise,  $\widetilde{\tau}_{\textup{out}}^{\textup{tar}}$ is an arbitrary state for the target, $\sigma_{\overline{s}=\overline{0}}^{\textup{trap}}$ is the state of the traps when they all output $\overline{s}=\overline{0}$,  $\sigma_{\overline{s}\neq\overline{0}}^{\textup{trap}}$ is an arbitrary state for the traps orthogonal to  $\sigma_{\overline{s}=\overline{0}}^{\textup{trap}}$, $0\leq \widetilde{l}\leq1$ and
\begin{equation}
\widetilde{b}\leq\frac{\widehat{v}}{v+1}\bigg(\frac{3}{4}\bigg)^{\widehat{v}-1}
\end{equation}
For all $v\geq3$ the RHS of the above upper-bound on $\widetilde{b}$ is maximized by $\widehat{v}=3$, which yields
\begin{align}
\widetilde{b}\leq\max_{\widehat{v}}\frac{\widehat{v}}{v+1}\bigg(\frac{3}{4}\bigg)^{\widehat{v}-1}=\frac{3}{v+1}\bigg(\frac{3}{4}\bigg)^2=\frac{\kappa}{v+1}\textrm{ , with }\kappa\approx\frac{1.7}{v+1}
\end{align}
As this is true for all collections of Pauli errors $\{{\p}^{(1)}_{1},\ldots,{\p}^{(v+1)}_{m}\}$, we can sum over all possible collections (weighted by their respective probabilities) and obtain
\begin{align}
\rho_{\textup{out}}=&\sum_{\substack{{\p}^{(1)}_{1},\ldots,{\p}^{(v+1)}_{m}}}{
\textup{prob}\big({\p}^{(1)}_{1},\ldots,{\p}^{(v+1)}_{m}\big)}\textup{ }\textup{ }{\rho}_{\textup{out}}{\big({\p}^{(1)}_{1},\ldots,{\p}^{(v+1)}_{m}\big)}\cr
=&\textup{ }b\textup{ }\tau_{\textup{out}}^{\prime\textup{ tar}}\otimes\sigma_{\overline{s}=\overline{0}}^{\textup{trap}}+(1-b)\bigg(l\textup{ }\sigma_{\textup{out}}^{\textup{tar}}\otimes\sigma_{\overline{s}=\overline{0}}^{\textup{trap}}+(1-l)\textup{ }\tau_{\textup{out}}^{\textup{tar}}\otimes\sigma_{\overline{s}\neq\overline{0}}^{\textup{trap}}\bigg)
\end{align}
with $b\leq\kappa/(v+1)$ and $0\leq l\leq1$. The Theorem is finally proven by replacing the state of the traps with that of the flag bit, namely by replacing $\sigma_{\overline{s}=\overline{0}}^{\textup{trap}}$ with $|\textrm{acc}\rangle$
and  $\sigma_{\overline{s}\neq\overline{0}}^{\textup{trap}}$ with $|\textrm{rej}\rangle$.

\section{Mesothetic Verification Protocol}
\noindent\textbf{Background: }
We will now define the notions of protocol and verifiability in the cryptographic setting. We start by defining quantum states as states belonging to the Hilbert space $\mathbb{H}_{ABC}=\mathbb{H}_A\otimes\mathbb{H}_B\otimes\mathbb{H}_C$, where $\mathbb{H}_A$ and $\mathbb{H}_B$ are Alice's and Bob's registers and $\mathbb{H}_C$ is a common register used to exchange qubits. The definition of protocol is as follows:
\begin{mydef}~\label{def:delegated2}
\textsc{\textbf{[Protocol]}} 
We define a $q$-step protocol on input $\rho_{\textup{in}}\in\pazocal{L}(\mathbb{H}_{ABC})$ as a series of CPTP maps $\{\E_{ABC}^{(p)}\}_{p=1}^q=\{\E_{AC}^{(p)}\circ\E_{BC}^{(p)}\}_{p=1}^q$ yielding the state $\rho_{\textup{out}}=\circ_{p=1}^q\E_{ABC}^{(p)}(\rho_{\mathrm{in}})$.
\end{mydef}
In our protocol Alice verifies the correct implementation of the target circuit through the trap circuits. If Bob is honest, all these circuits output a default output and the flag bit is set to $|\textup{acc}\rangle\in\{\ket{0},\ket{1}\}$, otherwise it is likely that they return an incorrect output and the flag bit is set to $|\textup{rej}\rangle=|\textup{acc}\oplus1\rangle$. We thus define verifiability as follows:
\begin{mydef}~\label{def:ver2}
{\textsc{\textbf{[Verifiability]}}
Consider a protocol $\{\E_{ABC}^{(p)}\}_{p=1}^q$ on input $\rho_{\textup{in}}$. The protocol is ``$\delta_{\textup{cr}}$-complete'' if
\begin{align}
D\bigg(\textup{Tr}_{BC}\bigg[\circ_{p=1}^q\E_{ABC}^{(p)}({\rho}_{\mathrm{in}})\bigg]\textup{, }{\sigma}_{\mathrm{out}}^{\mathrm{tar}}\otimes|\textup{acc}\rangle\langle\textup{acc}|\bigg)\leq1-\delta_{\textup{cr}}\textrm{ ,}
\end{align}
\noindent  where ${\sigma}_{\mathrm{out}}^{\mathrm{tar}}$ is the correct state of the target at the end of the protocol, $|\textup{acc}\rangle$ is the state of the state of the traps indicating acceptance and $0\leq\delta_{\textup{cr}}\leq1$ is called ``completeness''.

The protocol is ``$\varepsilon_{\textup{cr}}$-sound'' if for any set of CPTP maps $\{\widetilde{\E}_{BC}^{(p)}\}_{p=1}^q$ acting on Bob's register and on the channel, the state in Alice's register at the end of the protocol is}
\be
~\label{eq:trd2}
D\bigg(\textup{Tr}_{BC}\bigg[\circ_{p=1}^q
\big(\E_{ABC}^{(p)}\circ\widetilde{\E}_{BC}^{(p)}\big)({\rho}_{\mathrm{in}})\bigg]\textup{ , } l\textrm{ }{\sigma}_{\mathrm{out}}^{\mathrm{tar}}\otimes|\textup{acc}\rangle\langle\textup{acc}|   + (1-l)\tau_{\mathrm{out}}^{\mathrm{tar}}\otimes|\textup{rej}\rangle\langle\textup{rej}|\bigg)\leq\varepsilon_{\textup{cr}}
\ee
\noindent{where $0\leq\varepsilon_{\textup{cr}}\leq1$ is called ``soundness'', $0\leq l \leq1$, $\tau_{\mathrm{out}}^{\mathrm{tar}}$ is an arbitrary state and $|\textup{rej}\rangle$ is orthogonal to $|\textup{acc}\rangle$.}
\end{mydef}
We thus model Bob's deviations as arbitrary CPTP maps acting on the system being processed and on Bob's private register.

We can now explain how to translate our protocol in the cryptographic setting.\\

\noindent\textbf{Mesothetic protocol: }\label{app:crypto}
We assume that Alice, the verifier, and Bob, the prover, want to implement interactively a circuit of the type of Figure \ref{fig:circuit}, with $n$ qubits, depth $m$ and single-qubit gates $\{U_{i,j}\}$. We start by defining the resources required by Alice:\\

\noindent$\bullet$ Alice's resources: a device that can receive $n$ qubits from Bob, apply a single-qubit gate to each of them and send the qubits back to Bob. Alice's device must be able to execute all single-qubit gates contained in the target circuit, together with $H,S,S^{\dagger}$ (used in trap circuits) and $X,Y,Z$ (used for the QOTP). We assume that the overall number of these gates is $G$.\\

\noindent For the protocol to always accept, Bob must possess a quantum computer that can prepare qubits in the $\ket{+}$ state, execute $cZ$ gates, measure in the Pauli-$X$ basis and store qubits. 

We can now explain our protocol, which is formally presented in Box 4 in the Methods. Our protocol takes as input a description of the target circuit and the number $v$ of trap circuits, both inputs also being known to Bob. Steps \textbf{1, 2} and \textbf{3} are classical and solely involve Alice's register, which at the end of step \textbf{3} (for a fixed choice of single-qubit gates ${\U}^{\prime\prime(1)}_{1},\ldots,{\U}^{\prime\prime(v+1)}_{m}$) is in the state
\begin{align}
\rho_{A,\textup{in}}\big({\U}^{\prime\prime(1)}_{1},\ldots,{\U}^{\prime\prime(v+1)}_{m}\big)=&\bigg[\otimes_{k=1}^{v+1}\otimes_{j=1}^m\otimes_{i=1}^n|U_{i,j}^{\prime\prime(k)}\rangle\langle U_{i,j}^{\prime\prime(k)}|\bigg]\otimes\bigg[\otimes_{k=1}^{v+1}\otimes_{i=1}^n|\alpha_{i,m}^{(k)}\rangle\langle \alpha_{i,m}^{(k)}|\bigg]
\end{align}
Here, $|U_{i,j}^{\prime\prime(k)}\rangle$ is a classical description of the gate $U_{i,j}^{\prime\prime(k)}$. Since Alice's device can execute $G$ different gates, this classical description requires at most log${}_2(G)$ bits. Moreover, $|\alpha_{i,m}^{(k)}\rangle$ is a classical description of the bit $\alpha_{i,m}^{(k)}$.

In step \textbf{4} Alice and Bob interact as follows. For all circuits $k=1,\ldots,v+1$, in step \textbf{4.1} Bob creates $n$ qubits in the state $\ket{+}$. Then, for each band $j=1,\ldots,m$, Bob sends all the qubits to Alice, Alice acts on each qubit $i$ with $U_{i,j}^{(k)}$, Alice sends all the qubits back to Bob and Bob implements $\widehat{cZ}_{j}$. Finally, in step \textbf{4.3} Bob measures all the qubits and returns the outputs to Alice. The random but $\alpha_{i,m}^{(k)}$ coming from the QOPT in the last round of single-qubit gates bit-flips the outputs, therefore Bob transmits to Alice the bits $s_{i}^{(k)}\oplus\alpha_{i,m}^{(k)}$. Alice post-processes those bits and obtains $s_{i}^{(k)}$. At the end of step \textbf{4}, the states in Alice and Bob's registers (if Bob is honest) are of the form
\begin{align}
\rho_{A,\textup{out}}&\big({\U}^{\prime\prime(1)}_{1},\ldots,{\U}^{\prime\prime(v+1)}_{m}\big)=\bigg[\otimes_{k=1}^{v+1}\otimes_{j=1}^m\otimes_{i=1}^n|U_{i,j}^{\prime\prime(k)}\rangle\langle U_{i,j}^{\prime\prime(k)}|\bigg]\otimes\bigg[\otimes_{k=1}^{v+1}\otimes_{i=1}^n|s_{i}^{(k)}\rangle\langle s_{i}^{(k)}|\bigg]\cr
\rho_{B,\textup{out}}&=\otimes_{k=1}^{v+1}\otimes_{i=1}^n Z_i^{s^{(k)}_i\oplus\alpha_{i,m}^{(k)}}|+\rangle^{(k)}_{i}\langle+|Z_i^{s^{(k)}_i\oplus\alpha_{i,m}^{(k)}}
\end{align}
The protocol ends with Alice checking whether the traps yield the correct output and setting the flag bit to $|\textup{acc}\rangle$ or $|\textup{rej}\rangle$.

We will now prove completeness and soundness for our mesothetic protocol (we refer the reader to the beginning of the Appendix for notation). We begin with the assumption that Alice's device is noiseless. Soundness requires the following Lemmas:\\

\noindent\textbf{Lemma {D1}. }\textit{Suppose that Alice's device is noiseless. For a fixed choice of single-qubit gates ${\U}^{(1)}_{1},\ldots,{\U}^{(v+1)}_{m}$ by Alice, tracing out Bob's private register and summing over all the random bits $\alpha^{(k)}_{i,j},\alpha^{\prime(k)}_{i,j},\gamma^{(k)}_i$ except $\alpha^{(k)}_{i,m}$ (cfr. Routine 1), the state in Bob's register at the end of the protocol is of the form}
\begin{align}~\label{eq:rhooutbob}
{\rho}_{B,\textup{out}}\big({\U}^{(1)}_{1},&\ldots,{\U}^{(v+1)}_{m}\big)
=\sum_{\substack{\overline{s}^{(1)},\ldots,\overline{s}^{(v+1)}\\{\p}^{(1)}_{1},\ldots,{\p}^{(v+1)}_{m}}}\frac{
\textup{prob}\big({\p}^{(1)}_{1},\ldots,{\p}^{(v+1)}_{m}\big)}{2^{n(v+1)}}\textup{ }\cr
&\bigotimes_{k=1}^{v+1}\textup{ }\langle+|^{\otimes n}\bigg[\Z^{\overline{s}^{(k)}}{\p}^{(k)}_{m}{\U}^{(k)}_{m}\circ_{j=1}^{m-1}\bigg({\C\Z}_{j} {\p}^{(k)}_{j}{\U}^{(k)}_{j}\bigg)\circ{\p}^{(k)}_{0}\big(\rho_{\textup{in}}\big)\bigg]|+\rangle^{\otimes n}\times\bigg(\otimes_{i} Z_i^{s^{(k)}_i\oplus\alpha^{(k)}_{i,m}}|+\rangle_{i}\langle+|Z_i^{s^{(k)}_i\oplus\alpha^{(k)}_{i,m}}\bigg)\textrm{ ,}\cr
\end{align}
\textit{where $\rho_{\textup{in}}=\otimes_{i}|+\rangle_{i}\langle+|$, $\overline{s}^{(k)}=({s}_1^{(k)},\ldots,s^{(k)}_n)$ is the output of the $k$-th circuit, $\Z^{\overline{s}^{(k)}}(\rho)=\otimes_{i} Z_i^{s^{(k)}_i}\rho Z_i^{s^{(k)}_i}$ and $\textup{prob}\big({\p}^{(1)}_{1},\ldots,{\p}^{(v+1)}_{m}\big)$ is the joint probability of a collection of Pauli errors ${\p}^{(1)}_{1},\ldots,{\p}^{(v+1)}_{m}$ affecting the system, with $\p^{(k)}_{1},\ldots,\p^{(k)}_{m-1}\in\{\I,\X,\Y,\Z\}^{\otimes n}$ and $\p^{(k)}_{0},\p^{(k)}_{m}\in\{\I,\Z\}^{\otimes n}$ for all $k$.}\\

\begin{proof}\textit{(Sketch). }
We start proving the Lemma for a protocol with a single circuit ($v=1$), next we generalize to $v>1$. For a fixed choice of gates $U_{i,j}^{\prime\prime}$ by Alice, the joint state of Alice's register $A$ and Bob's register $B$ before the final measurements is of the form 
\begin{align}
\rho_{AB}\big({\U}^{\prime\prime(1)}_{1},\ldots,{\U}^{\prime\prime(v+1)}_{m}\big)=\textup{ }&\bigg[\otimes_{j=1}^m\otimes_{i=1}^n|U_{i,j}^{\prime\prime}\rangle\langle U_{i,j}^{\prime\prime}|\bigg]\otimes\bigg[\otimes_{k=1}^{v+1}\otimes_{i=1}^n|\alpha_{i,m}\rangle\langle \alpha_{i,m}|\bigg]\otimes&\textup{(Alice's register)}\cr
&\bigg[\widetilde{\E}_{m}\U^{\prime\prime}_{m}\circ
\bigg(\circ_{j=1}^{m-1}\widetilde{\E}_{j}\U^{\prime\prime}_{j}\bigg)\circ\E_{0}
\big(\rho^{}_{\textup{in}}\big)\bigg]\textup{, }&\textup{(Bob's register)}
\end{align}
where $\widetilde{\E}_{j}$ are Bob's deviations. Specifically, $\widetilde{\E}_{0}$ is Bob's deviation when he prepares $\rho_{\textup{in}}$ (step \textbf{4.1}), for all $j=1,\ldots,m-1$, $\widetilde{\E}_{j}$ is Bob's deviations when in step \textbf{4.2} he should execute $\widehat{cZ}_{j}$, $\widetilde{\E}_{m}$ is Bob's deviation before he measures the qubits (step \textbf{4.3}). Without loss of generality, suppose now that Bob holds another register, which we call ``ancillary register'' and denote with $B_{\textup{anc}}$. Tensoring $\rho_{\textup{in}}$ with the state $\sigma_{\textup{anc}}$ in the register $B_{\textup{anc}}$, for $j\in\{1,\ldots,m-1\}$ we can rewrite Bob's deviations $\widetilde{\E}_{j}$ as unitaries ${\mathbf{F}}_{j}\widehat{cZ}_{j}$, where ${\mathbf{F}}_{j}$ is a unitary matrix that acts on Bob's register $B$ and on $B_{\textup{anc}}$|for convenience we indicate unitary matrices acting both on Bob's register $B$ and on $B_{\textup{anc}}$ in bold font. Similarly, we can replace $\widetilde{\E}_{0}$ with the unitary $\mathbf{R}$ and $\widetilde{\E}_{m}$ with the unitary $\mathbf{M}$. We thus obtain
\begin{align}
\rho_{ABB_{\textup{anc}}}\big({\U}^{\prime\prime(1)}_{1},&\ldots,{\U}^{\prime\prime(v+1)}_{m}\big)=\textup{ }\bigg[\otimes_{j=1}^m\otimes_{i=1}^n|U_{i,j}^{\prime\prime}\rangle\langle U_{i,j}^{\prime\prime}|\bigg]\otimes\bigg[\otimes_{k=1}^{v+1}\otimes_{i=1}^n|\alpha_{i,m}\rangle\langle \alpha_{i,m}|\bigg]\otimes\cr
&\bigg[\mathbf{M}\textup{ }{U}^{\prime\prime}_{m}
{\mathbf{F}}_{m-1}\widehat{cZ}_{m-1}^{}{U}^{\prime\prime}_{m-1}\ldots{\mathbf{F}}_{1}\widehat{cZ}_{1}^{}{U}^{\prime\prime}_{1}{\mathbf{R}}
\bigg(\rho_{\textup{in}}\otimes\sigma_{\textup{anc}}\bigg)
\mathbf{R}^\dagger{U}^{\prime\prime\dagger}_{1}\widehat{cZ}_{1}{\mathbf{F}}_{1}^\dagger\ldots{U}^{\prime\prime\dagger}_{m-1}\widehat{cZ}_{m-1}{\mathbf{F}}_{m-1}^\dagger{U}^{\prime\prime\dagger}_{m}\mathbf{M}^\dagger
\bigg]\textup{ }\textup{ }&
\end{align}
Tracing out Alice's register and $B_{\textup{anc}}$ yields 
\begin{align}
&\textup{Tr}_{AB_{\textup{anc}}}\big[\rho_{ABB_{\textup{anc}}}\big({\U}^{\prime\prime(1)}_{1},\ldots,{\U}^{\prime\prime(v+1)}_{m}\big)\big]\cr
&=\textup{Tr}_{B_{\textup{anc}}}\bigg[\mathbf{M}\textup{ }{U}^{\prime\prime}_{m}
{\mathbf{F}}_{m-1}\widehat{cZ}_{m-1}^{}{U}^{\prime\prime}_{m-1}\ldots{\mathbf{F}}_{1}\widehat{cZ}_{1}^{}{U}^{\prime\prime}_{1}{\mathbf{R}}
\bigg(\rho_{\textup{in}}\otimes\sigma_{\textup{anc}}\bigg)
\mathbf{R}^\dagger{U}^{\prime\prime\dagger}_{1}\widehat{cZ}_{1}{\mathbf{F}}_{1}^\dagger\ldots{U}^{\prime\prime\dagger}_{m-1}\widehat{cZ}_{m-1}{\mathbf{F}}_{m-1}^\dagger{U}^{\prime\prime\dagger}_{m}\mathbf{M}^\dagger\bigg]
\end{align}
This state is equal to the state $\rho$ in Equation \ref{eq:rhomultiplebands}, provided that we replace $\sigma_{\textup{anc}}$ with $\rho_E$ and $\textup{Tr}_{B_{\textup{anc}}}$ with Tr$_{E}[\textup{ }\cdot\textup{ }]$, hence the Lemma can be proven repeating the same calculations. The reader can verify that the same applies to the case $v>1$.
\end{proof}

\noindent\textbf{Lemma {D2}. }
\textit{For any collection of Pauli errors affecting a trap circuit, summing over all possible single-qubit gates in the trap circuit (i.e. over all possible sets $\{U_{i,j}\}$ output by Routine 2), the probability that the trap circuit outputs $\overline{s}=\overline{0}$ is at most 3/4.}\\

As Alice choses all the gates in a trap circuits with Routine 2, the proof is the same as that of Lemma \ref{lem:ct}. Using these two Lemmas we now compute the classical state in Alice's register after all the circuits have been implemented:\\

\noindent\textbf{Lemma {D3}. }
\textit{Suppose that Alice's device is noiseless. Then, for any number $v\geq3$ of trap computations, the state in Alice's register at the end of the protocol is of the form (see Definition \ref{def:ver2} for notation)}
\begin{align}~\label{eq:end2crypto}
{\rho}_{\textup{out}}\textup{= }&b\textup{ }{\tau}_{\textup{out}}^{\prime\textup{ tar}}\otimes|\textup{acc}\rangle\langle\textup{acc}|+(1-b)\bigg(l\textup{ }\textrm{ }{\sigma}_{\textup{out}}^{\textup{ tar}}\otimes|\textup{acc}\rangle\langle\textup{acc}|\textup{+ }(1-l){\tau}_{\textup{out}}^{\textup{ tar}}\otimes|\textup{rej}\rangle\langle\textup{rej}|\bigg)\textrm{ ,}
\end{align}
\textit{where ${\tau}_{\textup{out}}^{\prime\textup{ tar}}$ is an incorrect state for the target, $0\leq b\leq \kappa/(v+1)$ and $\kappa = 3(3/4)^2 \approx 1.7$.}\\
\begin{proof}
After Bob has implements all the circuits and communicates all the outputs to Alice, Alice holds the same classical state as Bob (Equation \ref{eq:rhooutbob}). After Alice post-processes the outputs this state becomes (cfr. Lemma {D1} for notation)
\begin{align}
{\rho}_{A,\textup{out}}\big({\U}^{(1)}_{1},&\ldots,{\U}^{(v+1)}_{m}\big)
=\sum_{\substack{\overline{s}^{(1)},\ldots,\overline{s}^{(v+1)}\\{\p}^{(1)}_{1},\ldots,{\p}^{(v+1)}_{m}}}\frac{
\textup{prob}\big({\p}^{(1)}_{1},\ldots,{\p}^{(v+1)}_{m}\big)}{2^{n(v+1)}}\textup{ }\cr
&\bigotimes_{k=1}^{v+1}\textup{ }\langle+|^{\otimes n}\bigg[\Z^{\overline{s}^{(k)}}{\p}^{(k)}_{m}{\U}^{(k)}_{m}\circ_{j=1}^{m-1}\bigg({\C\Z}_{j} {\p}^{(k)}_{j}{\U}^{(k)}_{j}\bigg)\circ{\p}^{(k)}_{0}\big(\rho_{\textup{in}}\big)\bigg]|+\rangle^{\otimes n}\times\bigg(\otimes_{i} Z_i^{s^{(k)}_i}|+\rangle_{i}\langle+|Z_i^{s^{(k)}_i}\bigg)\textrm{ ,}\cr
\end{align}
Crucially, notice that the probability associated to each deviation ${\p}^{(1)}_{1},\ldots,{\p}^{(v+1)}_{m}$ does not depend on the choice of single-qubit gates made by Alice. Thus, summing over all choices of single-qubit gates, this state becomes
\begin{align}
{\rho}_{A,\textup{out}}
=&\sum_{\substack{\overline{s}^{(1)},\ldots,\overline{s}^{(v+1)}\\{\p}^{(1)}_{1},\ldots,{\p}^{(v+1)}_{m}}}\frac{
\textup{prob}\big({\p}^{(1)}_{1},\ldots,{\p}^{(v+1)}_{m}\big)}{2^{n(v+1)}}\textup{ }\sum_{{\U}^{(1)}_{1},\ldots,{\U}^{(v+1)}_{m}}\textup{prob}\big({\U}^{(1)}_{1},\ldots,{\U}^{(v+1)}_{m}\big)\cr
&\bigotimes_{k=1}^{v+1}\textup{ }\langle+|^{\otimes n}\bigg[\Z^{\overline{s}^{(k)}}{\p}^{(k)}_{m}{\U}^{(k)}_{m}\circ_{j=1}^{m-1}\bigg({\C\Z}_{j} {\p}^{(k)}_{j}{\U}^{(k)}_{j}\bigg)\circ{\p}^{(k)}_{0}\big(\rho_{\textup{in}}\big)\bigg]|+\rangle^{\otimes n}\times\bigg(\otimes_{i} Z_i^{s^{(k)}_i}|+\rangle_{i}\langle+|Z_i^{s^{(k)}_i}\bigg)\textrm{ ,}\cr
\end{align}
The Lemma can now be proven following the same steps as in the proof of Theorem \ref{th:verif1}.
\end{proof}
Using these three Lemmas we can now calculate completeness and soundness for the mesothetic protocol and obtain Theorem {D1}:\\

\noindent{\textbf{Theorem D1. }}\textit{Suppose that Alice's device is noiseless. Then, for any number $v\geq3$ of trap circuits, the mesothetic protocol is verifiable with}
\begin{equation}
\delta_{\textup{cr}}=1\textup{ }\textup{ and }\textup{ }\varepsilon_{\textup{cr}}=\frac{\kappa}{v+1}\textup{ ,}
\label{eq:eps}
\end{equation}
\textit{where $\kappa = 3(3/4)^2 \approx 1.7.$}
\begin{proof}
Completeness can be proven trivially showing that for all circuits, the random Pauli gates used for the QOTP cancel between them, and that the effects of the QOTP in the last band are recovered by the classical post-processing made by Alice. Therefore, all the circuits implement the desired computation, and a correctly implemented protocol yields the correct output for the target and the state of the traps indicating acceptance.

To prove soundness we need to compute the trace distance in Equation \ref{eq:trd2}. Using Lemma {D3} this trace distance becomes
\be
D\bigg(b\textup{ }{\tau}_{\textup{out}}^{\prime\textup{ tar}}\otimes|\textup{acc}\rangle\langle\textup{acc}|+(1-b)\big[l\textup{ }\textrm{ }{\sigma}_{\textup{out}}^{\textup{ tar}}\otimes|\textup{acc}\rangle\langle\textup{acc}|\textup{+ }(1-l){\tau}_{\textup{out}}^{\textup{ tar}}\otimes|\textup{rej}\rangle\langle\textup{rej}|\big]\textup{ , } l\textrm{ }{\sigma}_{\mathrm{out}}^{\mathrm{tar}}\otimes|\textup{acc}\rangle\langle\textup{acc}|   + (1-l)\tau_{\mathrm{out}}^{\mathrm{tar}}\otimes|\textup{rej}\rangle\langle\textup{rej}|\bigg)\leq b
\ee
and since $b\leq\kappa/(v+1)$ by Lemma {D3}, we obtain soundness $\varepsilon_{\textup{cr}}=\kappa/(v+1)$.
\end{proof}

We can also compute completeness and soundness in the case where Alice's device suffers bounded and potentially gate-dependent noise:\\

\noindent{\textbf{Theorem D2. }}\textit{Suppose that Alice's device is affected by bounded noise, i.e. that for all circuits $k\in\{1,\ldots,v+1\}$ and bands $j\in\{1,\ldots,m\}$ she applies $\E_{j}^{(k)}{U}_{j}^{(k)}$, where $\E_{j}^{(k)}=(1-r_{j}^{(k)})\I+r_{j}^{(k)}\E_{j}^{\prime \textup{ }(k)}$ for some arbitrary CPTP-map $\E_{j}^{\prime\textup{ }(k)}$ and number $0\leq r_{j}^{(k)}<1$. Then, for any number $v\geq3$ of trap computations, the mesothetic protocol is verifiable with}
\begin{align}
\delta_{\textup{cr}}=1\textup{ }\textup{ and }\textup{ }\varepsilon_{\textup{cr}}=g\frac{\kappa}{v+1}+\textup{ }1\textup{ }-g\textrm{ ,}
\end{align}
\textit{where $\kappa = 3(3/4)^2 \approx 1.7$, $g=\prod_{j,k}1-r_{\textup{max, }j}^{(k)}$ and $r_{\textup{max, }j}^{(k)}$ is the maximum error rate of the round of gates in band $j$ of circuit $k$, where this maximum is taken over all choices gates for this round.}\\
\begin{proof}
The proof of completeness is the same as for Theorem {D1}. To compute soundness, we denote as ${\rho}_{\textup{out}}^\star$ the state in Alice's register at the end of a protocol run when Alice's device is noisy, and as ${\rho}_{\textup{out}}$ the state in Alice's register at the end of a protocol run when Alice's device is noiseless (Lemma {D3}). Indicating as $r_{\textup{max, }j}^{(k)}$ the maximum error rate for gates in band $j$, we rewrite this noisy map as $\E_{j}^{(k)}=(1-r_{\textup{max, }j}^{(k)})I+r_{\textup{max, }j}^{(k)}\Q_{j}^{(k)}$ for some other CPTP map $\Q_{j}^{(k)}$. This allows to rewrite the classical state in Alice's register at the end of the protocol as $g{\rho}_{\textup{out}}+(1-g){\rho}_{\textup{out}}^\star$ and to obtain the upper-bound the trace distance.
\end{proof}

We conclude this Appendix by showing how our protocol can be made blind. Blindness is a property exhibited by many cryptographic protocols \cite{GKK17} defined as follows:
\begin{mydef}\textup{\textbf{[Blindness.] }}A protocol $\{\E^{(p)}_{ABC}\}=\{\E^{(p)}_{AB}\otimes\E^{(p)}_{BC}\}$ on input $\rho_{\textup{in}}$ is blind if for any set of maps $\{\widetilde{\E}^{(p)}_{BC}\}$ acting on Bob's register and on the channel, the state in Bob's register at the end of the protocol leaks at most a constant function of the input.
\end{mydef}
\noindent Typically blind cryptographic protocols leak an upper-bound on the size of the target circuit.

As introduced above, our mesothetic protocol is not blind. Indeed, Bob has access to non-trivial information about the target circuit, such as the position of $cZ$ gates. This is not a problem for our scopes, as verifiability only relies on Bob's ignorance of the number $v_0$, which is kept secret by Alice with the QOTP. Nevertheless, blindness may be required for maintaining the privacy of the users in future scenarios of delegated computations, thus it is important to understand whether our protocol can be turned into a blind protocol. 

To obtain blindness, Alice (endowed with an ideal device) must recompile the target circuit into a normal form, for instance by recompiling the target circuit into a circuit of the type illustrated in Figure \ref{fig:BwSType} (which inspired to Brickwork States \cite{BFK09}). Then, instead of giving as input to the mesothetic protocol a classical description of the target circuit, she only needs to input the desired size of the circuit in normal form ($n$ qubits, depth $m$). 

Implementing target and traps on a circuit in normal form makes Alice's instructions to Bob independent from the specific target circuit that Alice wishes to verify. This allows to prove blindness:\\

\noindent{\textbf{Theorem D3.  [Blindness with Circuit in Normal Form].}} \textit{Suppose that Alice can apply noiseless single-qubit gates. If Alice rewrites the target circuit in normal form, the mesothetic protocol (with input the desired size of the circuit in normal form) only leaks an upper-bound on the size of the target circuit.}
\begin{proof}
To prove blindness we notice that during the protocol run Bob cannot retrieve any information about the the target circuit. Indeed, Bob's tasks are the same for all circuits (prepare the same input state, execute the same entangling gates and measure in the same basis) and these tasks do not depend on the target circuit, since this target is implemented on a circuit in normal form. Moreover, the only type of information that Bob receives from Alice during the implementation of the circuits are the qubits at step \textbf{4.2.1}, but the QOTP prevents Bob from retrieving useful information: at any $k=1,\ldots,v+1$ and $j=1,\ldots,m$, if Bob sends to Alice a state $\rho_{j}^{(k)}$, Alice returns to him the state
\begin{equation}
\label{eq:blind}
{U}_{j}^{\prime\prime(k)}\rho_{j}^{(k)}{U}_{j}^{\dagger\prime\prime(k)}=
\otimes_{i=1}^nZ^{\alpha^{(k)}_{i,j}}X^{\alpha^{\prime(k)}_{i,j}}\bigg[{U}_{j}^{(k)}{P}_{j-1}
\rho_{j}^{(k)}{P}_{j-1}{U}_{j}^{\dagger(k)}\bigg]X^{\alpha^{\prime(k)}_{i,j}}Z^{\alpha^{(k)}_{i,j}}\textup{ ,}
\end{equation}
where ${P}_{j-1}$ is the Pauli operator that undoes previous QOTP. Summing over all possible $\alpha_{i,j}^{(k)}$ and $\alpha_{i,j}^{\prime(k)}$ yields the completely mixed state.
\end{proof}

\end{document}